\def\showauthornotes{1}
\def\showkeys{0}
\def\showdraftbox{0}
\newtheorem{theorem}{Theorem}[section]
\newtheorem*{theorem*}{Theorem}
\newtheorem*{claim*}{Claim}
\newtheorem{subclaim}{Claim}[theorem]
\newtheorem{proposition}[theorem]{Proposition}
\newtheorem*{proposition*}{Proposition}
\newtheorem{lemma}[theorem]{Lemma}
\newtheorem*{lemma*}{Lemma}
\newtheorem{corollary}[theorem]{Corollary}
\newtheorem{conjecture}[theorem]{Conjecture}
\newtheorem*{conjecture*}{Conjecture}
\newtheorem{observation}[theorem]{Observation}
\newtheorem{fact}[theorem]{Fact}
\newtheorem*{fact*}{Fact}
\newtheorem{hypothesis}[theorem]{Hypothesis}
\newtheorem*{hypothesis*}{Hypothesis}
\theoremstyle{definition}
\newtheorem{definition}[theorem]{Definition}
\newtheorem{construction}[theorem]{Construction}
\newtheorem{example}[theorem]{Example}
\newtheorem{algorithm}[theorem]{Algorithm}
\newtheorem{SDP}[theorem]{SDP}
\newtheorem{problem}[theorem]{Problem}
\newtheorem{protocol}[theorem]{Protocol}
\newtheorem{remark}[theorem]{Remark}
\newtheorem{assumption}[theorem]{Assumption}
\renewcommand{\mathbb}{\varmathbb}
\newcommand{\savehyperref}[2]{\texorpdfstring{\hyperref[#1]{#2}}{#2}}
\newcommand{\Sref}[1]{\hyperref[#1]{\S\ref*{#1}}}
\let\nfrac=\nicefrac
\newcommand{\half}{\nicefrac12}
\newcommand{\Authornote}[2]{{\sffamily\small\color{red}{[#1: #2]}}}
\newcommand{\Authorcomment}[2]{{\sffamily\small\color{gray}{[#1: #2]}}}
\newcommand{\Authorstartcomment}[1]{\sffamily\small\color{gray}[#1: }
\newcommand{\Authorfnote}[2]{\footnote{\color{red}{#1: #2}}}
\newcommand{\Authorfixme}[1]{\Authornote{#1}{\textbf{??}}}
\newcommand{\Authormarginmark}[1]{\marginpar{\textcolor{red}{\fbox{\Large #1:!}}}}
\newcommand{\Authornote}[2]{}
\newcommand{\Authorcomment}[2]{}
\newcommand{\Authorstartcomment}[1]{}
\newcommand{\Authorfnote}[2]{}
\newcommand{\Authorfixme}[1]{}
\newcommand{\Authormarginmark}[1]{}
\newenvironment{mybox}
{\center \noindent\begin{boxedminipage}{1.0\linewidth}}
{\end{boxedminipage}
\noindent
}
\newcommand{\Brac}[1]{\left[#1\right]}
\newcommand{\abs}[1]{\left\lvert#1\right\rvert}
\newcommand{\Abs}[1]{\left\lvert#1\right\rvert}
\newcommand{\set}[1]{\left\{#1\right\}}
\newcommand{\norm}[1]{\left\lVert#1\right\rVert}
\newcommand{\iprod}[1]{\langle#1\rangle}
\newcommand{\Esymb}{\mathbb{E}}
\newcommand{\Psymb}{\mathbb{P}}
\DeclareMathOperator*{\E}{\Esymb}
\DeclareMathOperator*{\ProbOp}{\Psymb}
\renewcommand{\Pr}{\ProbOp}
\newcommand{\Ex}[2][]{\E_{{#1}}\Brac{#2}}
\newcommand{\tensor}{\otimes}
\newcommand{\textparen}[1]{\text{(#1)}}
\newcommand{\because}[1]{\textparen{because #1}}
\renewcommand{\because}[1]{\textparen{because #1}}
\newcommand{\defeq}{\stackrel{\mathrm{def}}=}     
\newcommand{\from}{\colon}
\newcommand{\mper}{\,.}
\newcommand{\mcom}{\,,}
\newcommand\bdot\bullet
\newcommand{\etal}{et al.\xspace}
\newcommand{\sse}{\subseteq}
\newcommand{\e}{\epsilon}
\newcommand{\eps}{\epsilon}
\DeclareMathOperator{\Inf}{{\sf Inf}}
\newcommand{\sdp}{{\sf SDP }}
\DeclareMathOperator{\val}{val}
\DeclareMathOperator{\OPT}{{\sf OPT }}
\DeclareMathOperator{\vol}{{\sf vol }}
\DeclareMathOperator{\poly}{{\sf poly}}
\DeclareMathOperator{\supp}{\sf supp}
\newcommand{\Z}{\mathbb Z}
\newcommand{\N}{\mathbb N}
\newcommand{\R}{\mathbb R}
\newcommand{\problemmacro}[1]{\texorpdfstring{\textsc{#1}}{#1}\xspace}
\newcommand{\uniquegames}{\problemmacro{Unique Games}}
\newcommand{\sparsestcut}{\problemmacro{Sparsest Cut}}
\newcommand{\smallsetexpansion}{\problemmacro{Small-Set Expansion}}
\newcommand{\cA}{\mathcal A}
\newcommand{\cG}{\mathcal G} 
\newcommand{\cH}{\mathcal H}
\newcommand{\cM}{\mathcal M}
\newcommand{\cN}{\mathcal N}
\newcommand{\cP}{\mathcal P}
\newcommand{\cV}{\mathcal V}
\renewcommand{\leq}{\leqslant}
\renewcommand{\le}{\leqslant}
\renewcommand{\geq}{\geqslant}
\renewcommand{\ge}{\geqslant}
\let\epsilon=\varepsilon
\numberwithin{equation}{section}
\let\origparagraph\paragraph
\renewcommand{\paragraph}[1]{\origparagraph{#1.}}
\newcommand{\DSstore}[2]{%
  \global\expandafter \def \csname DSMEMORY #1 \endcsname{#2}%
}
\newcommand{\DSload}[1]{%
  \csname DSMEMORY #1 \endcsname%
}
\newcommand{\DSnewlabel}[1]{%
  \newcommand\DScurrentlabel{#1}%
  \DSoldlabel{#1}%
}
\newcommand{\DSdummylabel}[1]{}
\newcommand{\torestate}[1]{%
  \let\DSoldlabel\label%
  \let\label\DSnewlabel%
  #1%
  \DSstore{\DScurrentlabel}{#1}%
  \let\label\DSoldlabel%
}
\newcommand{\restatetheorem}[1]{%
  \let\DSoldlabel\label
  \let\label\DSdummylabel
  \begin{theorem*}[Restatement of \prettyref{#1}]
    \DSload{#1}
  \end{theorem*}
  \let\label\DSoldlabel
}
\newcommand{\restatelemma}[1]{%
  \let\DSoldlabel\label
  \let\label\DSdummylabel
  \begin{lemma*}[Restatement of \prettyref{#1}]
    \DSload{#1}
  \end{lemma*}
  \let\label\DSoldlabel
}
\newcommand{\restateprop}[1]{%
  \let\DSoldlabel\label
  \let\label\DSdummylabel
  \begin{proposition*}[Restatement of \prettyref{#1}]
    \DSload{#1}
  \end{proposition*}
  \let\label\DSoldlabel
}
\newcommand{\restatefact}[1]{%
  \let\DSoldlabel\label
  \let\label\DSdummylabel
  \begin{fact*}[Restatement of \prettyref{#1}]
    \DSload{#1}
  \end{fact*}
  \let\label\DSoldlabel
}
\newcommand{\restate}[1]{%
  \let\DSoldlabel\label
  \let\label\DSdummylabel
  \DSload{#1}
  \let\label\DSoldlabel
}
\newcommand{\ceil}[1]{\lceil #1 \rceil}
\renewcommand{\Pr}[1]{\ProbOp\Brac{#1}}
\newcommand{\pr}{\ProbOp}
\newcommand{\is}[1]{\ensuremath{\Isymb_S\Brac{#1}}}
\newcommand{\Isymb}{\mathbb{I}}
\newcommand{\yes}{\textsc{Yes}\xspace}
\newcommand{\no}{\textsc{No}\xspace}
\newcommand{\sdpval}{{\sf SDPval}}
\newcommand{\inprod}[1]{\langle #1\rangle}
\newcommand{\phiv}{\phi^{\sf V}}
\newcommand{\phivb}{\phi^{\sf V,bal}}
\newcommand{\phivs}{\Phi^{\sf V}}
\newcommand{\phivbs}{\Phi^{\sf V,bal}}
\newcommand{\fv}{\phiv}
\newcommand{\linf}{\lambda_{\infty}}
\newcommand{\phiav}[1]{\ensuremath{\Phi({#1})}}
\newcommand{\bigO}[1]{\mathcal{O}\left(#1\right) }
\newcommand{\var}{\mathsf{Var}}
\newcommand{\sound}{{\sf soundness}}
\newcommand{\Gauss}{\cG_{\Lambda, \Sigma}}
\newcommand{\Gaussp}{\cP_{\Gauss}}
\newcommand{\mset}[2]{{#1}^{\{#2\}}}
\renewcommand{\sse}{{\sf SSE}}
\newcommand{\csp}{{\sf CSP }}
\newcommand{\cvs}[2]{{\sf #1 c}-vs-{\sf s/#2}~}
\newcommand{\cvss}{{\sf c}-vs-{\sf s}~}
\newcommand{\dtv}{d_{\sf TV}}
\newcommand{\SSEH}{Small-Set Expansion Hypothesis\xspace}
\newcommand{\BalancedSeparator}{\problemmacro{Balanced Separator}}
\newcommand{\UGCexpand}{\problemmacro{Expanding Unique Games}}
\newcommand{\analyticvsep}{\problemmacro{Balanced Analytic Vertex Expansion}}
\title{The Complexity of Approximating Vertex Expansion}
\author{Anand Louis \thanks{Supported by National Science Foundation awards AF-0915903 and AF-0910584.}
	 \\Georgia Tech \\ anandl@gatech.edu \and
	 Prasad Raghavendra \thanks{ Supported by NSF Career Award and Alfred.
	P. Sloan
Fellowship } \\ UC Berkeley \\ prasad@cs.berkeley.edu \and
	Santosh Vempala \footnotemark[1] \\ Georgia Tech \\ vempala@cc.gatech.edu }
\date{}
\begin{document}

\begin{titlepage}
\maketitle
\begin{abstract}
We study the complexity of approximating the vertex expansion of graphs $G = (V,E)$, 
defined as
\[ \phiv \defeq \min_{S \subset V} n \cdot \frac{\Abs{N(S)} }{\Abs{S} \Abs{V\setminus S} }. \]

We give a simple polynomial-time algorithm for finding a subset with vertex expansion $\bigO{\sqrt{\phiv \log d}}$ where $d$ is the 
maximum degree of the graph. 
Our main result is an asymptotically matching lower bound:  under the Small Set Expansion (\sse) hypothesis, it is hard to 
find a subset with expansion less than  $C\sqrt{\phiv \log d}$ for an
absolute constant $C$.  In particular, this implies for all constant $\epsilon
> 0$, it is $\sse$-hard to distinguish whether the vertex expansion $< \epsilon$
or at least an absolute constant.  The analogous threshold for edge expansion is $\sqrt{\phi}$ with no dependence on the degree (Here $\phi$ denotes the optimal edge expansion). 
Thus our results suggest that vertex expansion is harder to approximate than edge expansion.  In particular, while Cheeger's algorithm can certify constant edge
expansion, it is \sse-hard to certify constant vertex expansion in
graphs.

Our proof is via a reduction from the {\it Unique Games} instance obtained from the \sse~ hypothesis 
to the vertex expansion problem. It involves the definition of a smoother intermediate problem we call 
\analyticvsep which is representative of both the vertex expansion and the conductance of the graph. Both reductions 
(from the UGC instance to this problem and from this problem to vertex expansion) use novel proof ideas. 
\end{abstract}

\end{titlepage}

\section{Introduction}

Vertex expansion is an important parameter associated with a
graph, one that has played a major role in both algorithms and complexity.
Given a  graph $G = (V,E)$, the vertex expansion of a set $S \subseteq
V$ of vertices is defined as 
\[ \phiv(S) \defeq \Abs{V} \cdot \frac{\Abs{N(S)}}{ \Abs{S} \Abs{V \setminus S}} \]
Here $N(S)$ denotes the outer boundary of the set $S$, i.e. 
$N(S) = \set{ i \in V \backslash S | \exists u \in S \textrm{ such that } \set{u,v} \in E }$.
The vertex expansion of the graph is given by $\phiv \defeq \min_{S \subset V} \phiv(S)$.
The problem of computing $\phiv$ is a major primitive for many graph algorithms
specifically for those that are based on the divide and conquer paradigm \cite{lr99}.
It is NP-hard to compute the vertex expansion $\phiv$ of a graph
exactly.  In this work, we study the approximability of vertex
expansion $\phiv$ of a graph.

A closely related notion to vertex expansion is that of edge
expansion. The edge expansion of a set $S$ is defined as
\[ \phi(S) \defeq \frac{ \mu(E(S,\bar{S}))}{\mu(S)} \]
and the edge expansion of the graph is $\phi = \min_{S \subset V} \phi(S)$.
Graph expansion problems have received much attention over the past decades, with applications 
to many algorithmic problems, to the construction of pseudorandom
objects and more recenlty due to their connection to the unique games
conjecture.

The problem of approximating edge or vertex expansion can be studied
at various regimes of parameters of interest.  Perhaps the simplest
possible version of the problem is to distinguish whether a given
graph is an expander.  Fix an absolute constant $\delta_0$.  A graph
is a $\delta_0$-vertex (edge) expander if its
vertex (edge) expansion is at least $\delta_0$.  The problem of recognizing a vertex expander can be stated as
follows:
\begin{problem}
Given a graph $G$, distinguish between the following two cases
\begin{description}
	\item (Non-Expander) the vertex expansion is $< \epsilon$
	\item (Expander)  the vertex expansion is $ > \delta_0$ for
		some absolute constant $\delta_0$.
\end{description}
Similarly, one can define the problem of recognizing an edge expander
graph.  
\end{problem}

Notice that if there is some sufficiently small absolute constant
$\epsilon$ (depending on $\delta_0$), for which the above problem is
easy,  then we could argue that it is easy to ``recognize'' a vertex
expander.  For the edge case, the Cheeger's inequality
yields an algorithm to recognize an edge expander.  In fact, it is possible
to distinguish a $\delta_0$ edge expander graph, from a graph whose
edge expansion is $< \delta_0^2 /2$, by just computing the second eigenvalue
of the graph Laplacian.  

It is natural to ask if there is an efficient algorithm with an
analogous guarantee for vertex expansion.  More precisely, is there
some sufficiently small $\epsilon$ (an arbitrary function of $\delta_0$), so that
one can efficiently distinguish between a graph with vertex expansion
$> \delta_0$ from one with vertex expansion $< \epsilon$.  In this
work, we show a hardness result suggesting that
there is no efficient algorithm to recognize vertex expanders.  
More precisely, our main result is a hardness for
the problem of approximating vertex expansion in graphs of bounded
degree $d$.  The hardness result shows that the approximability of
vertex expansion degrades with the degree, and therefore the problem
of recognizing expanders is hard for sufficiently large degree.
Furthermore, we exhibit an approximation
algorithm for vertex expansion whose guarantee matches the hardness
result up to constant factors. 

\paragraph{Related Work}

The first approximation for conductance was obtained by discrete
analogues of the Cheeger inequality shown
by Alon-Milman \cite{am85} and Alon \cite{a86}.  
Specifically, Cheeger's inequality relates the conductance $\phi$ to the second eigenvalue of the adjacency matrix 
of the graph -- an efficiently computable quantity.  This yields an approximation algorithm for $\phi$, one that is used heavily in practice for graph partitioning.
However, the approximation for $\phi$ obtained via Cheeger's
inequality is poor in terms
of a approximation ratio, especially when the value of $\phi$ is small.
An $\bigO{\log n}$ approximation algorithm for $\phi$ was obtained by
Leighton and Rao \cite{lr99}.  Later work by Linial \etal
\cite{llr95} and Aumann and Rabani \cite{ar98} established a strong connection between the
\sparsestcut problem and the theory of metric spaces, in turn
spurring a large and rich body of literature.  The current best
algorithm for the problem is an $O(\sqrt{\log n})$ approximation  for
due to Arora \etal \cite{arv04} using
semidefinite programming techniques. 

Amb{\"u}hl, Mastrolilli and Svensson \cite{ams07} showed that $\phiv$ and $\phi$ have no PTAS assuming that {\sf SAT} does not have  
sub-exponential time algorithms.
The current best approximation factor for $\phiv$ is $\bigO{\sqrt{\log n}}$ 
obtained using a convex relaxation \cite{fhl08}. 
Beyond this, the situation is much less clear for the approximability of vertex expansion. Applying Cheeger's method leads to a bound
of $\bigO{\sqrt{d \OPT }}$ \cite{a86} where $d$ is the maximum degree of the input graph. 

\paragraph{Small Set Expansion Hypothesis}

A more refined measure of the edge expansion of a graph is its expansion
profile.  Specifically, for a graph $G$ the expansion profile is given
by the curve 
$$ \phi(\delta) = \min_{\mu(S) \le \delta} \phi(S) \qquad \qquad
\forall \delta \in [0,\nfrac{1}{2}] \mper$$
The problem of approximating the expansion profile has received much
less attention, and is seemingly far less tractable.  In summary, the current state-of-the-art algorithms for approximating the
expansion profile of a graph are still far from satisfactory.
Specifically, the following hypothesis is consistent with the known
algorithms for approximating expansion profile.
\begin{hypothesis*}[\SSEH, \cite{rs10}]
  For every constant $\eta > 0$, there exists sufficiently small $\delta>0$
  such that given a graph $G$ it is NP-hard to distinguish the cases,
  \begin{description}\item[\yes:] 
    there exists a vertex set $S$ with volume $\mu(S)=\delta$ and expansion 
    $\phi(S)\le \eta$,
  \item[\no:]  all vertex sets $S$  with volume  $\mu(S)=\delta$ have expansion
     $\phi(S)\ge 1-\eta$.
  \end{description}
\end{hypothesis*}
Apart from being a natural optimization problem, the \smallsetexpansion problem is closely tied to the Unique
Games Conjecture.  Recent work by Raghavendra-Steurer
\cite{rs10} established
reduction from the \smallsetexpansion problem to the well known Unique
Games problem, thereby showing that \SSEH implies the Unique Games Conjecture.  This result suggests
that the problem of approximating expansion of small sets lies at the
combinatorial heart of the Unique Games problem. 

In a breakthrough work, Arora, Barak, and Steurer \cite{abs10} showed that the
problem $\smallsetexpansion(\eta,\delta)$ admits a subexponential algorithm, namely an
algorithm that runs in time $\exp(n^\eta/\delta)$.
However, such an algorithm does not refute the hypothesis that the problem
$\smallsetexpansion(\eta,\delta)$ might be hard for every constant~$\eta>0$ and
sufficiently small $\delta>0$.

The Unique Games Conjecture is not known to imply hardness results for problems closely tied to graph expansion such as \BalancedSeparator.  The reason being that the hard
instances of these problems are required to have certain global
structure namely expansion.  Gadget reductions from a unique games
instance preserve the global properties of the unique games instance
such as lack of expansion.  Therefore, showing hardness for
graph expansion problems often required a stronger version
of the \UGCexpand, where the instance is guaranteed to have good expansion.
To this end, several such variants of the conjecture for expanding graphs have
been defined in literature, some of which turned out to be false
\cite{akkstv08}.   The \SSEH could possibly serve as a natural unified
assumption that yields all the implications of expanding unique games and, in addition, also 
hardness results for other fundamental problems such as \BalancedSeparator.
In fact, Raghavendra, Steurer and Tulsiani \cite{rst12} show that the
the \sse~ hypothesis implies that the Cheeger's algorithm yields the
best approximation for the balanced separator problem.

\paragraph{Formal Statement of Results}

Our first result is a simple polynomial-time algorithm to obtain a subset of vertices $S$ whose
vertex expansion is at most $\bigO{\sqrt{\phiv \log d}}$.
Here $d$ is the largest vertex degree of $G$. The algorithm is based on a Poincair\'e-type 
graph parameter called $\lambda_\infty$ defined by Bobkov, Houdr\'e and Tetali 
\cite{bht00}, which approximates $\phiv$. While $\lambda_\infty$ also appears to be hard to compute, 
its natural \sdp relaxation gives a bound that is within $\bigO{\log d}$, as observed by Steurer and Tetali \cite{st12}, 
which inspires our first Theorem.

\begin{theorem}
\label{thm:algo}

There exists a polynomial time algorithm which given a graph $G = (V,E)$ having vertex degrees
at most $d$, outputs a set $S \subset V$, such that $ \phiv(S) = \bigO{\sqrt{ \phiv_G \log d}}$.

\end{theorem}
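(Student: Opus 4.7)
The plan rests on the $\ell_\infty$-type Poincar\'e parameter $\lambda_\infty$ of Bobkov-Houdr\'e-Tetali, which (up to minor normalizations) can be written as
\[ \lambda_\infty(G) \;=\; \min_{f\colon V \to \R,\; f \not\equiv \text{const}}\; \frac{\E_i\bigl[\max_{j \sim i}(f(i)-f(j))_+^2\bigr]}{\Var(f)}\mper \]
This parameter satisfies $\lambda_\infty \leq O(\phiv_G)$ and admits a Cheeger-type rounding: for any $f$, some threshold set of $f$ has vertex expansion at most $O\bigl(\sqrt{\lambda_\infty\text{-ratio of }f}\bigr)$. Because computing $\lambda_\infty$ itself is apparently hard, I would first pass to the natural SDP relaxation: replace each $f(i)$ by a vector $v_i$ and $(f(i)-f(j))_+^2$ by $\|v_i - v_j\|^2$, introduce slack variables $\eta_i$ with constraints $\eta_i \geq \|v_i - v_j\|^2$ for every $j \sim i$, and minimize $\E_i[\eta_i]$ subject to a variance normalization on $\{v_i\}$. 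Substituting the $\lambda_\infty$-optimal scalar $f$ as a 1-dimensional feasible solution yields $\sdpval(G) \leq \lambda_\infty(G) \leq O(\phiv_G)$.

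Next I would round the SDP to a scalar function. Let $\{v_i, \eta_i\}$ be an SDP-optimal solution, sample a standard Gaussian vector $g$, and set $f_i \defeq \iprod{g, v_i}$. The key estimate is that for every vertex $i$ of degree at most $d$,
\[ \E_g\Bigl[\max_{j \sim i}(f_i - f_j)_+^2\Bigr] \;\leq\; O(\log d)\cdot \max_{j \sim i}\|v_i - v_j\|^2 \;\leq\; O(\log d)\cdot \eta_i\mcom \]
which follows from the sub-Gaussian maximal inequality for up to $d$ jointly Gaussian variables of bounded variance. Meanwhile $\E_g[\Var(f)]$ is exactly the SDP's variance normalization. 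Combining Markov's inequality on the numerator with Gaussian concentration of the quadratic form $\Var(f)$ in the denominator, a noticeable fraction of $g$'s produce a scalar $f$ whose $\lambda_\infty$-Rayleigh quotient is at most $O(\log d)\cdot \sdpval(G)$; one such $g$ is found by a few samples or by standard derandomization. Applying the BHT Cheeger rounding to this $f$ extracts a threshold set $S$ with
\[ \phiv(S) \;\leq\; O\bigl(\sqrt{\log d \cdot \sdpval(G)}\bigr) \;\leq\; O\bigl(\sqrt{\phiv_G \cdot \log d}\bigr)\mper \]

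The main obstacle is the $O(\log d)$ overhead in the Gaussian rounding step. A naive union bound over the neighbours of a vertex would lose a factor of $d$ rather than $\log d$; extracting only a logarithmic overhead uses the sharper fact that the coordinate-wise maximum of $d$ jointly Gaussian variables of variance at most $\sigma^2$ is sub-Gaussian with second moment $O(\sigma^2 \log d)$. This is precisely the Steurer-Tetali observation cited in the excerpt, and it is what makes the $\ell_\infty$ SDP behave almost like the $\ell_2$ SDP that underlies classical Cheeger rounding. Once this rounding lemma is in place, the pipeline SDP $\to$ scalar $f$ $\to$ threshold set $S$ goes through by combining the BHT Cheeger inequality with a standard SDP-rounding argument, and every step runs in polynomial time.
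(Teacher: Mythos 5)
Your proposal is correct and follows essentially the same route as the paper: relax the Bobkov--Houdr\'e--Tetali parameter $\lambda_\infty$ by an SDP with per-vertex slacks, round by projecting onto a random Gaussian vector using the maximal inequality for $d$ Gaussians to lose only $O(\log d)$ (with Markov on the numerator and a second-moment/concentration argument on the denominator), and then extract a threshold (sweep) set via the BHT Cheeger-type inequality, giving $\phiv(S) = \bigO{\sqrt{\phiv \log d}}$. The only differences are cosmetic: the paper routes through symmetric vertex expansion for bookkeeping and reproves the BHT functional-to-set step (its Lemmas on level sets and on $|y_i^2-y_j^2|$) rather than citing it, and it uses Paley--Zygmund for the denominator where you invoke Gaussian concentration.
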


It is natural to ask if one can prove better inapproximability results for 
vertex expansion than those that follow from the inapproximability results for edge expansion. Indeed, the best one could hope 
for would be a lower bound matching the upper bound in the above theorem. 
Our main result is a reduction from \sse~ to the problem of distinguishing between 
the case when vertex expansion of the graph is at most $\e$ and 
the case when the vertex expansion is at least $\Omega(\sqrt{\e\log d})$. This immediately implies that
it is \sse-hard to find a subset of vertex expansion less than $C\sqrt{\phiv \log d}$ for some constant $C$.
To the best of our knowledge, 
our work is the first evidence that vertex expansion might be harder to approximate than edge expansion. 
More formally, we state our main theorem below.

\begin{theorem}
\label{thm:main}
For every $\eta > 0$, there exists an absolute constant $C$ such that $\forall \e>0 $ it is \sse-hard to distinguish 
between the following two cases for a given graph $G = (V,E)$ with maximum degree $d \geq 100/\e$.
\begin{description}
\item[\yes] : There exists a set $S \subset V$ of size $\Abs{S} \leq \Abs{V}/2$ such that 
	\[ \phiv(S) \leq \e  \]
\item[\no] : For all sets $S \subset V$, 
	\[  \phiv(S) \geq  \min \set{10^{-10}, C \sqrt{\e \log d}} - \eta \]
\end{description}
\end{theorem}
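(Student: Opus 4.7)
The strategy is to introduce the intermediate \analyticvsep{} problem, which the authors suggest serves as a smoother, analytic proxy for vertex expansion, and then compose two reductions: first from the Unique Games instance provided by the \SSEH{} to \analyticvsep{}, and then from \analyticvsep{} to the actual vertex expansion problem. The spiritual guide for the intermediate problem is the Bobkov-Houdr\'e-Tetali parameter $\lambda_\infty$, which is known to sandwich $\phiv$ up to a factor of $\sqrt{\log d}$; hence the gap we are shooting for in the two-step reduction is $\epsilon$ vs.\ $\sqrt{\epsilon}$ at the analytic level, and this turns into $\epsilon$ vs.\ $\sqrt{\epsilon \log d}$ after discretization, matching the bound in \prettyref{thm:algo}.

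For the first reduction, I would start from an \sse-hard instance of \uniquegames{} produced by the Raghavendra-Steurer reduction, which additionally guarantees that the constraint graph is itself a small-set expander. On top of this I would run a long-code / Gaussian dictatorship gadget: each UG vertex is replaced by a block of points indexed by $\mathbb{R}^R$ (or $\{\pm 1\}^R$) equipped with a correlated Gaussian measure, UG edges glue blocks together via the label permutations, and the \analyticvsep{} objective on the resulting space is designed so that a dictator $x \mapsto x_i$ achieves value $\epsilon$. The soundness argument has to show that any low-expansion cut either depends heavily on a small set of coordinates (and hence decodes to a good UG labeling) or violates the SSE side expansion of the UG constraint graph; this is where the "balanced" half of \analyticvsep{} is essential, since it prevents trivial constant cuts.

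For the second reduction, I would discretize the analytic instance into a bounded-degree graph with maximum degree $d$. In this direction the completeness is more or less mechanical, but the soundness uses the $\lambda_\infty$-vs-$\phiv$ sandwich of Bobkov-Houdr\'e-Tetali (and its SDP version observed by Steurer-Tetali) to argue that a graph of maximum degree $d$ with small vertex expansion produces an analytic solution with value smaller by a factor of $\sqrt{\log d}$; this is precisely what inflates the $\sqrt{\epsilon}$ on the analytic side into $\sqrt{\epsilon \log d}$ on the combinatorial side, and it is also what forces the degree bound $d \geq 100/\epsilon$ in the statement.

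The main obstacle will be the soundness analysis of the dictatorship test in the first reduction. Vertex expansion is essentially an $\ell_\infty$-type functional -- the outer boundary $N(S)$ counts any vertex touched by even a single edge from $S$ -- whereas standard long-code soundness analyses for edge expansion are tailored to the $\ell_2$/quadratic-form setting and go through the Mossel-O'Donnell-Oleszkiewicz invariance principle. I expect to need an invariance principle adapted to hard-threshold / noise-sensitivity functionals in the Gaussian space, together with an influence-decoding step that returns a UG labeling when no single coordinate dominates. Getting the soundness right up to the sharp $C\sqrt{\epsilon \log d}$ threshold, rather than an unspecified polynomial loss, is the technical heart of the argument; the absolute constant $10^{-10}$ in the statement reflects precisely the regime where the invariance-principle error terms can be absorbed without destroying the gap.
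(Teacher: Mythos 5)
Your high-level architecture (an analytic intermediate problem, a reduction from the \sse hypothesis into it via a dictatorship gadget, then a discretization into a bounded-degree graph) is indeed the paper's plan, but two of your concrete design choices would break the argument. First, the location of the $\sqrt{\log d}$ factor. In the paper the intermediate problem is $d$-\analyticvsep, a $(d+1)$-ary CSP whose objective already takes a maximum over $d$ neighbours, and the full $\Omega(\sqrt{\epsilon \log d})$ soundness is established at this analytic level: the invariance principle (\prettyref{prop:soundgadget}) transfers low-influence solutions to the Gaussian space, where the isoperimetric theorem (\prettyref{thm:Gaussian}) shows that if one samples $d$ independent $\epsilon$-perturbations of a Gaussian point, some perturbation crosses the cut with probability $\Omega(\sqrt{\epsilon\log d})\,\mu(S)\mu(\bar S)$ --- morally because the maximum of $d$ Gaussians reaches $\sqrt{\log d}$ standard deviations. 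The subsequent reduction to degree-$\Theta(d)$ graphs (\prettyref{thm:reduction}, \prettyref{prop:subsample}) loses only constant factors; it cannot amplify soundness, since the indicator of a low-vertex-expansion cut in a degree-$d$ graph is itself an analytic solution of essentially the same value. Your plan to prove only an $\epsilon$ vs.\ $\sqrt{\epsilon}$ gap analytically and then harvest $\sqrt{\log d}$ from the Bobkov--Houdr\'e--Tetali sandwich fails: the inequality $\lambda_\infty/2 \le \phiv \le \sqrt{2\lambda_\infty}$ contains no $\log d$ at all, and the $\log d$ in the Steurer--Tetali observation is an SDP-rounding loss that is only used for the matching algorithm (\prettyref{thm:algo}), not in any hardness direction.

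Second, your gadget. A noisy-cube or correlated-Gaussian long-code block in which the dictator is a coordinate threshold does not have completeness $\epsilon$ for a max-over-$d$-neighbours objective: each of the $d$ independent $\epsilon$-noisy queries crosses the dictator cut with probability about $\epsilon$, so the dictator's value is about $\epsilon d$ (or $\Theta(\sqrt{\epsilon\log d})$ in the Gaussian threshold picture), which is as large as the intended soundness and destroys the gap --- the paper explicitly flags this failure of the traditional noisy cube. Instead it builds the gadget from a four-state path Markov chain $H$ on $\{s,t,t',s'\}$ whose stationary measure puts weight $\epsilon$ on $\{t,t'\}$: the dictator cut $\{x : x_i \in \{s,t\}\}$ can be crossed by a constraint only when $x_i \in \{t,t'\}$, independently of $d$, giving completeness $2\epsilon$ (\prettyref{prop:completeness}). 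Relatedly, the paper reduces directly from \smallsetexpansion (product with $G_\eta^{\otimes R}$ plus a folding step, with decoding via Lemma 6.11 of \cite{rst12}, cf.\ \prettyref{lem:ugdecoding}) rather than composing a gadget with a \uniquegames instance, precisely because a gadget reduction inherits any small vertex separator of the outer instance; your proposal gestures at this issue but would still need the SSE-style decoding rather than standard UG influence decoding.
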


By a suitable choice of parameters in the above theorem, we obtain the
main theorem of this work, \prettyref{thm:main2}.
\begin{theorem} \label{thm:main2}
	There exists an absolute constant $\delta_0 > 0$ such that
	for every constant $\epsilon > 0$ the following holds:  Given
	a graph $G = (V,E)$, it is \sse-hard to
	distinguish between the following two cases:
	\begin{description}
\item[\yes] : There exists a set $S \subset V$ of size $\Abs{S} \leq \Abs{V}/2$ such that 
	$\phiv(S) \leq \e$

\item[\no] : ($G$ is a vertex expander with constant expansion) For all sets $S \subset V$, 
$\phiv(S) \geq  \delta_0$

\end{description}
\end{theorem}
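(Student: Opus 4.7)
\textbf{Proof Plan for Theorem~\ref{thm:main2}.} The plan is to derive Theorem~\ref{thm:main2} directly from Theorem~\ref{thm:main} by choosing the parameters $\eta$ and $d$ appropriately so that the degree-dependent soundness bound in Theorem~\ref{thm:main} becomes an absolute constant, independent of the completeness parameter $\epsilon$.

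First, I would fix $\eta$ once and for all to be a small constant, for instance $\eta = \tfrac{1}{2}\cdot 10^{-10}$, and invoke Theorem~\ref{thm:main} with this value of $\eta$ to obtain the corresponding absolute constant $C$. Now, given an arbitrary constant $\epsilon > 0$ as in the statement of Theorem~\ref{thm:main2}, the remaining freedom is in the choice of the maximum degree $d$ of the hard instances produced by Theorem~\ref{thm:main}. The observation driving the reduction is that the soundness bound in Theorem~\ref{thm:main} is
\[
\min\set{10^{-10},\, C\sqrt{\epsilon \log d}} - \eta,
\]
and we are free to increase $d$ without changing $\epsilon$. Thus I would choose $d$ to be any integer satisfying both $d \ge 100/\epsilon$ (to meet the hypothesis of Theorem~\ref{thm:main}) and $d \ge \exp\!\big(10^{-20}/(C^2 \epsilon)\big)$, so that $C\sqrt{\epsilon \log d} \ge 10^{-10}$ and the minimum in the soundness expression is achieved by $10^{-10}$.

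With these choices, the YES case of Theorem~\ref{thm:main} still produces a set with $\phiv(S)\le\epsilon$, while the NO case guarantees that every set $S\subset V$ satisfies
\[
\phiv(S) \;\ge\; 10^{-10} - \eta \;=\; \tfrac{1}{2}\cdot 10^{-10}.
\]
Setting $\delta_0 \defeq \tfrac{1}{2}\cdot 10^{-10}$, which is a fixed absolute constant independent of $\epsilon$, yields precisely the gap required by Theorem~\ref{thm:main2}. Since $d$ depends only on the constants $\epsilon$ and $C$, the reduction remains polynomial-time, and \sse-hardness transfers directly from Theorem~\ref{thm:main}.

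There is no real obstacle here, since all of the technical work is carried out in Theorem~\ref{thm:main}; the only thing to verify is the quantitative fact that for any fixed $\epsilon > 0$ one can push $\log d$ large enough to make $C\sqrt{\epsilon\log d}$ exceed the absolute threshold $10^{-10}$, which is immediate. The content of Theorem~\ref{thm:main2} is thus the observation that, because the soundness gap in Theorem~\ref{thm:main} grows (up to the cap) with $\log d$, the gap between ``vertex expansion $\le \epsilon$'' and ``vertex expansion $\ge \delta_0$'' can be made independent of $\epsilon$ by simply working with sufficiently high-degree instances.
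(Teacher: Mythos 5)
Your proposal is correct and is exactly the derivation the paper intends: the paper proves Theorem~\ref{thm:main} and obtains Theorem~\ref{thm:main2} "by a suitable choice of parameters," namely fixing $\eta$ to a small absolute constant and taking the degree $d$ large enough (as a function of $\epsilon$ and $C$, respecting $d \ge 100/\epsilon$) that the minimum in the soundness bound is attained by $10^{-10}$, giving $\delta_0 = 10^{-10} - \eta$. The quantifier order is handled properly (you fix $\eta$, obtain $C$, then choose $d$), so there is nothing to add.
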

In particular, the above result implies that it is \sse-hard to
certify that a graph is a vertex expander with constant expansion.  
This is in contrast to the case of edge
expansion, where the Cheeger's inequality can be used to certify that
a graph has constant edge expansion.

At the risk of being redundant, we note that our main theorem implies that any algorithm
that outputs a set having vertex expansion less than $C\sqrt{\phiv \log d}$ will disprove the \sse~hypothesis; 
alternatively, to improve on the bound of $\bigO{\sqrt{\phiv \log d}}$, one has to disprove the \sse~ hypothesis. 
From an algorithmic standpoint, we believe that \prettyref{thm:main2}
exposes a clean algorithmic challenge of recognizing a vertex expander
-- a challenging problem that is not only interesting on its own
right, but whose resolution would probably lead to a significant
advance in approximation algorithms.

At a high level, the proof is as follows. We introduce the notion of \analyticvsep for Markov chains. 
This quantity can be thought of as a $\csp$ on $(d+1)$-tuples of vertices. We show a reduction 
from \analyticvsep of a Markov chain, say $H$, to vertex expansion of
a graph, say $H_1$ (\prettyref{sec:sampling}). 
Our reduction is generic and works for any Markov chain $H$. 
Surprisingly, the $\csp$-like nature of \analyticvsep makes it amenable to a reduction from $\smallsetexpansion$ 
(\prettyref{sec:hardness}). 
We construct a gadget for this reduction and study its embedding into the Gaussian graph to analyze its soundness 
(\prettyref{sec:Gauss} and \prettyref{sec:gadget}).  The gadget involves a sampling procedure to generate a bounded-degree graph.

\section{Proof Overview}

\begin{figure}[htp]
\centering
\includegraphics[scale = 0.7]{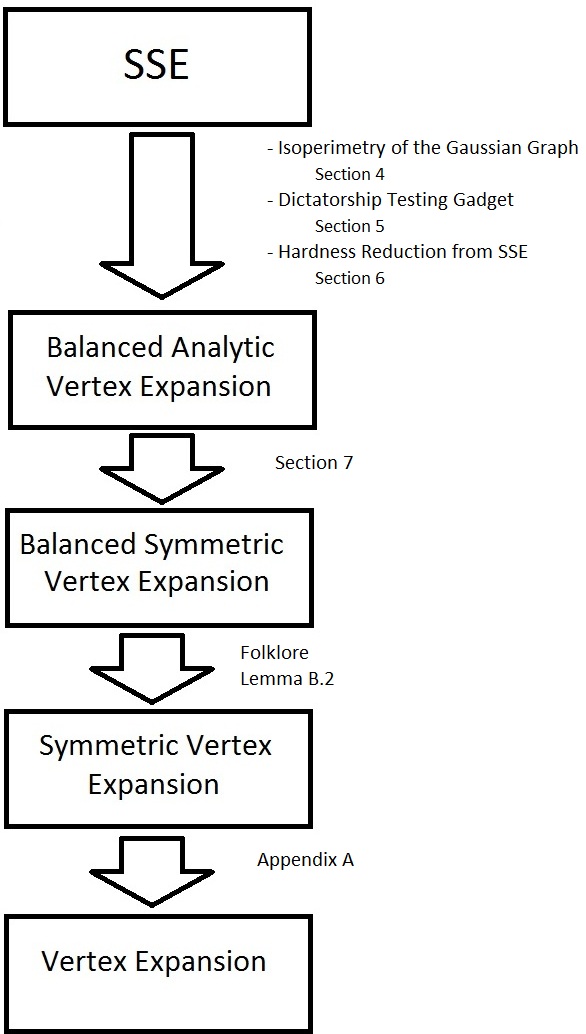}
\caption{Reduction from SSE to Vertex Expansion }
\label{fig:reductions}
\end{figure}

\paragraph{\analyticvsep}
To exhibit a hardness result, 
we begin by defining a combinatorial
optimization problem related to the problem of approximating vertex
expansion in  graphs having largest degree $d$.  This problem referred to as
\analyticvsep can be motivated as follows.

Fix a graph $G = (V,E)$ and a subset of vertices $S \subset V$.  For any
vertex $v \in V$, $v$ is on the boundary of the set $S$ if and only if
$\max_{u \in N(v)}\Abs{ \is{u} - \is{v}} = 1$, where $N(v)$ denotes
the neighbourhood of vertex $v$.  In particular, the fraction of
vertices on the boundary of $S$ is given by $\E_{v} \max_{u \in N(v)}
\Abs{ \is{u} - \is{v} }$.  The {\em symmetric} vertex expansion of the set $S \subseteq V$ is
given by,
$$ n \cdot \frac{\Abs{N(S) \cup N(V\backslash S)  }  }{ \Abs{S} \Abs{V \backslash S} } = \frac{\E_{v} \max_{u \in N(v)} \Abs{\is{u} - \is{v} }}{ \E_{u,v}
\Abs{ \is{u} - \is{v} }} \mper$$
Note that for a degree $d$ graph, each of the terms in the numerator
is maximization over the $d$ edges incident at the vertex.  The formal definition of \analyticvsep is
as shown below.

\begin{definition}

An instance of \analyticvsep, denoted by $(V,\cP)$, consists of a set of variables $V$ and a probability
distribution $\cP$ over $(d + 1)$-tuples in $V^{d+1}$. The probability distribution $\cP$ satisfies the condition 
that all its $d + 1$ marginal distributions are the same (denoted by $\mu$).
The goal is to solve the following optimization problem 
\[ \phiav{V,\cP} \defeq \min_{ F : V \to \set{0,1} |  \E_{X,Y \sim \mu} \Abs{F(X) - F(Y)} \geq \frac{1}{100}}
	 \frac{ \E_{(X,Y_1, \ldots, Y_d) \sim \cP}  \max_{i} \Abs{ F(Y_i) - F(X) }}  {\E_{X,Y \sim \mu} \Abs{F(X) - F(Y)} }  \]

\end{definition}

For constant $d$, this could be thought of as a constraint satisfaction
problem (CSP) of arity $d+1$.  
Every $d$-regular graph $G$ has an associated instance of \analyticvsep
whose value corresponds to the vertex expansion of $G$.  Conversly,
we exhibit a reduction from \analyticvsep to problem of approximating vertex
expansion in a graph of degree $\poly(d)$ (\prettyref{sec:sampling}
for details).

\paragraph{Dictatorship Testing Gadget}

As with most hardness results obtained via the label cover or
the unique games problem, central to our reduction is an appropriate dictatorship testing gadget.

Simply put, a dictatorship testing gadget for \analyticvsep is an
instance $\cH^R$ of the problem such that, on one hand there exists the
so-called {\it dictator} assignments with value $\epsilon$, while every
assignment far from every dictator incurs a cost of at least
$\Omega(\sqrt{\epsilon \log d})$.

The construction of the dictatorship testing gadget is as follows.
Let $H$ be a Markov chain on vertices $\{s,t,t',s'\}$ connected
to form a path of length three.  The transition probabilities of the Markov chain
$\cH$ are so chosen to ensure that if $\mu_H$ is the stationary
distribution of $H$ then $\mu_H(t) = \mu_H(t') = \epsilon/2$ and
$\mu_H(s) = \mu_H(s') = (1-\epsilon)/2$.  In particular, $H$ has a
vertex separator $\{t,t'\}$ whose weight under the stationary
distribution is only $\epsilon$.

The dictatorship testing gadget is over the product Markov chain
$H^R$ for some large constant $R$.  The constraints $\cP$ of the dictatorship testing gadget
$H^R$ are given by the following sampling procedure,
\begin{itemize}
	\item Sample $x \in H^R$ from the stationary distribution of
		the chain.
	\item Sample $d$-neighbours $y_1,\ldots,y_d \in H^R$ of $x$
		independently from the transition probabilities of the
		chain $H^R$.  Output the tuple $(x, y_1,\ldots,y_d)$. 
\end{itemize}

For every $i \in [R]$, the $i^{th}$ dictator solution to the above
described gadget is given by the following function,
$$F(x) = \begin{cases} 1 & \text{ if } x_i \in \{s,t\} \\ 0 & \text{
	otherwise} \end{cases}  $$
It is easy to see that for each constraint $(x,y_1,\ldots,y_d) \sim
\cP$, $\max_{j} \abs{F(x)-F(y_j)} = 0$ unless $x_i = t$ or $x_i = t'$.
Since $x$ is sampled from the stationary distribution for $\mu_H$,
$x_i \in \{t,t'\}$ happens with probability $\epsilon$.  Therefore the
expected cost incurred by the $i^{th}$ dictator assignment is at most
$\epsilon$.

\paragraph{Soundness Analysis of the Gadget}
 
The soundness property desired of the dictatorship testing gadget can
be stated in terms of influences.  Specifically, given an assignment
$F: V(H)^R \to [0,1]$, the influence of the $i^{th}$ coordinate is given
by $\Inf_i [F] = \E_{x_{[R]\backslash i}} \var_{x_i}[F(x)]$, i.e., the
expected variance of the function after fixing all but the $i^{th}$
coordinate randomly.  Henceforth, we will refer to a function $F: H^R \to [0,1]$ as {\it far from every
dictator} if the influence of all of its coordinates are small (say $<
\tau$).  

We show that the dictatorship testing gadget $H^R$ described above
satisfies the following soundness -- for every function $F$ that is far from
every dictator, the cost of $F$ is at least $\Omega(\sqrt{\epsilon
\log d})$.  To this end, we appeal to the invariance principle to
translate the cost incurred to a corresponding isoperimetric problem
on the Gaussian space.  More precisely, given a function $F : H^R \to
[0,1]$, we express it as a polynomial in the eigenfunctions over $H$.
We carefully construct a Gaussian ensemble with the same moments up to
order two, as the eigenfunctions at the query points
$(x,y_1,\ldots,y_d) \in \cP$.  By appealing to the invariance
principle for low degree polynomials, this translates in to the
following isoperimetric question over Gaussian space $\cG$.,

Suppose we have a subset $S \subseteq \cG$ of the $n$-dimensional Gaussian space.  Consider
the following experiment:
\begin{itemize}
	\item Sample a point $z \in \cG$ the Gaussian space.
	\item Pick $d$ independent perturbations $z'_1, z'_2, \ldots,
		z'_d$ of the point $z$ by
		$\epsilon$-noise.
	\item Output $1$ if at least one of the edges $(z,z'_i)$
		crosses the cut $(S,\bar{S})$ of the Gaussian space.
\end{itemize}
Among all subsets $S$ of the Gaussian space with a given volume,
which set has the least expected output in the above experiment?
The answer to this isoperimetric question corresponds to the soundness
of the dictatorship test.  A halfspace of volume $\frac{1}{2}$ has an
expected output of $\sqrt{\epsilon \log d}$ in the above experiment.
We show that among all subsets of constant volume, halfspaces acheive
the least expected output value.  

This isoperimetric theorem proven in \prettyref{sec:Gauss} yields the
desired $\Omega(\sqrt{\epsilon \log d})$ bound for the soundness of
the dictatorship test constructed via the Markov chain $H$.  Here the noise rate of $\epsilon$ arises from the fact that all the eigenfunctions
of the Markov chain $H$ have an eigenvalue smaller than $1-\epsilon$.  The details of
the argument based on invariance principle is presented in
\prettyref{sec:gadget}

We show a $\Omega(\sqrt{\epsilon \log d})$ lower bound for the
isoperimetric problem on the Gaussian space.  The proof of this
isoperimetric inequality is included in \prettyref{sec:Gauss}

We would like to point out here that the traditional noisy cube gadget
 does not suffice for our application.  This is because in the noisy
 cube gadget while the dictator solutions have an edge expansion of
 $\epsilon$ they have a vertex expansion of $\epsilon d$, yielding a
 much worse value than the soundness.

\paragraph{Reduction from \smallsetexpansion problem}
Gadget reductions from the \uniquegames problem cannot be used towards
proving a hardness result for edge or vertex expansion problems.  This
is because if the underlying instance of \uniquegames has a small
vertex separator, then the graph produced via a gadget reduction would
also have small vertex expansion.  Therefore, we appeal to a reduction from the 
\smallsetexpansion problem (\prettyref{sec:hardness} for details).

Raghavendra \etal \cite{rst12} show optimal
inapproximability results for the Balanced separator problem using a
reduction from the \smallsetexpansion problem.  While the overall approach
of our reduction is similar to theirs, the details are subtle.  Unlike
hardness reductions from unique games, the reductions for
expansion-type problems starting from \smallsetexpansion are not very well
understood.  For instance, the work of Raghavendra and Tan
\cite{rt12} gives a dictatorship testing gadget for the
Max-Bisection problem, but a \smallsetexpansion based hardness for
Max-Bisection still remains open.

\subsection{Notation}

We use $\mu_G$ to denote a probability distribution on vertices of the graph $G$. 
We drop the subscript $G$, when the graph is clear from the context. 
For a set of vertices $S$, we define $\mu(S) = \int_{x \in S} \mu(x)$.
We use $\mu_{|S}$ to denote the distribution $\mu$ restricted to the set $S \subset V(G)$. 
For the sake of simplicity, we sometimes say that vertex $v \in V(G)$ has weight 
$w(v)$, in which case we define $\mu(v) = w(v) / \sum_{u \in V} w(u)$.
We denote the weight of a set $S \subseteq V$ by $w(S)$.  
We denote the degree of a vertex $v$ by $\deg(v)$. 
We denote the neighborhood of $S$ in $G$ by $N_G(S)$, i.e. 
\[  N_G(S) = \{ v \in \bar{S} |  \exists u \in S \textrm{ such that }
	\set{u,v} \in E(G)  \}  \mper \]
We drop the subscript $G$ when the graph is clear from the context.

\subsection{Organization}

We begin with some definitions and the statements of the \sse
hypotheses in \prettyref{sec:prelims}.
In \prettyref{sec:symv}, we show that the computation of vertex expansion and 
symmetric vertex expansion is equivalent upto constant factors. 
We prove a new Gaussian isoperimetry results in \prettyref{sec:Gauss} that we use in our soundness analysis.
In \prettyref{sec:gadget} we show the construction of our main gadget and analyze its soundness and completeness 
using \analyticvsep as the test function.
We show a reduction from a reduction from \analyticvsep to
vertex expansion in \prettyref{sec:sampling}. 
In \prettyref{sec:hardness}, we use this gadget to show a reduction \sse~ to \analyticvsep.
Finally, in \prettyref{sec:putting-things-togethor}, we show how to put all the reductions togethor to 
get optimal \sse-hardness for vertex expansion.

Complimenting our lower bound, we give an algorithm that outputs a set having vertex expansion at most
$\bigO{\sqrt{\phiv \log d}}$ in \prettyref{sec:vertexsepalgo}.

\section{Preliminaries}

\label{sec:prelims}

\paragraph{Symmetric Vertex Expansion}
For our proofs, the notion of Symmetric Vertex Expansion is useful. 

\begin{definition}
Given a graph $G = (V,E)$, we define the the symmetric vertex expansion of a set 
$S \subset V$ as follows. 
\[ \phivs_G(S) \defeq n \cdot \frac{ \Abs{ N_G(S) \cup N_G(V \backslash S) }} {\Abs{S} \Abs{V \backslash {S}}}  \]

\end{definition}

\paragraph{Balanced Vertex Expansion}
We define the balanced vertex expansion of a graph as follows. 

\begin{definition}
Given a graph $G$ and balance parameter $b$, we define the {\em $b$-balanced vertex expansion} of $G$ as follows.

\[ \phivb_b \defeq \min_{ S : \Abs{S} \Abs{V \backslash S} \geq b n^2} \phiv(S).\]
and
\[ \phivbs_b \defeq \min_{ S : \Abs{S} \Abs{V \backslash S} \geq b n^2} \phivs(S).\]

We define $\phivb \defeq \phivb_{1/100}$ and $\phivbs \defeq \phivbs_{1/100}$.

\end{definition}

\paragraph{Analytic Vertex Expansion}

Our reduction from \sse~ to vertex expansion goes via an intermediate problem 
that we call $d$-\analyticvsep. 
We define the notion of $d$-\analyticvsep as follows.

\begin{definition}

An instance of $d$-\analyticvsep, denoted by $(V,\cP)$, consists of a set of variables $V$ and a probability
distribution $\cP$ over $(d + 1)$-tuples in $V^{d+1}$. The probability distribution $\cP$ satisfies the condition 
that all its $d + 1$ marginal distributions are the same (denoted by $\mu$).
The $d$-\analyticvsep under a function $F : V \to \set{0,1} $ is defined as 
\[ \phiav{V,\cP}(F)  \defeq  \frac{ \E_{(X,Y_1, \ldots, Y_d) \sim \cP}  \max_{i} \Abs{ F(Y_i) - F(X) }} 
 {\E_{X,Y \sim \mu} \Abs{F(X) - F(Y)} } \mper  \] 

The $d$-\analyticvsep of $(V,\cP)$ is defined as 
\[ \phiav{V,\cP} \defeq \min_{F : V \to \set{0,1} |  \E_{X,Y \sim \mu} \Abs{F(X) - F(Y)} \geq \frac{1}{100} } \phiav{V,\cP}(F) .  \]

When drop the degree $d$ from the notation, when it is clear from  the context.

\end{definition}

For an instance $(V,\cP)$ of \analyticvsep and
an assignment $F: V \to \set{0,1}$ define 
\[  \val_{\cP}(F) = \E_{(X,Y_1, \ldots, Y_d) \sim \cP}  \max_{i} \Abs{F(Y_i) - F(X)}. \]

\paragraph{Gaussian Graph}
Recall that two standard normal random variables $X,Y$ are said to be $\alpha$-correlated if there 
exists an independent standard normal random variable $Z$ such that 
$Y = \alpha X + \sqrt{1 - \alpha^2} Z$.

\begin{definition}
The {\em Gaussian Graph} $\Gauss$ is a complete weighted graph on the vertex set $V(\Gauss) = \R^n$. 
The weight of the edge between two vertices $u,v \in V(\Gauss)$ is given by 
\[ w(\set{u,v}) = \Pr{X = u \textrm{ and } Y = v}  \]
where $Y \sim \cN(\Lambda X, \Sigma)$, where 
$\Lambda$ is a diagonal matrix such that $\norm{\Lambda} \leq 1$ and
$\Sigma \succeq \e I$ is a diagonal matrix. 

\end{definition}

\begin{remark}
Note that for any two non-empty disjoint sets $S_1, S_2 \subset V(\Gauss)$, 
the total weight of the edges between $S_1$ and $S_2$ can be non-zero even though 
every single edge in the $\Gauss$ has weight zero.

\end{remark}

\begin{definition}
We say that a family of graphs $\cG_d$  is $\Theta(d)$-regular, if there exist absolute constants $c_1, c_2 \in \R^+$ 
such that for every $G \in \cG_d$, all vertices $i \in V(G)$ have $c_1 d \leq \deg(i) \leq c_2 d$.
\end{definition}

We now formalize our notion of hardness.

\begin{definition}
A constrained minimization problem $\cA$ with its optimal value denoted by $\val(\cA)$ is said to be \cvss hard if 
it is \sse-hard to distinguish between the following two cases. 
\begin{description} 
\item[\yes:] \[ \val(\cA) \leq c \mper \]

\item[\no:] \[ \val(\cA) \geq s \mper \]

\end{description}

\end{definition}

\paragraph{Variance}
For a random variable $X$, define the variance and $\ell_1$-variance as
follows,
$$ \var[X] = \E_{X_1,X_2} [(X_1-X_2)^2] \qquad \var_1[X] =
\E_{X_1,X_2} [|X_1 - X_2|]$$
where $X_1,X_2$ are two independent samples of $X$.

\paragraph{\SSEH}
\begin{problem}[\smallsetexpansion$(\gamma,\delta)$] 
Given a regular graph $G=(V,E)$, distinguish between the following two cases:
\begin{description}
\item[\yes:] There exists a non-expanding set $S \subset V$ with $\mu(S) = \delta$ 
and $\Phi_G(S) \leq \gamma$.
\item[\no:] All sets $S \subset V$ with $\mu(S) = \delta$ are highly expanding having 
$\Phi_G(S) \geq 1-\gamma$.
\end{description}
\end{problem}

\begin{hypothesis}[Hardness of approximating \smallsetexpansion]
For all $\gamma > 0$, there exists $\delta > 0$ such that the promise problem
\smallsetexpansion($\gamma,\delta$) is {\sf NP}-hard.
\end{hypothesis}\label{hypo:sse-hardness}

For the proofs, it shall be more convenient to use the following version of the $\smallsetexpansion$ problem,
in which we high expansion is guaranteed not only for sets of measure $\delta$, but also within 
an arbitrary multiplicative factor of $\delta$.

\begin{problem}[\smallsetexpansion$(\gamma,\delta,M)$] 
Given a regular graph $G=(V,E)$, distinguish between the following two cases:
\begin{description}
\item[\yes:] There exists a non-expanding set $S \subset V$ with $\mu(S) = \delta$ 
and $\Phi_G(S) \leq \gamma$.
\item[\no:] All sets $S \subset V$ with $\mu(S) \in
	\left(\tfrac{\delta}{M},M\delta\right)$ have
$\Phi_G(S) \geq 1-\gamma$.
\end{description}
\end{problem}

The following stronger hypothesis was shown to be equivalent to \SSEH
in \cite{rst12}.

\begin{hypothesis}[Hardness of approximating \smallsetexpansion]
For all $\gamma > 0$ and $M \geq 1$, there exists $\delta > 0$ such that the promise problem
\smallsetexpansion($\gamma,\delta,M$) is {\sf NP}-hard.
\end{hypothesis}\label{hypo:sse-hardness-main}

\section{Isoperimetry of the Gaussian Graph}
\label{sec:Gauss}

In this section we bound the \analyticvsep of the Gaussian graph.
For the Gaussian Graph, we define the canonical probability distribution on $V^{d+1}$ as follows. 
The marginal distribution along any component $X$ or $Y_i$ is the standard Gaussian distribution in 
$\R^n$, denoted here by $\mu = \cN(0,1)^n$.
\[ \Gaussp(X, Y_1, \ldots, Y_d) = \frac{ \Pi_{i =1}^d w(X,Y_i)} { \mu(X)^{d-1}  }  = 
\mu(X) \Pi_{i=1}^d \Pr{Y = Y_i}.\]
Here, random variable $Y$ is sampled from $\cN(\Lambda X, \Sigma)$.

\begin{theorem}
\label{thm:Gaussian}
For any closed set $S \subset of V(\Gauss)$ with $\Lambda$ a diagonal matrix satisfying $\norm{\Lambda} \leq 1$, 
and $\Sigma$ a diagonal matrix satisfying $\Sigma \succeq \e I$, we have 
\[  \frac{\E_{ (X,Y_1, \ldots, Y_d) \sim \Gaussp} \max_i \Abs{\is{X} - \is{Y_i} }}{\E_{X,Y \sim \mu} \Abs{\is{X} - \is{Y}}}
    =  \frac{ \E_{X \sim \mu} \E_{Y_1, \ldots Y_d \sim \cN(\Lambda X,\Sigma)} \max_{i } \abs{\is{X} - \is{Y_i}}} {\E_{X,Y \sim \mu} \Abs{\is{X} -\is{Y}}}
   \geq c\sqrt{\e \log d} \]
   for some absolute constant $c$.
\end{theorem}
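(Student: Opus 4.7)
The plan is to reduce the max-type numerator to a Gaussian boundary-measure question and then resolve it via halfspace comparison. Throughout, write $f = \mathbb{I}_S$, $p = \mu(S)$, and $g(x) := \Pr{Y \in S \mid X = x}$ where $Y \sim \cN(\Lambda x, \Sigma)$. The denominator equals $\E_{X,Y\sim \mu}|f(X)-f(Y)| = 2p(1-p)$, so the goal reduces to lower bounding the numerator by $\Omega\bigl(\sqrt{\epsilon \log d}\bigr)\cdot p(1-p)$.

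Since $Y_1,\ldots,Y_d$ are conditionally independent given $X$, the numerator equals
$$\E_X \Bigl[ f(X)\bigl(1 - g(X)^d\bigr) + (1-f(X))\bigl(1 - (1-g(X))^d\bigr) \Bigr].$$
Using the elementary inequality $1-(1-t)^d \geq (1-e^{-1})\,\mathbb{I}[t \geq 1/d]$, this is at least $(1-e^{-1})\cdot \mu(B)$, where $B := \{x : |f(x)-g(x)| \geq 1/d\}$. Thus the task reduces to the Gaussian isoperimetric claim
$$\mu(B) \;\geq\; c'\,\sqrt{\epsilon \log d}\cdot p(1-p). \qquad (\star)$$

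To prove $(\star)$, I first reduce to the canonical case $\Lambda = \sqrt{1-\epsilon}\,I$, $\Sigma = \epsilon I$. The hypotheses $\|\Lambda\|\leq 1$ and $\Sigma\succeq \epsilon I$, combined with the marginal constraint $\Lambda^2+\Sigma=I$, force $|\lambda_i|\leq \sqrt{1-\epsilon}$ in every direction; any such chain can be realized as the composition of the canonical $\epsilon$-noise chain with a further independent smoothing step, and this additional smoothing only pushes $g$ closer to its mean $p$, which enlarges $B$. Next, I claim halfspaces minimize $\mu(B)$ among sets of Gaussian measure $p$: this is a Borell/Ehrhard-type rearrangement statement, provable by coordinatewise symmetrizations that preserve $\E f = p$ while not increasing the thickness of the level sets $\{1/d < g < 1-1/d\}$ that make up $B$. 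Finally, for $S = \{x_1 \geq t\}$ with $\Phi(-t)=p$, a direct computation yields $g(x) = 1 - \Phi\bigl((t - \sqrt{1-\epsilon}\,x_1)/\sqrt{\epsilon}\bigr)$, so that $B$ is a strip of width $\Theta\bigl(\sqrt{\epsilon \log d}\bigr)$ in the $x_1$-direction around $t/\sqrt{1-\epsilon}$, of Gaussian measure $\Theta\bigl(\phi(t)\sqrt{\epsilon \log d}\bigr)$. Since $\phi(\Phi^{-1}(p))=\Omega(p(1-p))$ for $p$ bounded away from $0,1$ (and exceeds this by a $\sqrt{\log(1/p)}$ factor for extreme $p$), $(\star)$ follows.

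The main obstacle is the halfspace-extremality step: unlike classical noise stability $\E[f(X)f(Y)]$, the quantity $\mu(B)$ is defined via a pointwise threshold on $g = Tf$ rather than a bilinear form, so Borell's two-set rearrangement theorem does not apply verbatim. I expect the argument to go either via (i) a tailored Ehrhard symmetrization that tracks the measures of both level sets $\{g \leq 1-1/d\}\cap S$ and $\{g \geq 1/d\}\cap S^c$ simultaneously, or (ii) Bobkov's Gaussian isoperimetric inequality applied to the smooth function $g$ to bound the measures of these level sets in terms of the isoperimetric profile $I(t)=\phi(\Phi^{-1}(t))$.
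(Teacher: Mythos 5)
Your first reduction matches the paper: writing the numerator as an expectation of $1-(1-\mu_X(\text{other side}))^d$ and lower bounding it by a constant times the measure of the boundary strip $B=\{x: |\Ind_S(x)-g(x)|\geq 1/d\}$ is exactly how the paper's proof of \prettyref{thm:Gaussian} begins. The gap is in what comes next: your entire lower bound on $\mu(B)$ rests on a halfspace-extremality (rearrangement) claim for the level-set measure of $g=Tf$, which you yourself flag as the main obstacle and do not prove. This is not a routine appeal to Borell or Ehrhard -- those results control bilinear noise-stability functionals, not the measure of $\{1/d< g<1-1/d\}$ -- and neither of your two suggested routes (a tailored Ehrhard symmetrization, or Bobkov's inequality applied to $g$) is carried out, so the proposal as it stands does not prove $(\star)$. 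There are also problems in your preliminary reduction to the canonical chain: the theorem only assumes $\norm{\Lambda}\leq 1$ and $\Sigma \succeq \e I$ (it does not impose $\Lambda^2+\Sigma=I$, which you invoke), and the claim that composing with extra smoothing ``only enlarges $B$'' is a pointwise monotonicity of crossing probabilities that fails for general sets (a point just outside a thin sliver of $S$ can have a \emph{smaller} probability of landing in $S$ under more noise).

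The paper's proof shows that no extremality statement is needed. It defines $S_1=\set{x\in S: \mu_x(\R^n\setminus S)<\tfrac{1}{2d}}$ and $S_2=\set{x\notin S: \mu_x(S)<\tfrac{1}{2d}}$, notes that for $u\in S_1$, $v\in S_2$ the distributions $P_u=\cN(\Lambda u,\Sigma)$ and $P_v=\cN(\Lambda v,\Sigma)$ satisfy $\dtv(P_u,P_v)>1-\tfrac1d$, and proves (\prettyref{lem:coupling}, a one-dimensional Gaussian tail computation along the direction $\Lambda u-\Lambda v$, using only $\norm{\Lambda}\leq 1$ and $\Sigma\succeq\e I$) that this forces $\norm{u-v}>\sqrt{\e\log d}$, i.e.\ $d(S_1,S_2)>\sqrt{\e\log d}$. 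It then applies the known two-set Gaussian isoperimetric/concentration inequality of Borell and Sudakov--Tsirelson to sets at this distance, yielding $\mu(S_3)\geq \sqrt{2/\pi}\,\sqrt{\e\log d}\,\mu(S_1)\mu(S_2)\geq c\sqrt{\e\log d}\,\mu(S)\mu(\R^n\setminus S)$ for the strip $S_3=\R^n\setminus S_1\setminus S_2$ (after the harmless case split $\mu(S_1)\geq \mu(S)/2$), which combined with the common first step finishes the proof. If you want to salvage your approach, you would need to either prove your rearrangement lemma or replace it by this distance-plus-two-set-isoperimetry argument, which handles arbitrary $\Lambda,\Sigma$ within the stated hypotheses directly.
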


\begin{lemma}
\label{lem:coupling}
Let $u,v \in \R^n$ satisfy $\Abs{u - v} \leq \sqrt{\eps \log d}$.
Let $\Lambda$ be a diagonal matrix satisfying $\norm{\Lambda} \leq 1$, 
and let $\Sigma$ a diagonal matrix satisfying $\Sigma \succeq \e I$. 
Let $P_u,P_v$ be the distributions $\cN(\Lambda u, \Sigma)$ and $\cN(\Lambda v,\Sigma)$ respectively. Then,
\[
\dtv(P_u,P_v) \leq 1 - \frac{1}{d}.
\]
\end{lemma}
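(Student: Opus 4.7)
The plan is to reduce the multivariate total variation computation to one dimension and then apply a standard Gaussian tail bound.

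First, I would normalize by the change of variables $z = \Sigma^{-1/2}(x - \Lambda u)$. Since total variation distance is invariant under invertible linear transformations, this turns $P_u$ into $\cN(0, I)$ and $P_v$ into $\cN(w, I)$, where $w \defeq \Sigma^{-1/2}\Lambda(v - u)$. Decomposing $\R^n$ into the direction of $w$ and its orthogonal complement, the two distributions agree on the complement, so
\[
\dtv(P_u,P_v) \;=\; \dtv\bigl(\cN(0,1), \cN(\|w\|, 1)\bigr) \;=\; 2\Phi(\|w\|/2) - 1,
\]
where $\Phi$ is the standard normal CDF.

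Next I would bound $\|w\|$. Because $\Sigma$ is diagonal with $\Sigma \sge \eps I$, we have $\|\Sigma^{-1/2}\|_{\mathrm{op}} \leq 1/\sqrt{\eps}$. Combined with $\|\Lambda\|_{\mathrm{op}} \leq 1$ and the hypothesis $\|u-v\| \leq \sqrt{\eps \log d}$, this gives
\[
\|w\| \;\leq\; \tfrac{1}{\sqrt{\eps}} \cdot 1 \cdot \sqrt{\eps \log d} \;=\; \sqrt{\log d}.
\]
So it remains to verify that $2\Phi(\sqrt{\log d}/2) - 1 \leq 1 - 1/d$, i.e.\ $1 - \Phi(\sqrt{\log d}/2) \geq 1/(2d)$.

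For this I would apply the standard Mills-ratio lower bound
\[
1 - \Phi(t) \;\geq\; \frac{t}{(1+t^2)\sqrt{2\pi}}\, e^{-t^2/2}
\]
with $t = \sqrt{\log d}/2$. Then $e^{-t^2/2} = d^{-1/8}$, so $1 - \Phi(t) = \Omega\bigl(d^{-1/8}/\sqrt{\log d}\bigr)$, which exceeds $1/(2d)$ with large margin once $d^{7/8}$ beats $\sqrt{\log d}$, i.e.\ for all $d$ beyond a small constant; the remaining small cases can be checked by inspection (and for $d = 1$ the hypothesis forces $u = v$). There is no serious obstacle: the whole argument is a routine reduction to a one-dimensional Gaussian tail estimate, with the main content being the clean bookkeeping of how the spectral bounds on $\Lambda$ and $\Sigma$ combine to give $\|w\| \leq \sqrt{\log d}$.
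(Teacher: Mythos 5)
Your proposal is correct and follows essentially the same route as the paper's proof: reduce to a one-dimensional Gaussian mean shift of at most $\sqrt{\log d}$ standard deviations (the paper views $P_u,P_v$ as one-dimensional Gaussians of variance $\eps$ along the line $\Lambda u - \Lambda v$, which your whitening by $\Sigma^{-1/2}$ makes precise) and then invoke the standard Gaussian tail bound $1-\Phi(\sqrt{\log d}/2)\ge 1/(2d)$. The only difference is presentational: you spell out the Mills-ratio estimate and the operator-norm bookkeeping that the paper leaves as ``a standard bound on the Gaussian tail.''
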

\begin{proof}
First, we note that that for the purpose of estimating their total variation distance, 
we can view $P_u,P_v$ as one-dimensional Gaussians along the line $\Lambda u- \Lambda v$. 
Since $\norm{\Lambda} \leq 1$, 
\[ \norm{ \Lambda u - \Lambda v } \leq \norm{u - v} \leq \sqrt{\e \log d} \mper \]
Wlog, we may
take $\Lambda u=0$ and $\Lambda v = \sqrt{\eps\log d}$. Next, by the definition of total variation distance, 
\begin{eqnarray*}
\dtv(P_u, P_v) &=& \int_{x: P_v(x) \ge P_u(x)} |P_v(x) - P_u(x)|dx \\
&=& \int_{\Lambda v/2}^{\infty} (P_v(x) - P_u(x))dx\\ 
&=& \frac{1}{\sqrt{2\pi\eps}} \int_{\Lambda v/2}^\infty e^{-\frac{\|x- \Lambda v\|^2}{2\eps}}\, dx  -
	 \frac{1}{\sqrt{2\pi\eps}} \int_{\Lambda v/2}^\infty e^{-\frac{\|x\|^2}{2\eps}} \, dx\\
&=& \frac{1}{\sqrt{2\pi\eps}} \int_{-\Lambda v/2}^{\Lambda v/2} e^{-\frac{\|x\|^2}{2\eps}}\, dx\\
&=& \frac{1}{\sqrt{2\pi}} \int_{-\sqrt{\log d}/2}^{\sqrt{\log d}/2} e^{-\frac{\|x\|^2}{2}}\, dx\\
&=& 1 -   2\cdot \frac{1}{\sqrt{2\pi}} \int_{\sqrt{\log d}/2}^\infty e^{-\frac{\|x\|^2}{2}}\, dx\\
&<& 1 - \frac{1}{d}.
\end{eqnarray*}
where the last step uses a standard bound on the Gaussian tail.

\end{proof}

\begin{proof}[Proof of \prettyref{thm:Gaussian}.]
Let $\mu_X$ denote the Gaussian distribution $\cN(\Lambda X,\Sigma)$. Then the LHS is:
\[
\int_{\R^n\setminus S} \left(1-(1-\mu_X(S))^d\right) \, d\mu(X) + \int_{S}\left(1
- (1-\mu_X(\R^n\setminus S))^d\right) \, d\mu(X).
\]
To bound this, we will restrict ourselves to points $X$ for which the $\mu_X$ measure of 
the complementary set is at least $1/d$. Roughly speaking, these will be points near the boundary of $S$.
Define:
\[ S_1 = \set{ x \in S \, : \, \mu_X (\R^n \setminus S) < \frac{1}{2d} },\ 
S_2 = \set{ x \in \R^n \setminus S\, : \, \mu_X (S) < \frac{1}{2d} } \]
and
\[ S_3 = \R^n \setminus S_1 \setminus S_2. \]
For $u \in \R^n$, let $P_u$ be the distribution $\cN(\Lambda u,\Sigma)$.
For any $u \in S_1, v \in S_2$, we have
\[
\dtv(P_u, P_v) > 1 - \frac{1}{2d} - \frac{1}{2d} = 1- \frac{1}{d}.
\]
Therefore, by \prettyref{lem:coupling}, $ \|u - v\| > \sqrt{\eps\log d}$,
i.e., $d(S_1,S_2) > \sqrt{\eps\log d}$. 
Next we bound the measure of $S_3$. 
We can assume wlog that $\mu(S) \le \mu(\R^n\setminus S)$ and $\mu(S_1) \ge \mu(S)/2$ (else  $\mu(S_3) \ge \mu(S)/2$ and we are done).
Applying the isoperimetric inequaity for Gaussian space \cite{b75,st78}, for subsets at this distance,
\[
\mu(S_3) \ge \sqrt{\frac{2}{\pi}} \sqrt{\eps\log d}\cdot \mu(S_1) \mu(S_2) \ge \sqrt{\frac{\eps \log d}{2\pi}}\cdot \mu(S)\mu(\R^n \setminus S).
\]
We are now ready to complete the proof.
\begin{eqnarray*}
&&\frac{1}{2}\left(\int_{\R^n\setminus S} (1-(1-\mu_X(S))^d) \, d\mu(X) + \int_{S}(1 - (1-\mu_X(\R^n\setminus S)) \, d\mu(X)\right) \\
&\ge& \frac{1}{2} \left(\int_{X \in \R^n\setminus S, \mu_X(S) \ge 1/d}
(1-(1-\mu_X(S))^d) \, d\mu(X) + \int_{X \in S, \mu_X(\R^n\setminus S) \ge 1/d}(1 - (1-\mu_X(\R^n\setminus S)) \, d\mu(X)\right) \\
&\ge& \frac{e-1}{2e} \left(\int_{X \in \R^n\setminus S, \mu_X(S) \ge 1/d}  \, d\mu(X) + \int_{X \in S, \mu_X(\R^n\setminus X) \ge 1/d} \, d\mu(X)\right)\\
&\ge&\frac{e-1}{2e}\mu(S_3)\\
&\ge& c\sqrt{\eps\log d}\cdot \mu(S)\mu(\R^n\setminus S).
\end{eqnarray*} 
\end{proof}

We prove the following Theorem which helps us to bound the isoperimetry of the Gaussian graph for 
over all functions over the range $[0,1]$. 

\begin{theorem}
\label{thm:cont-to-bin}
Given an instance $(V,\cP)$ and a function $F : V \to [0,1]$,
 there exists a function $F' : V \to \set{0,1}$, such that
\[ \frac{\E_{ (X,Y_1, \ldots, Y_d) \sim \cP }  \max_i \Abs{ F(X) - F(Y_i)} }{ \E_{ X,Y \sim \mu}  \Abs{ F(X) - F(Y) } }  
\geq 
\frac{\E_{ (X,Y_1, \ldots, Y_d) \sim \cP }  \max_i \Abs{ F'(X) - F'(Y_i)} }{ \E_{ X,Y \sim \mu}  \Abs{ F'(X) - F'(Y) } }     \]
\end{theorem}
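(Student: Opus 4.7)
The plan is a threshold-rounding argument. Given $F : V \to [0,1]$, for each $t \in (0,1]$ I define the Boolean function $F_t(v) = \Ind[F(v) \geq t]$; the layer-cake identity $F(v) = \int_0^1 F_t(v)\,dt$ lets me lift statements about the $F_t$ to statements about $F$. The goal is to single out a specific threshold $t^{\star}$ so that $F' := F_{t^{\star}}$ is the desired Boolean witness.

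The first step is to establish the two key integral relations. Since $|F_t(x) - F_t(y)| = 1$ precisely when $t \in (\min(F(x), F(y)), \max(F(x), F(y))]$, integrating over $t$ gives the pointwise identity $|F(x) - F(y)| = \int_0^1 |F_t(x) - F_t(y)|\,dt$, and hence, after averaging over independent $X, Y \sim \mu$,
\[
\E_{X, Y \sim \mu} |F(X) - F(Y)| \;=\; \int_0^1 \E_{X, Y \sim \mu} |F_t(X) - F_t(Y)|\,dt,
\]
so the denominator of the ratio is preserved exactly. For the numerator, for a fixed tuple $(X, Y_1, \ldots, Y_d)$, the integral $\int_0^1 \max_i |F_t(X) - F_t(Y_i)|\,dt$ equals the Lebesgue measure of the union $\bigcup_i [\min(F(X), F(Y_i)), \max(F(X), F(Y_i))]$. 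Because every one of these intervals contains the common point $F(X)$, their union is a single interval of length $\max(F(X), \max_i F(Y_i)) - \min(F(X), \min_i F(Y_i))$, a quantity I can compare directly to $\max_i |F(X) - F(Y_i)|$ (with equality when all the $F(Y_i)$ lie on one side of $F(X)$).

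With these two relations in hand, I would conclude by an averaging/extremal argument: viewing $N(F)/D(F)$ as a weighted integral (with weights $D(F_t)\,dt$) of the ratios $N(F_t)/D(F_t)$, some $t^{\star} \in (0,1]$ must realize a ratio at most $N(F)/D(F)$, and $F' := F_{t^{\star}}$ is the desired witness. The main obstacle I anticipate is the numerator comparison: when the $F(Y_i)$ straddle $F(X)$, the union interval is strictly longer than $\max_i |F(X) - F(Y_i)|$, so the naive pointwise inequality runs in the direction opposite to the one needed for a clean weighted-average conclusion. Closing this gap is the technical heart of the argument, and likely requires either coupling the threshold to the argmax index $i^{\star}$ within each tuple, or tilting the distribution of $t$ away from uniform to cancel the excess contributed by "straddling" tuples, so that the resulting weighted inequality is sharp enough to extract a good $t^{\star}$.
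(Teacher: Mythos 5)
Your proposal follows the same threshold-rounding route as the paper, and your denominator identity and union-of-intervals computation are correct; but, as you yourself concede, the proof stops at exactly the decisive point, and neither of the repairs you gesture at (coupling $t^{\star}$ to the argmax index, or tilting the law of $t$) can succeed, because the exact statement is false as written. Take $V=\set{a,b,c}$, $d=2$, $F(a)=0$, $F(b)=1/2$, $F(c)=1$, and let $\cP$ be uniform on the four tuples $(b,a,c)$, $(b,c,a)$, $(a,b,b)$, $(c,b,b)$; all three marginals equal $\mu$ with $\mu(a)=\mu(c)=1/4$, $\mu(b)=1/2$, so this is a legitimate instance. Every tuple contributes $\max_i\Abs{F(X)-F(Y_i)}=1/2$, and $\E_{X,Y\sim\mu}\Abs{F(X)-F(Y)}=3/8$, so the ratio for $F$ is $4/3$; a direct check of all six nontrivial sets shows that \emph{every} Boolean $F'$ (not merely the thresholds $F_t$) has ratio exactly $2$. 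The straddling tuples you flagged (here $X=b$ with neighbours $a$ and $c$) are precisely what makes the claimed inequality fail, so the gap in your write-up is not closable as stated.

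It is worth comparing this with the paper's own proof, which closes the same step by asserting that if $\Abs{F(X)-F(Y_1)}\geq\Abs{F(X)-F(Y_i)}$ then $\Abs{F_r(X)-F_r(Y_1)}\geq\Abs{F_r(X)-F_r(Y_i)}$ for all $r$, so that the maximizing index can be pulled outside the integral over $r$. That assertion is false exactly in the straddling case (e.g.\ $F(X)=0.5$, $F(Y_1)=0.1$, $F(Y_i)=0.8$, $r=0.7$), so you have in fact located a genuine bug rather than a defect of your own argument. What can be salvaged--and what the applications actually need--is a factor-two version: your observation that the union of the intervals is a single interval containing $F(X)$ gives pointwise $\int_0^1\max_i\Abs{F_t(X)-F_t(Y_i)}\,dt=(\max_iF(Y_i)\vee F(X))-(\min_iF(Y_i)\wedge F(X))\leq 2\max_i\Abs{F(X)-F(Y_i)}$, and combined with the exact denominator identity and the averaging step you describe this yields that the ratio for $F$ is at least $\tfrac12\min_t$ of the ratio for $F_t$ (consistent with the $4/3$ versus $2$ gap above). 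Since \prettyref{cor:Gaussian} and \prettyref{prop:uniform} only use this theorem to transfer lower bounds up to unspecified constant factors, that weaker inequality suffices downstream; if you want a complete proof, prove and invoke the factor-two statement rather than the exact one.
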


\begin{proof}
For every $r \in [0,1]$, we define $F_r : V \to \set{0,1}$ as follows.
\[ F_r(X) = \begin{cases}   1 & F(X) \geq r \\ 0 & F(X) < r     \end{cases}  \]
Clearly, 
\[ F(X) = \int_0^1 F_r(X) dr \mper  \]
Now, observe that if $F(X) - F(Y) \geq 0$ then $F_r(X) - F_r(Y) \geq 0\ \forall r \in [0,1]$ 
and similiarly, if $F(X) - F(Y) < 0$ then $F_r(X) - F_r(Y) \leq 0\ \forall r \in [0,1]$.
Therefore, 
\[  \Abs{F(X) - F(Y) } = \Abs{ \int_0^1 \left( F_r(X) - F_r(Y) \right) dr } = \int_0^1 \Abs{F_r(X) - F_r(Y)} dr  \mper    \]
Also, observe that if $\Abs{ F(X) - F(Y_1) }  \geq \Abs{F(Y_i) - F(X) }$ then 
\[   \Abs{ F_r(X) - F_r(Y_1) }  \geq \Abs{F_r(Y_i) - F_r(X) }\  \forall r \in [0,1]  \]

Therefore, 
\begin{eqnarray*}
\frac{\E_{ (X,Y_1, \ldots, Y_d) \sim \cP }  \max_i \Abs{ F(X) - F(Y_i)} }{ \E_{ X,Y \sim \mu}  \Abs{ F(X) - F(Y) } }  
& = & 
\frac{\E_{ (X,Y_1, \ldots, Y_d) \sim \cP }  \max_i \int_0^1 \Abs{ F_r(X) - F_r(Y_i)} dr}
		{ \E_{ X,Y \sim \mu} \int_0^1 \Abs{ F_r(X) - F_r(Y)} dr }  \\
&=& \frac{\int_0^1 \left( \E_{ (X,Y_1, \ldots, Y_d) \sim \cP }  \max_i  \Abs{ F_r(X) - F_r(Y_i)} \right) dr}
		{ \int_0^1 \left( \E_{ X,Y \sim \mu} \Abs{ F_r(X) - F_r(Y)} \right) dr }  \\
& \geq & \min_{r \in [0,1] }
\frac{ \E_{ (X,Y_1, \ldots, Y_d) \sim \cP }  \max_i  \Abs{ F_r(X) - F_r(Y_i)}}{\E_{ X,Y \sim \mu} \Abs{ F_r(X) - F_r(Y)} }  \\
\end{eqnarray*}
Let $r'$ be the value of $r$ which minimizes the expression above. Taking $F'$ to be $F_{r'}$
finishes the proof.

\end{proof}

\begin{corollary}[Corollary to \prettyref{thm:Gaussian} and \prettyref{thm:cont-to-bin}]
\label{cor:Gaussian}
Let $F : V(\Gauss) \to [0,1]$ be any function. Then,    for some absolute constant $c$,  
\[  \frac{\E_{ (X,Y_1, \ldots, Y_d) \sim \Gaussp} \max_i \Abs{F(X) - F(Y_i) }}{\E_{X,Y \sim \mu} \Abs{F(X) - F(Y)}}
   \geq c\sqrt{\e \log d} \mper \]
\end{corollary}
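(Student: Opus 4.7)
The plan is to derive the corollary as an immediate combination of the two preceding results, Theorem~\ref{thm:Gaussian} and Theorem~\ref{thm:cont-to-bin}, applied to the Gaussian graph together with its canonical distribution $\Gaussp$. No new isoperimetric analysis is needed beyond what those theorems already provide.

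First, I would apply Theorem~\ref{thm:cont-to-bin} to the given function $F : V(\Gauss) \to [0,1]$ with the instance $(V(\Gauss), \Gaussp)$. This produces a $\set{0,1}$-valued function $F' : V(\Gauss) \to \set{0,1}$, i.e., the indicator of some measurable set $S \subset V(\Gauss)$, such that
\[
\frac{\E_{(X,Y_1,\ldots,Y_d) \sim \Gaussp} \max_i \Abs{F(X) - F(Y_i)}}{\E_{X,Y \sim \mu} \Abs{F(X) - F(Y)}}
\;\ge\;
\frac{\E_{(X,Y_1,\ldots,Y_d) \sim \Gaussp} \max_i \Abs{\is{X} - \is{Y_i}}}{\E_{X,Y \sim \mu} \Abs{\is{X} - \is{Y}}}.
\]
Here I am using the identification $F' = \is{\cdot}$ for the set $S = \set{x : F'(x) = 1}$, which is measurable (and may be taken closed up to a set of measure zero without changing either expectation, since $\mu$ is absolutely continuous).

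Second, I would apply Theorem~\ref{thm:Gaussian} directly to this set $S$. Since $\Lambda$ satisfies $\norm{\Lambda} \leq 1$ and $\Sigma \succeq \e I$ by the hypothesis on $\Gauss$, Theorem~\ref{thm:Gaussian} gives
\[
\frac{\E_{(X,Y_1,\ldots,Y_d) \sim \Gaussp} \max_i \Abs{\is{X} - \is{Y_i}}}{\E_{X,Y \sim \mu} \Abs{\is{X} - \is{Y}}} \;\ge\; c\sqrt{\e \log d}
\]
for an absolute constant $c$. Chaining the two inequalities yields the claimed bound on the ratio for $F$.

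There is no real obstacle here; the only minor point to be careful about is the measurability of the level set $S = \set{x : F(x) \ge r'}$ returned by Theorem~\ref{thm:cont-to-bin}, and the fact that Theorem~\ref{thm:Gaussian} is stated for closed sets. Since $F$ is assumed measurable, $S$ is measurable, and the Gaussian isoperimetric inequality used in the proof of Theorem~\ref{thm:Gaussian} applies to arbitrary measurable sets (closedness can be arranged up to a null set without altering $\mu$-measures or the relevant integrals), so the chaining goes through verbatim.
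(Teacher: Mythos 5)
Your proposal is correct and is exactly the argument the paper intends: the corollary is stated as an immediate consequence of \prettyref{thm:cont-to-bin} (reducing the $[0,1]$-valued $F$ to an indicator of a level set) followed by \prettyref{thm:Gaussian} applied to that set, with no further argument given in the paper. Your extra remark about measurability/closedness of the level set is a harmless refinement the paper leaves implicit.
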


%
%

\section{Dictatorship Testing Gadget}
\label{sec:gadget}

In this section we initiate the construction of the dictatorship
testing gadget for reduction from \sse.

Overall, the dictatorship testing gadget is obtained by picking an
appropriately chosen constant sized Markov-chain $H$, and considering
the product Markov chain $H^R$.  Formally, given a Markov chain $H$,
define an instance of $\analyticvsep$ with vertices as $V_H$ and the
constraints given by the following canonical probability distribution over
$V_H^{d+1}$.
\begin{itemize}
	\item Sample $X \sim \mu_H$, the stationary distribution of
		the Markov chain $V_H$.
	\item Sample  $Y_1,\ldots, Y_d$ independently from the
		neighbours of  $X$ in $V_H$
\end{itemize}

For our application, we use a specific Markov chain $H$ on four
vertices. 
Define a Markov chain $H$ on $V_H = \{s,t,t',s'\}$ as follows,$
p(s|s) =
p(s'|s') = 1-\frac{\epsilon}{1-2\epsilon}$, $p(t|s) = p(t'|s') =
\frac{\epsilon}{1-2\epsilon}$, $p(s|t) = p(s'|t') =
\frac{1}{2}$ and $p(t'|t) = p(t|t') = \frac{1}{2}$.  It is easy to see
that the stationary distribution of the Markov chain $H$ over $V_H$ is
given by,
\[  \mu_H(s) = \mu_H(s') = \frac{1}{2} - \e \qquad \qquad \mu_H(t) = \mu_H(t') = \e \]
From this Markov chain, construct a dictatorship testing gadget
$(V_H^R, \cP_{H}^R)$ as described above.   We begin by showing that
this dictatorship testing gadget has small vertex separators
corresponding to dictator functions.
%
\begin{proposition}[Completeness]
\label{prop:completeness}
For each $i \in [R]$, the $i^{th}$-dictator set defined as $F(x) = 1$
if $x_i \in \{s,t\}$ and $0$ otherwise satisfies,
$$ \var_1[F] = \frac{1}{2} \qquad \text{ and } \qquad
\val_{\cP_{H^R}}(F) \leq 2\epsilon
$$
%
\end{proposition}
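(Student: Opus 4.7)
The plan is to verify both claims by direct inspection of the product chain $H^R$, exploiting the fact that $F$ depends only on the $i$-th coordinate and that $H^R$ acts coordinatewise on the $R$ copies of $V_H$.

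For the variance claim, I would first note that under the stationary distribution $\mu_H^{\otimes R}$ of $H^R$, the marginal on coordinate $i$ is $\mu_H$. Since $\mu_H(s)=\mu_H(s')=1/2-\epsilon$ and $\mu_H(t)=\mu_H(t')=\epsilon$, we have $\mu_H(\{s,t\})=1/2$, so $\Pr[F(X)=1]=1/2$. As $F$ is $\{0,1\}$-valued, this immediately gives $\var_1[F] = 2\cdot (1/2)(1/2) = 1/2$.

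For the value bound, my key observation is that $|F(X)-F(Y_j)|$ can be nonzero only when the transition on coordinate $i$ crosses the bipartition $\{s,t\}\leftrightarrow\{s',t'\}$. Reading off the transition kernel of $H$: from $s$ the only neighbours are $s$ and $t$, and from $s'$ only $s'$ and $t'$. Thus the only states from which a single step of $H$ can cross the bipartition are $t$ and $t'$. This yields the pointwise bound
$$ \max_j |F(X)-F(Y_j)| \;\le\; \mathbf{1}[\, X_i \in \{t,t'\}\,]. $$
Taking expectations with respect to $X \sim \mu_H^{\otimes R}$ (which is precisely the marginal of the first coordinate of a sample from $\cP_{H^R}$), and using $\mu_H(\{t,t'\})=2\epsilon$, I conclude $\val_{\cP_{H^R}}(F) \le 2\epsilon$.

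I do not expect any real obstacle here: this proposition is essentially the statement that $\{t,t'\}$ plays the role of a small vertex separator in $H$ around which the chain was designed. The only routine verification needed is that $\mu_H$ is indeed the stationary distribution of $H$, which follows immediately by checking detailed balance from the transition probabilities $p(t\mid s)=\epsilon/(1-2\epsilon)$ and $p(s\mid t)=1/2$.
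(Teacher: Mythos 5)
Your proposal is correct and follows essentially the same route as the paper: both arguments rest on the observation that $F$ can differ across a step of the chain only when the $i$-th coordinate lies in $\{t,t'\}$, which has stationary mass $2\epsilon$, and on the computation $\mu_H(\{s,t\})=1/2$ for the variance. The only difference is that you spell out the pointwise bound $\max_j|F(X)-F(Y_j)|\le \Ind[X_i\in\{t,t'\}]$ and the stationarity check, which the paper leaves implicit.
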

\begin{proof}
Clearly, 
\[   \E_{X,Y \sim \mu_H}  \Abs{F(X) - F(Y)} = \frac{1}{2}  \]
Observe that for any choice of $(X, Y_1, \ldots, Y_d) \sim \cP_{H^R}$,
$\max_i \Abs{F(X) - F(Y_i)}$ is non-zero if and only if either $x_i =
t$ or $x_i = t'$.  Therefore we have,
\[ \E_{ (X, Y_1, \ldots, Y_d) \sim \cP_H } \max_i \Abs{F(X) - F(Y_i)} 
\leq \pr[x_i \in \{t,t'\}]) =  2\e \mcom \]
which concludes the proof.
%
\end{proof}

\subsection{Soundness}
We will show a general soundness claim that holds for dictatorship
testing gadgets $(V(H^R),\cP_{H^R})$ constructed out of arbitrary
Markov chains $H$ with a given spectral gap.  Towards formally stating
the soundness claim, we recall some background and notation about
polynomials over the product Markov chain $H^R$.

\subsection{Polynomials over $H^R$}
	In this section, we recall how functions over the product
	Markov chain $H^R$ can be written as multilinear polynomials
	over the eigenfunctions of $H$.

  Let $e_0, e_1, \ldots, e_{n} : V(H) \to \R$ be an orthonormal
  basis of eigenvectors of $H$ and let $\lambda_0, \ldots, \lambda_n$ be the corresponding eigenvalues.
  Here $e_0 = 1$ is the constant function whose eigenvalue $\lambda_0
  = 1$.  Clearly $e_0,\ldots, e_n$ form an orthonormal basis for the vector
space of functions from $V(H)$ to $\R$.  

  It is easy to see that the eigenvectors of the product chain $H^R$
  are given by products of $e_0,\dots, e_n$.  Specifically, the
  eigenvectors of $H^R$ are indexed by $\sigma \in [n]^R$ as follows,
  $$ e_{\sigma}(x) = \prod_{i = 1}^R e_{\sigma_i}(x_i)$$
  Every function $f: H^R \to \R$ can be written in this orthonormal basis
  $ f(x)= \sum_{\sigma \in [n]^R} \hat{f}_{\sigma} e_{\sigma}(x)$.
  For a multi-index $\sigma \in [n]^R$, the function $e_{\sigma}$ is
   a monomial of degree $|\sigma| = |\{i |
  \sigma_i \neq 0\}|$.  
  
  For a polynomial $Q = \sum_{\sigma} \hat{Q}_{\sigma} e_{\sigma}$,
  the polynomial $Q^{> p}$ denotes the projection on to degrees higher
  than $p$, i.e., $Q^{> p} = \sum_{\sigma, |\sigma|>p}
  \hat{Q}_{\sigma} e_{\sigma}$.
  The influences of a polynomial $Q = \sum_{\sigma} \hat{Q}_{\sigma}$
  are defined as,
  $$ \Inf_i(Q) = \sum_{\sigma: \sigma_i \neq 0}
  \hat{Q}_{\sigma}^2$$
  The above notions can be naturally extended to vectors of
  multilinear polynomials $Q = (Q_0, Q_1, \ldots, Q_d)$.

  Note that every real-valued function on the vertices $V(H)$ of a Markov chain
  $H$ can be thought of as a random variable.  For each $i > 0$, the
  random variable $e_i(x)$ has mean zero and variance $1$.  The same
  holds for all $e_{\sigma}(x)$ for all $|\sigma| \neq 0$.  For a
  function $Q : V(H^R) \to \R$ (or equivalently a polynomial),
  $\var[Q]$ denotes the variance of the random variable $Q(x)$ for a
  random $x$ from stationary distribution of $H^R$.  It is an easy
  computation to check that this is given by,
  $$ \var[Q] = \sum_{\sigma: |\sigma| \neq 0}
  \hat{Q}_{\sigma}^2$$
 
  We will make use of the following Invariance Principle due to Isaksson and Mossel \cite{im12}.
\begin{theorem}[\cite{im12}]
\label{thm:invar}
Let $X = (X_1, \ldots ,X_n)$ be an independent sequence of ensembles, 
such that $\Pr{X_i = x} \geq \alpha > 0, \forall i, x$. Let $Q$ be a $d$-dimensional multilinear polynomial such that 
$\var(Q_j(X)) \leq 1$,  $\var(Q_j^{>p}) \leq (1 - \e \eta)^{2p}$  and
$\Inf_i (Q_j) \leq \tau$ where $p = \frac{1}{18} \log (1 / \tau)/\log (1/\alpha)$. 
Finally, let $\psi : \R^k \to \R$ be Lipschitz continuous. Then,
\[ \Abs{  \Ex{ \psi(Q(X) )}  - \Ex{\psi(Q(Z))} } = \bigO{ \tau^{ \frac{\e \eta}{18}/ \log \frac{1}{\alpha} } }  \]
where $Z$ is an independent sequence of Gaussian ensembles with the same covariance structure as $X$.

\end{theorem}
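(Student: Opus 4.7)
}
The result is attributed to Isaksson and Mossel \cite{im12}, and my plan is to reconstruct a proof following the classical Lindeberg replacement strategy of Mossel--O'Donnell--Oleszkiewicz, adapted to finite but non-uniform ensembles arising from an eigenbasis of the Markov chain. The high-level outline is to replace the $X_i$'s by Gaussian counterparts $Z_i$'s one coordinate at a time, control the per-swap error by Taylor expansion and matching of first two moments, and then use hypercontractivity on the low-degree part combined with the spectral truncation hypothesis on the high-degree part.

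The first step is to set up a hybrid sequence $W^{(0)} = X$, $W^{(i)} = (Z_1,\ldots,Z_i,X_{i+1},\ldots,X_n)$, $W^{(n)} = Z$, and telescope:
\[
\Abs{\E[\psi(Q(X))] - \E[\psi(Q(Z))]} \leq \sum_{i=1}^n \Abs{\E[\psi(Q(W^{(i-1)}))] - \E[\psi(Q(W^{(i)}))]}.
\]
Exploiting multilinearity, for each $i$ and each coordinate $j$ I would decompose $Q_j = R_{j,i} + U_i \cdot S_{j,i}$, where $R_{j,i}$ and $S_{j,i}$ are independent of the $i$-th ensemble $U_i$. Taylor-expanding $\psi$ in the $i$-th slot around $R_{j,i}$ up to third order, the zeroth-, first-, and second-order terms cancel between the $X_i$- and $Z_i$-swaps because $X_i$ and $Z_i$ are matched in mean and covariance. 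The remainder term is bounded, via Lipschitz continuity of $\psi$, by the product of third absolute moments of $X_i,Z_i$ and the $\ell_3$-norm of $S_{j,i}$.

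The second step is to handle $\sum_i \| S_{j,i} \|_3^3$. I would split $Q_j = Q_j^{\le p} + Q_j^{>p}$. For the tail part, the hypothesis $\var(Q_j^{>p}) \leq (1-\eps\eta)^{2p}$ combined with Lipschitzness contributes a global error of order $(1-\eps\eta)^p$. For the low-degree part, hypercontractivity for multilinear polynomials over $\alpha$-reasonable finite probability spaces (Wolff's generalization of Bonami--Beckner) gives $\|S_{j,i}\|_3 \leq \alpha^{-O(p)} \|S_{j,i}\|_2$. Using $\|S_{j,i}\|_2^2 = \Inf_i(Q_j^{\le p})$, Hölder's inequality, the total-influence bound $\sum_i \Inf_i(Q_j^{\le p}) \leq p \cdot \var(Q_j^{\le p}) \leq p$, and the per-coordinate bound $\Inf_i(Q_j) \leq \tau$, one obtains a sum bounded by $\alpha^{-O(p)} \tau^{1/2} \cdot p$.

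The final step is to optimize $p$: setting $p = \tfrac{1}{18} \log(1/\tau)/\log(1/\alpha)$ as prescribed balances the hypercontractive loss $\alpha^{-O(p)} \tau^{1/2}$ against the tail error $(1-\eps\eta)^p$, which yields the claimed bound $\bigO{\tau^{\eps\eta/(18 \log(1/\alpha))}}$. I expect the main obstacle to be the hypercontractivity step: since the $X_i$ arise from eigenbases of a non-uniform Markov chain rather than from a symmetric product distribution, one cannot invoke Bonami--Beckner directly and must instead rely on the $\alpha$-dependent version, which is precisely what forces the $\log(1/\alpha)$ factor into the exponent of the final bound.
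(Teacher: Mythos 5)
You should first note that the paper does not prove this statement at all: \prettyref{thm:invar} is imported verbatim from Isaksson and Mossel \cite{im12}, so there is no in-paper proof to compare against. Judged as a reconstruction of the cited result, your outline follows the same route that \cite{im12} actually takes (the Mossel--O'Donnell--Oleszkiewicz Lindeberg hybrid argument, swapping one ensemble at a time, matching first and second moments, hypercontractivity over $\alpha$-reasonable finite spaces to control third moments of the low-degree part, and the spectral decay hypothesis $\var(Q_j^{>p}) \leq (1-\e\eta)^{2p}$ to kill the high-degree tail), and your balancing of $p$ against the two error sources reproduces the claimed exponent $\tau^{\e\eta/(18\log(1/\alpha))}$ correctly.

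There is, however, one genuine gap in the write-up: $\psi$ is only assumed Lipschitz, so you cannot Taylor-expand it to third order and you cannot bound the third-order remainder ``via Lipschitz continuity'' --- Lipschitz continuity gives first-order control only, and the per-swap cancellation of second-order terms requires bounded second and third derivatives of the test function. The standard fix, and the one used in \cite{im12}, is to first mollify $\psi$ (convolve with a Gaussian of small bandwidth, or otherwise replace it by a $\cC^3$ approximation with controlled derivative bounds), run the replacement argument for the smoothed functional, and then account separately for the smoothing error, which must be folded into the same trade-off that determines the final exponent; your proposal omits this step and the bookkeeping it entails. A second, more minor point: each ``coordinate'' being swapped is an entire ensemble of correlated random variables (the eigenfunctions evaluated at the $d+1$ correlated query points), so the decomposition $Q_j = R_{j,i} + U_i \cdot S_{j,i}$ with a single variable $U_i$ is too coarse --- you need the multi-dimensional version in which $Q_j$ is expanded over all basis elements of the $i$-th ensemble, with the moment-matching applied to the full within-ensemble covariance structure. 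Both issues are repairable along the lines of \cite{im12}, but as written the proposal does not yet constitute a proof.
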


\subsection{Noise Operator}
We define a noise operator $\Gamma_{1 - \eta}$ on functions on the
Markov chain $H$ as follows :  
\[ \Gamma_{1 - \eta}F (X) \defeq (1 - \eta) F(X) + \eta \E_{Y \sim X} F(Y)   \]
for every function $F: H \to \R$.  Similarly, one can define the noise
operator $\Gamma_{1-\eta}$ on functions over $H^R$.

Applying the noise operator
$\Gamma_{1-\eta}$ on a function $F$, smoothens the function or makes
it closer to a low-degree polynomial.  This resulting function
$\Gamma_{1-\eta} F$ is close to a {\it low-degree polynomial}, and
therefore is amenable to applying an invariance principle.  Formally,
one can show the following decay of coefficients of high degree for
$\Gamma_{1-\eta} F$. We defer the proof to the Appendix (\prettyref{lem:lowdeg}).
\begin{lemma}
\label{lem:lowdeg1}
(Decay of High degree Coefficients)
Let $Q_j$ be the multi-linear polynomial representation of $\Gamma_{1
- \eta}F(X)$, and let $\e$ be the spectral gap of the Markov chain $H$.
Then, 
\[ \var(Q_j^{>p}) \leq (1 - \e \eta)^{2p} \]
\end{lemma}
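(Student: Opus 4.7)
The plan is to diagonalize $\Gamma_{1-\eta}$ in the eigenbasis of $H^{R}$ and then read off the claimed decay of high-degree coefficients directly from the spectral gap of $H$.

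First, I would expand $F : V(H)^{R} \to \R$ in the product eigenbasis $\{e_{\sigma}\}_{\sigma \in [n]^{R}}$ as $F = \sum_{\sigma} \hat{F}_{\sigma} e_{\sigma}$, and interpret the noise operator on $H^{R}$ as the coordinatewise tensor power $\Gamma_{1-\eta} = \bigl((1-\eta) I + \eta\, T_{H}\bigr)^{\otimes R}$, where $T_{H}$ is the Markov averaging operator of $H$. This is the natural extension of the single-coordinate definition given in the text, and it is precisely the operator form required for \prettyref{thm:invar} to be applicable in the sequel.

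Second, each basis monomial $e_{\sigma}$ is an eigenvector of $\Gamma_{1-\eta}$ with eigenvalue
\[
\mu_{\sigma} \;\defeq\; \prod_{i=1}^{R} \bigl((1-\eta) + \eta\,\lambda_{\sigma_{i}}\bigr).
\]
Because $\lambda_{0} = 1$, every coordinate with $\sigma_{i} = 0$ contributes a factor of $1$ to $\mu_{\sigma}$, while every coordinate with $\sigma_{i} \neq 0$ contributes a factor of magnitude at most $(1-\eta) + \eta(1-\epsilon) = 1 - \epsilon\eta$, using the spectral gap bound $|\lambda_{j}| \le 1-\epsilon$ for $j \neq 0$. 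Hence $|\mu_{\sigma}| \le (1-\epsilon\eta)^{|\sigma|}$.

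Third, writing $Q = \Gamma_{1-\eta} F = \sum_{\sigma} \mu_{\sigma} \hat{F}_{\sigma}\, e_{\sigma}$ and invoking Parseval on the truncation to degrees above $p$ yields
\[
\var(Q^{> p}) \;=\; \sum_{|\sigma|>p} \mu_{\sigma}^{2}\,\hat{F}_{\sigma}^{2} \;\le\; (1-\epsilon\eta)^{2p} \sum_{|\sigma|>p} \hat{F}_{\sigma}^{2} \;\le\; (1-\epsilon\eta)^{2p}\,\var(F),
\]
which gives the stated bound under the ambient normalization $\var(F) \le 1$. The same calculation applies componentwise to each entry $Q_{j}$ of the vector polynomial.

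The only real obstacle is interpretive rather than computational: one must commit to the tensor-power reading of "$\Gamma_{1-\eta}$ on $H^{R}$" rather than a single-step average along the product chain. Under the single-step reading, the eigenvalue of $e_{\sigma}$ would contract only by $(1-\eta) + \eta \prod_{i} \lambda_{\sigma_{i}}$, which for fixed $\eta < 1$ stays bounded away from zero and so would not produce exponential decay in $|\sigma|$. Once the tensor-power interpretation is fixed, the rest is a one-line eigenvalue computation combined with Parseval.
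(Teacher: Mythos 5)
Your proof is correct and is essentially the paper's own argument: the paper likewise expands $F$ in the product eigenbasis, bounds the per-coordinate contraction by $(1-\eta)+\eta\lambda_{\sigma_i}\le 1-\e\eta$ for $\sigma_i\neq 0$, and concludes via Parseval together with the normalization $\var(F)\le 1$. Your interpretive remark is apt — the paper writes the one-step definition of $\Gamma_{1-\eta}$ but its calculation factors coordinatewise, i.e.\ it implicitly uses exactly the tensor-power reading you commit to, which is indeed what yields decay exponential in $\abs{\sigma}$.
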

Furthermore, on applying the noise operator $\Gamma_{1-\eta}$, the
resulting function $\Gamma_{1-\eta} F$ can have a bounded number of
influential coordinates as shown by the following lemma.
\begin{lemma}(Sum of Influences Lemma)
\label{lem:sum-of-influences1}
	If the spectral gap of a Markov chain is at least $\e$ then for any function $F: V_H^R \to \R$, 
	$$ \sum_{i \in [R]} \Inf_i(\Gamma_{1-\eta} F) \leq
	\frac{1}{\eta \epsilon} \var[F] $$
\end{lemma}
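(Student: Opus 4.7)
The plan is to work in the eigenbasis of $H^R$ and exploit the fact that the noise operator $\Gamma_{1-\eta}$ acts diagonally on this basis. Write $F = \sum_{\sigma \in [n]^R} \hat{F}_\sigma e_\sigma$. Because $\Gamma_{1-\eta}$ is the natural tensor extension of the single-coordinate noise operator (the same one used in \prettyref{lem:lowdeg1}), each eigenvector $e_\sigma = \prod_j e_{\sigma_j}$ is an eigenvector of $\Gamma_{1-\eta}$ with eigenvalue
\[
c_\sigma \;=\; \prod_{j\,:\,\sigma_j \neq 0}\bigl((1-\eta) + \eta\lambda_{\sigma_j}\bigr).
\]
Since the spectral gap of $H$ is at least $\epsilon$, each nontrivial eigenvalue satisfies $\lambda_{\sigma_j} \le 1-\epsilon$, so every factor above is at most $1-\eta\epsilon$. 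Consequently $c_\sigma \le (1-\eta\epsilon)^{|\sigma|}$.

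Next I would expand the total influence directly from the definition. Using $\Gamma_{1-\eta} F = \sum_\sigma c_\sigma \hat{F}_\sigma e_\sigma$ and the formula $\Inf_i(Q) = \sum_{\sigma\,:\,\sigma_i \neq 0} \hat{Q}_\sigma^2$,
\[
\sum_{i=1}^R \Inf_i(\Gamma_{1-\eta} F) \;=\; \sum_\sigma |\sigma|\,c_\sigma^2\,\hat{F}_\sigma^2 \;\le\; \sum_{\sigma\,:\,|\sigma|\ge 1} |\sigma|\,(1-\eta\epsilon)^{2|\sigma|}\,\hat{F}_\sigma^2.
\]

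It then remains to show that $\sup_{k\ge 1} k(1-\eta\epsilon)^{2k} \le \frac{1}{\eta\epsilon}$. Using $(1-\eta\epsilon)^{2k} \le e^{-2k\eta\epsilon}$ and maximizing $k e^{-2k\eta\epsilon}$ over $k > 0$ (the optimum occurring at $k = 1/(2\eta\epsilon)$) gives the bound $\frac{1}{2e\,\eta\epsilon} < \frac{1}{\eta\epsilon}$. Plugging this back in and noting $\sum_{|\sigma|\ge 1}\hat{F}_\sigma^2 = \var[F]$ yields the claim.

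This proof is essentially a one-paragraph Fourier calculation; there is no serious obstacle. The only point that requires minor care is interpreting the noise operator on $H^R$ as a coordinate-wise tensor product (which is forced by the statement of \prettyref{lem:lowdeg1}), so that $e_\sigma$ is an eigenvector with the multiplicative eigenvalue $c_\sigma$ displayed above rather than the (much larger) eigenvalue one would get by treating $H^R$ as a single chain.
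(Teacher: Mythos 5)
Your proposal is correct and follows essentially the same route as the paper: expand in the eigenbasis of $H^R$, use the coordinate-wise action of $\Gamma_{1-\eta}$ to get the decay $\hat{Q}_\sigma^2 \le (1-\eta\epsilon)^{2|\sigma|}\hat{F}_\sigma^2$, and bound $\sup_{k\ge 1} k(1-\eta\epsilon)^{2k}$ by $\tfrac{1}{\eta\epsilon}$. Your reading of the noise operator as a tensor of single-coordinate operators matches the paper's own computation in its appendix (the proof of \prettyref{lem:lowdeg}), and your calculus bound via $e^{-2k\eta\epsilon}$ even gives the slightly sharper constant $\tfrac{1}{2e\eta\epsilon}$.
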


\begin{proof}
	By suitable normalization, we may assume without loss of
	generality that $\var[F] = 1$.
	If $Q$ denotes the multilinear representation of
	$\Gamma_{1-\eta} F$, then the
	sum of influences can be written as,
	\begin{align*}
	\sum_{i \in [R]} \Inf_i(\Gamma_{1-\eta} F) 
	& \leq \sum_{|\sigma| \neq 0 } |\sigma| \hat{Q}_{\sigma}^2 \\
	& \leq \sum_{|\sigma| \neq 0 } |\sigma| (1-\eta\epsilon)^{2
	|\sigma|} \hat{F}_{\sigma}^2 \\
	& \leq \left(\max_{k \in \N} k (1-\eta\epsilon)^{2k}\right)
	\sum_{|\sigma| \neq 0} \hat{F}_{\sigma}^2 
	 < \frac{1}{\eta \epsilon} 
\end{align*}
where we used the fact that the function $h(t) = t (1-\eta
\epsilon)^{2t}$ achieves its maximum value at $t = -\frac{1}{2}
\ln(1-\eta \epsilon)$.
\end{proof}

\subsection{Soundness Claim}
Now we are ready to formally state our soundness claim for a
dictatorship test gadget constructed out of a Markov chain.  
\begin{proposition}[Soundness]
\label{prop:soundgadget}
For all $\epsilon, \eta, \alpha, \tau > 0$ the following holds. 
Let $H$ be a finite Markov-chain with a spectral gap of at least $\e$, and the probability of every
state under
stationary distribution is $\geq \alpha$.
Let $F : V(H^R) \to \set{0,1}$ be a function such that $\max_{i
\in [R]} \Inf_i(\Gamma_{1-\eta} F)
\leq \tau$. Then we have 
\[ \E_{ (X,Y_1, \ldots, Y_d) \sim \cP_{H^R}} [ \max_{i}
	\Abs{F(Y_i) - F(X)}] \geq  \Omega( \sqrt{\e \log d }) \E_{X,Y
		\sim \mu_{H^R}} \Abs{F(X) - F(Y)} - O(\eta) - \tau^{\Omega(\eps\eta/\log(1/\alpha))} \]
\end{proposition}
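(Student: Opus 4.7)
The plan is to follow the standard smoothing plus invariance plus Gaussian isoperimetry pipeline. Let $\tilde F \defeq \Gamma_{1-\eta} F$ denote the smoothed function, which takes values in $[0,1]$. Since $F$ is $\{0,1\}$-valued, the definition $\tilde F(X) = (1-\eta) F(X) + \eta \E_{Y \sim X} F(Y)$ gives $\Abs{\tilde F - F} \le \eta$ pointwise, so replacing $F$ by $\tilde F$ in both the numerator $\E\Brac{\max_i \Abs{F(Y_i) - F(X)}}$ and the denominator $\E \Abs{F(X) - F(Y)}$ introduces only additive $O(\eta)$ errors. It therefore suffices to prove the soundness inequality for $\tilde F$.

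Next I would invoke the invariance principle (\prettyref{thm:invar}) to transfer the problem to the Gaussian setting. Writing $\tilde F$ as a multilinear polynomial in an orthonormal eigenbasis of $H$, I apply invariance coordinate-by-coordinate over $[R]$ to the $(d+1)$-dimensional vector of polynomials $Q = (Q_0, Q_1, \ldots, Q_d)$ defined by $Q_0 = \tilde F(X)$ and $Q_j = \tilde F(Y_j)$, with the $O(1)$-Lipschitz test function $\psi(q_0, \ldots, q_d) = \max_j \Abs{q_j - q_0}$; crucially the Lipschitz constant of $\psi$ is independent of $d$. The hypothesis $\max_i \Inf_i(\tilde F) \le \tau$, together with the decay $\var(\tilde F^{>p}) \le (1-\epsilon\eta)^{2p}$ from \prettyref{lem:lowdeg1} and the choice $p = \Theta(\log(1/\tau)/\log(1/\alpha))$, yields an approximation error of $\tau^{\Omega(\epsilon\eta/\log(1/\alpha))}$ when the discrete ensemble is replaced by its Gaussian analog $G$.

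The Gaussian ensemble produced by invariance inherits the covariance structure of $H^R$: the pair $(G(X), G(Y_j))$ corresponds to a coupling with $Y_j \sim \cN(\Lambda X, \Sigma)$ where the diagonal matrix $\Lambda$ has operator norm at most $1-\epsilon$ (since the nontrivial eigenvalues of $H$ are at most $1-\epsilon$ by the spectral gap assumption), and consequently $\Sigma \succeq (1-(1-\epsilon)^2) I \succeq \epsilon I$. This matches the setting of $\Gauss$. Truncating $G$ to $[0,1]$ via $\bar G = \max(0, \min(1, G))$ is a $1$-Lipschitz operation, so incurs only an additional $\tau^{\Omega(\cdot)}$ error, and \prettyref{cor:Gaussian} applied to $\bar G$ yields
\[
\E\Brac{\max_i \Abs{\bar G(Y_i) - \bar G(X)}} \ge c\sqrt{\epsilon \log d}\cdot \E \Abs{\bar G(X) - \bar G(Y)}.
\]
Chaining this with the invariance and smoothing estimates produces the desired bound on $\E\Brac{\max_i \Abs{F(Y_i) - F(X)}}$.

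The main obstacle I anticipate is the bookkeeping around the invariance principle: one must verify that the Gaussian ensemble it produces really fits into the $\Gauss$ framework with $\norm{\Lambda} \le 1$ and $\Sigma \succeq \epsilon I$, i.e.\ that the spectral gap of $H$ is faithfully transferred to a uniform covariance lower bound through the multilinear representation and the product structure across the $R$ coordinates. A secondary subtlety is controlling the truncation $\bar G$ so that it does not contract the denominator $\E \Abs{\bar G(X) - \bar G(Y)}$ appreciably; since $\tilde F$ itself lies in $[0,1]$ and $\psi$ is Lipschitz, the discrepancy between $G$ and $\bar G$ in the quantities of interest can be absorbed into another application of the invariance principle error estimate.
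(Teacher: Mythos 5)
Your proposal is correct and follows essentially the same route as the paper's proof: smooth with $\Gamma_{1-\eta}$ (losing $O(\eta)$ in numerator and denominator), apply the Isaksson--Mossel invariance principle to the $(d+1)$-dimensional polynomial vector with a Lipschitz max-type test function, invoke the Gaussian isoperimetry result with $\Sigma \succeq \epsilon I$ coming from $1-\lambda_i^2 \ge 2\epsilon - \epsilon^2$, and then apply invariance a second time to transfer the denominator back to $H^R$, accumulating the $\tau^{\Omega(\epsilon\eta/\log(1/\alpha))}$ errors. The only presentational difference is that the paper folds the clamp $s$ directly into the Lipschitz test functions $\Psi$ (so no separate truncation step is needed), whereas you clamp the Gaussian-side function afterwards and handle the resulting denominator subtlety by the same second invariance application.
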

For the sake of brevity, we define $\sound(V(H^R),\cP_{H^R}) $ to be the  following : 
\begin{definition}
	\[ \sound(V(H^R),\cP_{H^R}) \defeq \min_{F : \max_{i\in [R]} \Inf_i(F) \leq \tau  }  
 \frac{\E_{ (X,Y_1, \ldots, Y_d) \sim \cP_{H^R}} [ \max_{i}
	 \Abs{F(Y_i) - F(X)}]}{ \E_{X,Y \sim \mu_{H^R})} \Abs{F(X) - F(Y)}}
 \]
\end{definition}
In the rest of the section, we will present a proof of
\prettyref{prop:soundgadget}.  First, we construct gaussian random
variables with moments matching the eigenvectors of the chain $H$.


%

\paragraph{Gaussian Ensembles}

 Let $Q = (Q_0, Q_1,\ldots, Q_d)$ be the multi-linear polynomial
representation of the vector-valued function $\left(\Gamma_{1 - \eta} F(X), \Gamma_{1-\eta} F(Y_1), \ldots, \Gamma_{1 - \eta} F(Y_d) \right)$.
Let $E$ denote the ensemble of $nd$ random variables $(e_0(X),
e_1(X),\ldots, e_n(X)), (e_{0}(Y_1),\ldots, e_n(Y_1)),\ldots,
(e_{0}(Y_d),\ldots, e_n(Y_d))$.  
Let $E_1,\ldots, E_R$ be $R$ independent copies of the ensemble $E$.
Clearly, the polynomial $Q$ can be thought of as a  polynomial over
$E_1,\ldots, E_R$.  For each random variable $x$ in $E_1,\ldots, E_R$ and a value $\beta$ in
its support, $\Pr{x = \beta}$ is at least the minimum probability of a
vertex in $H$ under its stationary distribution.

This polynomial $Q$ satisfies the requirements of \prettyref{thm:invar} because
on the one hand, the influences of $F$ are $\leq \tau$ and on the
other by \prettyref{lem:lowdeg1}, $\var(Q^{\geq p}) \leq (1-\e
\eta)^{2p}$.  Now we will apply the invariance principle to relate the
soundness to the corresponding quantity on the gaussian graph, and
then appeal to the isoperimetric result on the Gaussian graph
(\prettyref{thm:Gaussian}).

The invariance principle translates the polynomial
$(Q_0(X),Q_1(Y_1),\ldots Q_d(Y_d))$ on the sequence of
independent ensembles $E_1,\dots, E_R$, to a polynomial on a corresponding sequence of
gaussian ensembles with the same moments up to degree two. 

Consider the ensemble $E$.  For each $i \neq 0$, the expectation
$\E[e_i(X)] = \E[e_i(Y_1) = 0] = \ldots \E[e_i(Y_d)] =0$.  For each
$i \neq j$, it is easy to see that,
$\E[e_i(X) e_j(X)] = \E[e_i(Y_1)e_j(Y_1)] =  \ldots
\E[e_i(Y_d)e_j(Y_d] = 0$.  Moreover, $\E[e_i(X) e_j(Y_a)] =
\E[e_{i}(Y_a) e_j(Y_b)] = 0$ whenever $i \neq j$ and all $a,b \in \{1,\ldots d\}$.
The only non-trivial correlations are $\E[e_{i}(X)e_i(Y_a)]$ and
$\E[e_i(Y_a)e_i(Y_b)]$ for all $i \in [n]$ and $a,b \in [d]$.  It is
easy to check that 
$$ \E[e_i(X) e_i(Y_a)] = \lambda_i  \qquad \qquad \E[e_i(Y_a)
e_i(Y_b)] = \lambda_i^2$$
From the above discussion, we see that the following gaussian ensemble
$z = (z_X, z_{Y_1}, \ldots, z_{Y_d})
$ has the same covariance as the ensemble $E$.

\begin{enumerate}
\item Sample $z_X$  and $n$-dimensional Gaussian random vector.

\item Sample $z_{Y_1}, \ldots, z_{Y_d} \in \R^n$ i.i.d as follows :
 The $i^{th}$ coordinate of each $z_{Y_a}$ is sampled from $ \lambda_i
 z_X(i) + \sqrt{1-\lambda_i^2} \xi_{a,i}$ where $\xi_{a,i}$ is a Gaussian random variable
independent of $z_X$ and all other $\xi_{a,i}$.
\end{enumerate}
Let $Z_X, Z_{Y_1},\ldots, Z_{Y_d} \in \R^{nR}$ be the ensemble
obtained by $R$ independent samples from $z_{X}, z_{Y_1},\ldots,
z_{Y_d}$.

Let $\Sigma$ denote the $nR \times nR$ diagonal matrix whose entries
are $1-\lambda^2_1,
\ldots, 1-\lambda^2_n$ repeated $R$ times.  Since the spectral gap of
$H$ is $\e$, we have that $1 - \lambda^2_{i} \geq 2\e-\epsilon^2 >
\epsilon$ for all  $i \in \{1,\ldots, n\}$.  Therefore, we have
$\Sigma > \epsilon I$.

%
%

\paragraph{Proof of soundness}

Now we return to the proof of the main soundness claim for the
dictatorship testing gadget ($V(H^R)$, $\cP_{\cH^R}$) constructed out
an arbitrary Markov chain.
\begin{proof}[Proof of \prettyref{prop:soundgadget}]
 Let $Q = (Q_0, Q_1,\ldots, Q_d)$ be the multi-linear polynomial
representation of the vector-valued function $\left(\Gamma_{1 - \eta} F(X), \Gamma_{1-\eta} F(Y_1), \ldots, \Gamma_{1 - \eta} F(Y_d) \right)$.

Define a function $s : \R \to \R$ as follows
  $$ s(x) = \begin{cases} 0 & \text{ if } x < 0 \\ x & \text{ if } x
	  \in [0,1] \\ 1 & \text{ if } x > 1 \end{cases} $$
  Define a function $\Psi : \R^{d+1} \to \R$ as, $\Psi(x,y_1,\ldots,
  y_d) = \max_{i} |s(y_i) - s(x)|$.  Clearly, $\Psi$ is a Lipshitz function
  with a constant of 1.

  Using the fact that $F$ is bounded in $[0,1]$,
  \begin{eqnarray} \label{eq:num1}
\E_{(X,Y_1, \ldots, Y_d) \sim \cP_{H^R} } \max_a  \Abs{F(X) - F(Y_a) }
\geq 
\E_{(X,Y_1, \ldots, Y_d) \sim \cP_{H^R} } \max_a  \Abs{\Gamma_{1 -
\eta}F(X) - \Gamma_{1 - \eta}F(Y_a) } - 2\eta
  \end{eqnarray}	 
  Furthermore, since $\Gamma_{1-\eta} F$ is also bounded in $[0,1]$,
  we have $s(\Gamma_{1-\eta} F) = \Gamma_{1-\eta} F$.  Therefore,
  \begin{eqnarray}
\E_{(X,Y_1, \ldots, Y_d) \sim \cP_{H^R} } \max_a  \Abs{\Gamma_{1 -
\eta}F(X) - \Gamma_{1 - \eta}F(Y_a) } = \E_{(X,Y_1, \ldots, Y_d) \sim
	\cP_{H^R} } \max_a  \Abs{s \left( \Gamma_{1 -
\eta}F(X)\right) - s\left(\Gamma_{1 - \eta}F(Y_a)\right) } 
  \end{eqnarray}	 
  Apply the invariance principle to the polynomial $Q =
  \left(\Gamma_{1 - \eta} F, \Gamma_{1-\eta} F, \ldots, \Gamma_{1 -
  \eta} F \right)$ and Lipshitz function $\Psi$.  
  By invariance principle \prettyref{thm:invar}, we get
\begin{align}
\E_{(X,Y_1, \ldots, Y_d) \sim \cP_{H^R} } \max_a
 & \Abs{s\left(\Gamma_{1 - \eta}F(X)\right) - s\left(\Gamma_{1 -
\eta}F(Y_a)\right) } \nonumber \\ 
& \geq  \E_{(Z_X ,Z_{Y_1}, \ldots, Z_{Y_d}) \sim \Gaussp} \max_a
 \Abs{s\left(\Gamma_{1 - \eta}F(Z_X)\right) - s\left(\Gamma_{1 - \eta}
 F(Z_{Y_a})\right)} -
 \tau^{\Omega(\epsilon\eta/\log(1/\alpha))} 
\end{align}
Observe that $s \circ (\Gamma_{1-\eta} F)$ is bounded in $[0,1]$ even
over the gaussian space.  Hence, by using the isoperimetric result on gaussian graphs
(\prettyref{cor:Gaussian}), we know that
\begin{eqnarray}
\E_{(Z_X ,Z_{Y_1}, \ldots, Z_{Y_d}) \sim \Gaussp} \max_a
 \Abs{s\left(\Gamma_{1 - \eta}F(Z_X)\right) - s\left(\Gamma_{1 -
 \eta} F(Z_{Y_a})\right)} \geq 
 c \sqrt{ \e \log d} \E_{Z_X,Z_Y \sim \mu_{\Gauss}} \Abs{s\left(\Gamma_{1-\eta}
 F(Z_X)\right) - s\left(\Gamma_{1-\eta}F(Z_Y)\right) }
\end{eqnarray}  
Now we apply the invariance principle on the polynomial $(\Gamma_{1-\eta} F,
\Gamma_{1 - \eta} F)$ and the functional $\Psi : \R^2 \to \R$ given by
$\Psi(a,b) = |s(a)-s(b)|$.  This yields,
\begin{eqnarray}
\E_{Z_X,Z_Y \sim \mu_{\Gauss}} \Abs{s\left(\Gamma_{1-\eta}
F(Z_X)\right) - s\left(\Gamma_{1-\eta}F(Z_Y)\right) } 
& \geq \E_{X,Y \sim \mu(H^R)}
\Abs{s\left(\Gamma_{1-\eta} F(X)\right) -
s\left(\Gamma_{1-\eta} F(Y)\right)} -
\tau^{\Omega(\epsilon\eta/\log(1/\alpha))}
\end{eqnarray}
Over $H^R$, the function $\Gamma_{1-\eta} F$ is bounded in $[0,1]$,
which implies that $s(\Gamma_{1-\eta} F(X)) = \Gamma_{1-\eta} F(X)$
and $\Gamma_{1-\eta} F(X) \geq F(X) - \eta$.
\begin{eqnarray} \label{eq:num6}
\E_{X,Y \sim \mu(H^R)} \Abs{s\left(\Gamma_{1-\eta} F(X)\right) -s\left(\Gamma_{1-\eta} F(Y)\right)}
& \geq \E_{X,Y \sim \mu(H^R)}
\Abs{ F(X) - F(Y)}  - 2\eta
\end{eqnarray}
From equations \prettyref{eq:num1} to \prettyref{eq:num6} we get,
 \begin{eqnarray*} 
\E_{(X,Y_1, \ldots, Y_d) \sim \cP_{H^R} } \max_a  \Abs{F(X) - F(Y_a) }
& \geq \Omega(\sqrt{\epsilon \log d}) \E_{X,Y \sim \mu(H^R)}
\Abs{ F(X) - F(Y)}  - 4\eta - \tau^{\Omega(\epsilon
\eta/\log(1/\alpha))}
\end{eqnarray*}

\end{proof}

\section{Hardness Reduction from \sse}
\label{sec:hardness}

In this section we will present a reduction from \smallsetexpansion
problem to \analyticvsep problem.

Let $G = (V,E)$ be an instance of \smallsetexpansion$(\gamma,\delta,
M)$.  Starting with the instance $G = (V,E)$ of
$\smallsetexpansion(\gamma,\delta,M)$,
our reduction produces an instance $(\cV',\cP')$ of \analyticvsep.  

To describe our reduction, let us fix some notation.  For a set $A$, let $\mset{A}{R}$ denote the set of all
		multisets with $R$ elements from $A$.  
  Let $G_{\eta} = (1-\eta)G + \eta K_{V}$ where $K_{V}$ denotes the complete
graph on the set of vertices $V$.  For an integer $R$, define $G_{\eta}^{\otimes R}$ to be the
product graph $G_{\eta}^{\otimes R} = (G_{\eta})^R$.

Define a Markov chain $H$ on $V_H = \{s,t,t',s'\}$ as follows,$
p(s|s) =
p(s'|s') = 1-\frac{\epsilon}{1-2\epsilon}$, $p(t|s) = p(t'|s') =
\frac{\epsilon}{1-2\epsilon}$, $p(s|t) = p(s'|t') =
\frac{1}{2}$ and $p(t'|t) = p(t|t') = \frac{1}{2}$.  It is easy to see
that the stationary distribution of the Markov chain $H$ over $V_H$ is
given by,
$$ \mu_H(s) = \mu_H(s') = \frac{1}{2} - \epsilon \qquad \qquad \mu_H(t) =
\mu_H(t') = \epsilon$$

%

\noindent The reduction consists of two steps.  First, we construct an
``unfolded'' instance $(\cV, \cP)$ of the \analyticvsep, then we merge
vertices of $(\cV, \cP)$ to create the final output instance $(\cV',
\cP')$.  The details of the reduction are presented below.
%

\begin{mybox}
\textbf{Reduction}

{\sf Input:} A graph $G = (V,E)$ - an instance of
$\smallsetexpansion(\gamma,\delta,M)$.

{\sf Parameters:} $R = \frac{1}{\delta}$, $\epsilon$

{\sf Unfolded instance $(\cV,\cP)$} 

Set $\cV = (V \times V_H)^R$.  The probability distribution $\mu$ on $\cV$
is given by $(\mu_{V} \times \mu_H)^R$.  The probability distribution
$\cP$ is given by the following sampling procedure.

\begin{enumerate}
\item Sample a random vertex $A\in V^R$.
\item Sample $d+1$ random neighbors $B,C_1,\ldots,C_d \sim G_{\eta}^{\tensor R}(A)$ of the vertex
	$A$ in the tensor-product graph $G_{\eta}^{\tensor R}$.
\item Sample $x \in V_H^R$ from the product distribution $\mu^R$.
\item Independently sample $d$ neighbours $y^{(1)},\ldots,y^{(d)} $ of
	$x$ in the Markov chain $H^R$, i.e., $y^{(i)} \sim
	\mu_H^R(x)$.
\item Output $\left( (B,x),(C_1,y_1),\ldots,(C_d,y_d) \right)$
  \end{enumerate}

 {\sf Folded Instance $(\cV',\cP')$}

Fix $\cV' = (V \times \{s,t\})^{\{R\}}$. Define a projection
map $ \Pi: \cV \to \cV'$  as follows:
$$ \Pi(A,x) = \{ (a_i,x_i) | x_i \in \{s,t\} \}$$
for each $(A,x) = \left( (a_1,x_1), (a_2,x_2), \ldots, (a_R,x_R)
\right)$ in $(V \times \{s,t\})^{\{R\}}$.

Let $\mu'$ be the probability distribution on $\cV'$ obtained by projection of
probability distribution $\mu$ on $\cV$.  Similarly, the probability
distribution $\cP'$ on $(\cV')^{d+1}$ by applying the projection $\Pi$ to
the probability distribution $\cP$.

\end{mybox}

Observe that each of the queries $\Pi(B, x)$ and $\{\Pi(C_i,
y_i)\}_{i=1}^d$ are distributed according to $\mu'$ on $\cV'$.
Let $F' \from \cV' \to \{0,1\}$ denote the indicator function
of a subset for the instance.  Let us suppose that 

$$ \E_{X, Y \sim \cV} \left[|F'(X) - F'(Y)| \right] \geq
\frac{1}{10}$$

For the whole reduction, we fix $\eta = \eps/(100 d)$. We will restrict $\gamma < \eps/(100 d)$. 
We will fix its value later. 

\begin{theorem}(Completeness)  
\label{thm:sse-av-completeness}
Suppose there exists a set $S \subset V$
	such that $\vol(S) = \delta$ and $\Phi(S) \leq \gamma$ then
	there exists $F': \cV' \to \{0,1\}$ such that,
$$ \E_{X,Y \sim \cV'} \left[|F'(X) - F'(Y)| \right] \geq
\frac{1}{10} $$	
and,
$$ \E_{X, Y_1,\ldots,Y_d \sim \cP} \left[\max_{i}
	|F'(X) - F'(Y_i)| \right] \leq 2 \epsilon + \bigO{d(\eta+ \gamma)} \leq 4 \e  $$	
\end{theorem}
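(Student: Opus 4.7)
The natural candidate for the completeness assignment is a ``dictator-like'' function indexed by the small non-expanding set $S$. I define $F \from \cV \to \set{0,1}$ by
\[ F(B,x) \defeq \Ind{\exists j \in [R] \text{ with } b_j \in S \text{ and } x_j \in \set{s,t}}. \]
Since the indicator only depends on those pairs $(b_j,x_j)$ for which $x_j \in \set{s,t}$, the function $F$ descends through the folding map $\Pi$ to a well-defined $F' \from \cV' \to \set{0,1}$. For the variance lower bound I use that under $\mu_V \times \mu_H$, each coordinate satisfies $\Pr[b_j\in S, x_j \in \set{s,t}] = \delta/2$ independently, so $\Pr[F=0] = (1-\delta/2)^R = (1-\delta/2)^{1/\delta} \to e^{-1/2}$ as $\delta\to 0$. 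Thus $\E_{X,Y\sim \mu'}\Abs{F'(X)-F'(Y)} = 2\Pr[F=0]\Pr[F=1]$ is bounded away from zero, certainly $\geq 1/10$ for small enough $\delta$.

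For the expansion upper bound I would control $\E \max_i\Abs{F'(X)-F'(Y_i)} = \Pr[\exists i \from F(X)\neq F(Y_i)]$ through the per-coordinate ``contribution indicators'' $E_j(\cdot) \defeq \Ind{b_j\in S, x_j\in\set{s,t}}$ (and analogously $E_j(Y_i)$ for $Y_i = (C_i,y^{(i)})$). Observing that $\set{j : E_j(X)=1} = \set{j : E_j(Y_i)=1}$ forces $F(X)=F(Y_i)$, a union bound over $j$ gives $\Pr[\exists i \from F(X)\neq F(Y_i)] \leq \sum_j \Pr[\exists i \from E_j(X)\neq E_j(Y_i)]$. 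Writing $A_j = \Ind{x_j\in\set{s,t}}$ and $B_j = \Ind{b_j\in S}$, and using that the $B$-side and $A$-side are independent, the per-coordinate event splits into an $A$-flip contribution $\Pr[B_j=1, A_j=1, \exists i \from A_j^{(i)}=0]$ plus a $B$-flip contribution $\Pr[B_j=1, \exists i \from B_j^{(i)}=0]$, together with the two symmetric terms obtained by swapping $X$ with $Y_i$.

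The central calculation, and the reason the bound has only $2\epsilon$ rather than $2d\epsilon$, is the following \emph{asymmetry} between the two sides. On the $A$-side, inspection of the chain $H$ shows that a transition out of $\set{s,t}$ is possible only from state $t$ (going to $t'$ with probability $1/2$); hence conditional on $x_j=s$ no $y_j^{(i)}$ can leave $\set{s,t}$, and conditional on $x_j=t$ the event $\exists i \from y_j^{(i)}=t'$ has probability $1-2^{-d}\leq 1$. Therefore $\Pr[A_j=1,\exists i \from A_j^{(i)}=0] = \Pr[x_j=t]\cdot(1-2^{-d})\leq \epsilon$, independent of $d$. Multiplying by $\Pr[B_j=1]=\delta$ and summing over $R=1/\delta$ coordinates contributes $\epsilon$, and the two symmetric directions give $2\epsilon$. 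On the $B$-side no such saving is available, but the $G_\eta$-expansion bound $\Phi_{G_\eta}(S)\leq (1-\eta)\gamma + \eta(1-\delta)\leq \gamma+\eta$ gives $\Pr[b_j\in S,\, c_{i,j}\notin S]\leq 2\delta(\gamma+\eta)$; here I must union bound over $i\in[d]$, yielding the $O(d(\gamma+\eta))$ term after summing over $j$.

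Putting the two contributions together gives $\Pr[\exists i \from F(X)\neq F(Y_i)] \leq 2\epsilon + O(d(\gamma+\eta))$, and the choices $\eta = \epsilon/(100d)$ and $\gamma \leq \epsilon/(100d)$ make the second term at most $O(\epsilon/100)$, so the total is at most $4\epsilon$ as required. The main obstacle in executing this plan is precisely the asymmetric handling of the two sides: a naive union bound over the $d$ neighbors for the $A$-flip event would ruin the $2\epsilon$ bound, so one must exploit the fact that all $y_j^{(i)}$ for $i=1,\dots,d$ share the single source $x_j$, which lets the ``exists $i$'' on the $A$-side collapse to the marginal event $x_j=t$ (or $x_j=t'$), whereas on the $B$-side the conditionally independent neighbors $c_{1,j},\dots,c_{d,j}$ genuinely incur a $d$-fold loss that is absorbed by the smoothing parameters.
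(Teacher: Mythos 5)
Your overall plan is essentially the paper's: a dictator-type cut indexed by the non-expanding set $S$ (the paper uses the indicator of ``exactly one'' active coordinate in $S\times\{s,t\}$ rather than your ``at least one''; both descend through $\Pi$ and both give the variance bound), the chain-side cost is confined to coordinates with $x_j\in\{t,t'\}$ and so costs $2\epsilon$ with no factor of $d$, and the graph-side crossings are union-bounded over all $(d+1)R$ sampled edges and absorbed by $\gamma,\eta\le\epsilon/(100d)$. However, there is a genuine gap in your ``swapped'' $A$-flip term, i.e.\ the event that for some $i$ one has $E_j(Y_i)=1$ while $E_j(X)=0$ because $x_j\in\{s',t'\}$. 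There the graph-side requirement is $c_{i,j}\in S$ for the \emph{same} witnessing index $i$ as the chain-side flip $y_j^{(i)}=t$; it is not the bit $B_j=\Ind\{b_j\in S\}$, so ``multiplying by $\Pr{B_j=1}=\delta$'' bounds the wrong event, and no symmetry rescues it: the tuple $(X,Y_1,\dots,Y_d)$ is a star centered at $X$, and swapping $X$ with $Y_i$ does not preserve its distribution once the existential over $i$ is in play. After collapsing the chain existential, what your argument actually yields for this term is $\Pr{x_j=t'}\cdot\Pr{\exists i:\, c_{i,j}\in S}$, and without invoking the non-expansion of $S$ the second factor can only be bounded by $\min(1,d\delta)$ (the $c_{i,j}$ are conditionally i.i.d.\ given $a_j$); summed over the $R=1/\delta$ coordinates this is $\Theta(d\epsilon)$, which destroys the claimed $2\epsilon+O(d(\gamma+\eta))$ bound. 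The bound you want is true, but only because $S$ is non-expanding, a fact you use only for the $B$-flip terms: for instance, split on $a_j\in S$ (probability $\delta$, contributing $\epsilon\delta$ per coordinate) versus $a_j\notin S$, where $\Pr{a_j\notin S,\ c_{i,j}\in S}=\delta\,\Phi_{G_\eta}(S)\le\delta(\gamma+\eta)$ by reversibility, so a union bound over $i$ costs only $d\delta(\gamma+\eta)$, which is absorbed in the $O(d(\gamma+\eta))$ allowance.

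The paper sidesteps this asymmetry entirely by conditioning globally, up front, on the event that none of the $(d+1)R$ sampled graph edges $(a_j,b_j),(a_j,c_{i,j})$ crosses $(S,\bar S)$, a union bound costing $2(d+1)(\gamma+\eta)$. Under that conditioning $b_j\in S\iff c_{i,j}\in S\iff a_j\in S$ for all $i,j$, so both flip directions reduce to the single chain-side event that some coordinate with $b_j\in S$ has $x_j\in\{t,t'\}$, of probability at most $R\delta\cdot 2\epsilon=2\epsilon$; and if no such coordinate exists, then from $s$ the chain cannot leave $\{s,t\}$ and from $s'$ it cannot leave $\{s',t'\}$, so all $d$ neighbours agree with $X$ and $\max_i\Abs{F(X)-F(Y_i)}=0$. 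Either adopt that global conditioning or add the $a_j\in S$ / $a_j\notin S$ split above; with that repair your per-coordinate decomposition goes through and matches the stated bound.
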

\begin{proof}
	Define $F : \cV \to \{0,1\}$ as follows: 
	$$ F(A,x) = \begin{cases} 1 & \text{ if } |\Pi(A,x) \cap (S
		\times \{s,t\})| = 1 \\  0 & \text{
		otherwise}\end{cases}$$
	Observe that by definition of $F$, the value of $F(A,x)$ only
	depends on $\Pi(A,x)$.  So the function $F$
	naturally defines a map $F' : \cV' \to \{0,1\}$. 
%
%
%
	Therefore we can write,
	\begin{align*}
		\Pr{F(A,x) = 1} & = \sum_{i \in [R]}  \Pr{x_i \in
			\{s,t\}}\Pr{ \{a_1,\ldots,a_R\}
			\cap S = \{a_i\} | x_i \in \{s,t\} }\\
			& \geq R\cdot \frac{1}{2} \cdot  \frac{1}{R} \cdot
	\left(1-\frac{1}{R}\right)^{R-1}  \geq 	\frac{1}{10}
	\end{align*}
	and,
	$$ \Pr{F(A,x) = 1} = \Pr{ |\Pi(A,x) \cap (S \times
		\{s,t\})| = 1} \leq \E_{(A,x) \sim \cV} \left[ |\Pi(A,x) \cap (S \times
			\{s,t\})| \right] = R \cdot \frac{1}{2} \cdot
			\frac{|S|}{|V|} \leq  \frac{1}{2}$$ 
	
	The above bounds on $\Pr{F(A,x) = 1}$ along with the fact that
	$F$ takes values only in $\{0,1\}$, we get that 
	$$ \E_{X,Y \sim \cV'} \abs{F'(X) - F'(Y)} = \E_{(A,x),(B,y) \sim \cV}{|F(A,x) - F(B,y)|} \geq
	\frac{1}{10} $$
 	
	Suppose we sample $A \in V^R$ and $B,C_1,\ldots,C_d$
	independently from $G_{\eta}^{\otimes R}(A)$.  Let us denote $A =
	(a_1,\ldots,a_R)$, $B = (b_1,\ldots,b_R)$, $C_i =
	(c_{i1},\ldots,c_{iR})$ for all $i \in [d]$.  Note that,
	\begin{eqnarray*}
	\Pr{\exists i \in [R]  \text{ such that } \abs{\{a_i,b_i\} \cap S} =1} 
	& \leq \sum_{i \in [R]}
	(1-\eta)\Pr{(a_i,b_i) \in E[S,\bar{S}]} + \eta \Pr{ (a_i,b_i)
		\in S \times \bar{S}} \\
	&	\leq R (\vol(S) \Phi(S) + 2\eta
		\vol(S)) \leq 	2(\gamma + \eta) \mper
	\end{eqnarray*}
	Similarly, for each $j \in [d]$,
	$$\Pr{ \exists i \in [R] | |\{a_i,c_{ji}\} \cap S|=1} \leq \sum_{i \in [R]}
	\Pr{(a_i,c_{ji}) \in E[S,\bar{S}]} \leq R \vol(S) \Phi(S) \leq
	2(\gamma+ \eta) \mper$$
	By a union bound, with probability at least $1 - 2(d+1)
	(\gamma+ \eta)$
	we have that none of the edges $\{(a_i,b_i)\}_{i\in[R]}$ and
	$\{(a_i,c_{ji})\}_{j\in [d],i \in [R]}$ cross the cut
	$(S,\bar{S})$.

		Conditioned on the above event, we claim that if $(B,x) \cap \left(S \times \{t,t'\}\right)  =
	\emptyset$ then $\max_i |F(B,x) - F(C_i, y_i)|
	=0$.  First, if $(B,x) \cap \left(S \times \{t,t'\}\right)  =
	\emptyset$ then for each $b_i \in S$ the corresponding $x_i
	\in \{s,s'\}$.  In particular, this implies that for each $b_i
	\in S$,  either all of the pairs
	$(b_i,x_i),\{(c_{ji},y_{ji})\}_{j \in [d]}$ are either in $S \times
	\{s,t\}$ or $S \times \{s',t'\}$, thereby ensuring that $\max_i |F(B,x) - F(C_i, y_i)|
	=0$.
	
	From the above discussion we conclude,
	\begin{align*}
	\E_{(B,x), (C_1,y_1),\ldots,(C_d,y_d) \sim \cP} \left[\max_{i}
	|F(B,x) - F(C_i,y_i)| \right] 
	& \leq \Pr{|(B,x) \cap \left(S
		\times \{t,t'\}\right)|\geq 1}  + 2(d +1)(\gamma+\eta) \\	
	& \leq \E\left[|(B,x) \cap \left(S
		\times \{t,t'\}\right)|\right]  + 2(d +1)(\gamma+\eta) \\
	& = R \cdot \vol(S) \cdot \epsilon + 2(d+1) (\gamma+\eta) = \epsilon + 2(d+1) (\gamma+\eta) 
	\end{align*}
\end{proof}

Let $F': \cV' \to \{0,1\}$ be a subset of the instance $(\cV',\cP')$.
Let us define the following notation.
$$ \val_{\cP'}(F') \defeq \E_{(X,Y_1,\ldots,Y_d) \sim \cP'} \left[ \max_{i \in [d]}
\abs{F'(X)-F'(Y_i)}\right]  \qquad \var_1[F'] \defeq \E_{X,Y \sim \cV'}
\abs{F'(X)
- F'(Y)} $$

We define the functions $F : \cV \to [0,1]$ and $f_A, g_A : V_H^R
\to [0,1]$ for each $A \in V^R$ as follows. 
$$ F(A,x) \defeq  F'(\Pi(A,x)) \qquad f_A(x) \defeq F(A,x) \qquad
\qquad g_A(x) \defeq \E_{B \sim G_{\eta}^{\otimes R}(A)} F(B,x)$$

\begin{lemma} \label{lem:val-in-terms-of-g}
	$$ \val_{\cP'}(F') \geq \E_{A \in V^R}
	\val_{\mu_H^R}(g_A)$$
\end{lemma}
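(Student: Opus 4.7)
The plan is to unfold the definitions and then exchange a max with an expectation using two elementary inequalities (Jensen and $\E\max \geq \max \E$).

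First I would note that since $\Pi$ pushes forward the distribution $\cP$ to $\cP'$, and since $F(A,x) = F'(\Pi(A,x))$, we can rewrite
\[
\val_{\cP'}(F') \;=\; \E_{((B,x),(C_1,y_1),\ldots,(C_d,y_d))\sim\cP}\,\max_{i\in[d]} \bigl|F(B,x) - F(C_i,y_i)\bigr|.
\]
Now I would expose the hierarchical structure of $\cP$: first $A \in V^R$ is drawn, then $B, C_1, \ldots, C_d$ are drawn \emph{independently} from $G_\eta^{\otimes R}(A)$, and in parallel $x \sim \mu_H^R$ is drawn with $y_1,\ldots,y_d$ independent neighbors of $x$ in $H^R$. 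In particular, conditional on $(A,x,y_1,\ldots,y_d)$, the variables $B, C_1, \ldots, C_d$ are mutually independent.

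The core step is then to bound the inner expectation for each fixed $(A,x,y_1,\ldots,y_d)$. Using the trivial bound $\E[\max_i X_i] \geq \max_i \E[X_i]$ followed by Jensen's inequality (i.e., $\E|U-V| \geq |\E U - \E V|$, valid whenever $U,V$ are independent so the inner expectations factor), I get
\[
\E_{B,C_1,\ldots,C_d \mid A}\max_i |F(B,x)-F(C_i,y_i)|
\;\geq\; \max_i \bigl|\E_B F(B,x) - \E_{C_i} F(C_i,y_i)\bigr|
\;=\; \max_i |g_A(x) - g_A(y_i)|,
\]
where the final equality uses the definition of $g_A$.

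Finally, taking the outer expectation over $A \sim V^R$ and over $(x,y_1,\ldots,y_d)$ drawn from the canonical distribution on $H^R$ yields
\[
\val_{\cP'}(F') \;\geq\; \E_{A\in V^R}\,\E_{(x,y_1,\ldots,y_d)} \max_i |g_A(x)-g_A(y_i)| \;=\; \E_{A\in V^R}\,\val_{\mu_H^R}(g_A),
\]
which is the claim. There is no real obstacle here; the only thing to be careful about is invoking independence of $B$ and $C_i$ (given $A,x,y_i$) correctly in the Jensen step, so that the expected absolute difference factors into the product of expectations as required.
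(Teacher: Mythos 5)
Your proof is correct and follows essentially the same route as the paper: rewrite $\val_{\cP'}(F')$ as $\val_{\cP}(F)$ via the projection $\Pi$, condition on $(A,x,y_1,\ldots,y_d)$, and use $\E\max_i \geq \max_i \E$ together with Jensen's inequality to replace $F(B,x)$ and $F(C_i,y_i)$ by their conditional averages $g_A(x)$ and $g_A(y_i)$. The only nitpick is that the independence of $B,C_1,\ldots,C_d$ is not actually needed for the Jensen step --- linearity of expectation already gives $\E_{B,C_i}[F(B,x)-F(C_i,y_i)] = g_A(x)-g_A(y_i)$ --- so your caveat is harmless but superfluous.
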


\begin{proof}
	\begin{align*}
	\val_{\cP'}(F') & = \val_{\cP}(F)\\
			& = \E_{A \sim V^R} \E_{x \sim \mu_H^R}
			\E_{y_1,\ldots,y_d \sim
			\mu^R_H(x)}\E_{B,C_1,\ldots,C_{d} \sim
			G_{\gamma}^{\otimes
			R}(A)}  \max_i \abs{F(B,x) -
			F(C_i,y_i)} \\
	& \geq \E_{A \sim V^R} \E_{x \sim \mu_H^R} \E_{y_1,\ldots,y_d \sim \mu^R_H(x)} \max_i \abs{ \E_{B \sim
				G_{\gamma}^{\otimes R}(A)} F(B,x) -
			\E_{C_i \sim
				G_{\gamma}^{\otimes R}(A)} F(C_i,y_i)}
				\\
		& \geq \E_{A \sim V^R} \E_{x \sim \mu_H^R}
			\E_{y_1,\ldots,y_d \sim
			\mu^R_H(x)}  \max_i \abs{ g_A(x) -
			g_A(y_i)} \\
			& = \E_{A \in V^R}
	\val_{\mu_H^R}(g_A)			
	\end{align*}
\end{proof}

\begin{lemma} \label{lem:average1}
	$$ \E_{A \sim V^R} \E_{x \sim \mu_H^R} g_A(x)^2 \geq \E_{(A,x)  \sim
	\cV}  F^2(A,x) -  \val_{\cP'}(F') $$
\end{lemma}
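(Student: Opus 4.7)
The plan is to lower bound $\E_A \E_x g_A(x)^2$ by relating it to a second-moment expansion of $\val_{\cP'}(F')$. First I would observe that since $F = F' \circ \Pi$ and $\cP'$ is the pushforward of $\cP$ under $\Pi$, we have $\val_{\cP'}(F') = \val_{\cP}(F)$. Bounding the maximum below by a single term gives
\[
\val_{\cP}(F) \;\geq\; \E_{\cP}\, \bigl|F(B,x) - F(C_1, y_1)\bigr|,
\]
and since $F$ is $\{0,1\}$-valued this absolute value equals $(F(B,x) - F(C_1, y_1))^2$.

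Next I would expand the square and use the fact that $G$ is regular, so $\mu_V$ is uniform and the marginals of $B$ and of $C_1$ under $\cP$ are each uniform on $V^R$. This yields
\[
\E_{\cP}(F(B,x) - F(C_1, y_1))^2 \;=\; 2\,\E_{(A,x)\sim\cV} F^2(A,x) \;-\; 2\,\E_{\cP}\bigl[F(B,x)\,F(C_1,y_1)\bigr].
\]
The key computation is the cross term: conditioning on $A, x, y_1$, the coordinates $B$ and $C_1$ are independent samples of $G_\eta^{\otimes R}(A)$, so
\[
\E_{\cP}\bigl[F(B,x)F(C_1,y_1)\bigr] \;=\; \E_{A}\,\E_{x\sim\mu_H^R}\,\E_{y_1\sim \mu_H^R(x)}\bigl[g_A(x)\,g_A(y_1)\bigr].
\]

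Finally, since the marginal distribution of both $x$ and $y_1$ is $\mu_H^R$, the AM--GM (or Cauchy--Schwarz) inequality gives $\E g_A(x)g_A(y_1) \leq \tfrac{1}{2}(\E g_A(x)^2 + \E g_A(y_1)^2) = \E g_A(x)^2$. Combining the previous inequalities yields
\[
\val_{\cP'}(F') \;\geq\; 2\,\E_{(A,x)\sim \cV} F^2(A,x) \;-\; 2\,\E_{A}\E_{x}\, g_A(x)^2,
\]
which rearranges to the desired bound (with room to spare, since the $2$'s actually give a factor of $\tfrac12$ on $\val_{\cP'}(F')$). The main conceptual step is identifying the correct pairing $(B,x)$ versus $(C_1,y_1)$ so that the independence of $B,C_1$ given $A$ produces a product of $g_A$ values; everything else is a mechanical expansion. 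There is no serious obstacle here, only bookkeeping about marginals.
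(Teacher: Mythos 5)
Your argument is correct, and it takes a slightly different (and marginally sharper) route than the paper. The paper starts from the left-hand side: it writes $\E_A\E_x g_A(x)^2=\E_A\E_x\E_{B,C\sim G_\eta^{\otimes R}(A)}F(B,x)F(C,x)$ with the \emph{same} $x$ in both factors, polarizes to get $\E F^2-\tfrac12\E(F(B,x)-F(C,x))^2$, bounds the square by the absolute value using $F\in[0,1]$, and then — because $\cP$ never queries two points sharing the same Markov-chain coordinate $x$ — must insert a third point $(D,y)$ with $y\sim\mu_H^R(x)$ and use the triangle inequality plus the max over $d$ neighbours to bound $\E\abs{F(B,x)-F(C,x)}\le 2\val_{\cP}(F)$; the factor $2$ cancels the $\tfrac12$ and gives exactly the stated bound. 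You instead start from $\val_{\cP'}(F')=\val_{\cP}(F)\ge\E\abs{F(B,x)-F(C_1,y_1)}\ge\E(F(B,x)-F(C_1,y_1))^2$, expand the square using that both $(B,x)$ and $(C_1,y_1)$ have marginal distribution $\cV$ (regularity of $G$ plus stationarity of $\mu_H$), factorize the cross term into $\E_A\E_{x,y_1}g_A(x)g_A(y_1)$ via conditional independence of $B$ and $C_1$ given $A$, and finish with AM--GM and the fact that $y_1$ is also $\mu_H^R$-distributed. This avoids the triangle-inequality detour entirely and yields $\E_A\E_x g_A(x)^2\ge\E F^2-\tfrac12\val_{\cP'}(F')$, which is stronger than the lemma as stated. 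One small remark: in the soundness theorem the lemma is also applied to $[0,1]$-valued $F'$, where your equality $\abs{F(B,x)-F(C_1,y_1)}=(F(B,x)-F(C_1,y_1))^2$ should be replaced by the inequality $\abs{a-b}\ge(a-b)^2$ for $a,b\in[0,1]$; this points in the direction you need, so nothing breaks.
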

\begin{proof}
	\begin{align}
		\E_{A \sim V^R} \E_{x \sim \mu_H^R} g_A(x)^2 & = \E_{A
		\sim V^R} \E_{x \sim \mu_H^R} \E_{B,C \sim G_{\eta}^{\otimes
		R}(A)} F(B,x)F(C,x) \nonumber \\
		&= \frac{1}{2} \E_{A \sim V^R} \E_{x \sim \mu_H^R} \E_{B,C \sim G_{\eta}^{\otimes
		R}(A)} F^2(B,x)+ F^2(C,x)
		-(F(B,x) - F(C,x))^2
		 \nonumber \\
		& = \E_{A \sim V^R} \E_{x \sim \mu_H^R} F^2(A,x) -
		\frac{1}{2} \E_{A \sim V^R} \E_{x \sim \mu_H^R} \E_{B,C \sim G_{\eta}^{\otimes
		R}(A)} (F(B,x) - F(C,x))^2 \label{eq:num10}
	\end{align}
	where in the last step we used the fact that $B,C$ have the
	same distribution as $A \sim V^R$. Since the function $F$ is bounded in $[0,1]$, we have
	\begin{align}
	\E_{A \sim V^R} \E_{x \sim \mu_H^R} \E_{B,C \sim G_{\eta}^{\otimes
		R}(A)} (F(B,x) - F(C,x))^2 
		& \leq
			\E_{A \sim V^R} \E_{x \sim \mu_H^R} \E_{B,C \sim G_{\eta}^{\otimes
			R}(A)} \abs{F(B,x) -
			F(C,x)} \label{eq:num11} 
  	\end{align}
	
	\begin{align}
	 \E_{A \sim V^R}  & \E_{x \sim \mu_H^R}  \E_{B,C \sim
		G_{\eta}^{\otimes R}(A)} \abs{F(B,x) -
			F(C,x)}  \nonumber \\
			& \leq 	\E_{A \sim
			V^R} \E_{x \sim \mu_H^R} \E_{y \sim
			\mu^R_H(x)}\E_{B,C, D \sim G_{\eta}^{\otimes
			R}(A)}  \abs{F(B,x) -
			F(D,y)} + \abs{F(C,x) -
			F(D,y)} \nonumber \\
			& = 2\E_{A \sim V^R} \E_{x \sim \mu_H^R} \E_{y \sim
			\mu^R_H(x)}\E_{B,D \sim G_{\eta}^{\otimes
			R}(A)}  \abs{F(B,x) -
			F(D,y)} \quad  \because{(B,D), (C,D) \text{ have same
			distribution} } \nonumber \\
			& \leq 2\E_{A \sim V^R} \E_{x \sim \mu_H^R}
			\E_{y_1,\ldots,y_d \sim
			\mu^R_H(x)}\E_{B,D_1,\ldots,D_{d} \sim G_{\eta}^{\otimes
			R}(A)}  \max_i \abs{F(B,x) -
			F(D_i,y_i)} \nonumber \\
			& = 2\val_{\cP}(F) = 2\val_{\cP'}(F')
			\label{eq:num12}
	\end{align}
	Equations \prettyref{eq:num10}, \prettyref{eq:num11} and
	\prettyref{eq:num12} yield the desired result.
\end{proof}

\begin{lemma} \label{lem:varianceofg}
	$$ \E_{A \sim V^R} \var_1[g_A] = \E_{A \sim V^R} \E_{x,y \in \mu_H^R} \abs{g_{A}(x) -
	g_{A}(y)} \geq  \frac{1}{2}
	(\var_1[F])^2 - \val_{\cP'}(F')$$		
\end{lemma}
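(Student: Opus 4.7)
The plan is to adapt the proof of \prettyref{lem:average1} from a second-moment lower bound to one for the $\ell_1$-variance. Since $g_A$ takes values in $[0,1]$, the pointwise inequality $|a-b| \ge (a-b)^2$ (for $a,b \in [0,1]$) yields
\[ \E_A \var_1[g_A] \;\ge\; \E_A \E_{x,y}(g_A(x) - g_A(y))^2 \;=\; 2\,\E_A \var[g_A] \;=\; 2\E_A\E_x g_A(x)^2 \;-\; 2\E_A(\E_x g_A(x))^2. \]
The first term is bounded below by $\E F^2 - \val_{\cP'}(F')$ via \prettyref{lem:average1} directly, so the whole task reduces to controlling $\E_A(\E_x g_A(x))^2$ from above.

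To handle the second term, I expand $g_A(x)g_A(y) = \E_{B,C \sim G_\eta^{\otimes R}(A),\,\mathrm{ind}} F(B,x)F(C,y)$ and apply the polarization identity $2F(B,x)F(C,y) = F(B,x)^2 + F(C,y)^2 - (F(B,x)-F(C,y))^2$. Since $F$ is $\{0,1\}$-valued, $F^2 = F$ and $(F(B,x)-F(C,y))^2 = |F(B,x) - F(C,y)|$, and since $(B,x)$ (respectively $(C,y)$) is marginally distributed as $\mu$ on $\cV$, taking expectations and simplifying gives
\[ \E_A (\E_x g_A(x))^2 \;=\; \E F \;-\; \tfrac12 M, \qquad M := \E_{A,\,x,y,\,B,C\sim G_\eta^{\otimes R}(A)} |F(B,x) - F(C,y)|. \]
Combining the two ingredients and using $\E F^2 = \E F$ collapses everything into the clean intermediate inequality
\[ \E_A \var_1[g_A] \;\ge\; M \;-\; 2\,\val_{\cP'}(F'). \]

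The remaining and main obstacle is to prove $M \ge \tfrac12 \var_1[F]^2 + \val_{\cP'}(F')$. Observe that $M$ is the ``2-step'' analogue of $\var_1[F]$, differing only in that $(B,C)$ here are drawn as two independent neighbors of a common uniform $A$ rather than two iid uniform vertices in $V^R$. The plan is to show $\var_1[F] \le M + O(\val_{\cP'}(F'))$ by the same symmetric-triangle-inequality trick used in the proof of \prettyref{lem:average1}: chain from $(A_1,x)$ through an intermediate query pair $(D_1, y_0)$ drawn from the constraint distribution $\cP$ centered at $(B,x)$, and then symmetrically across to the target $(A_2,y)$, so that the mismatch between the 2-step and independent distributions is absorbed into two applications of a single-constraint bound of size $\le \val_{\cP'}(F')$. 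Finally, the trivial estimate $\var_1[F] \le 1$ gives $\var_1[F] \ge \tfrac12 \var_1[F]^2$, which combined with the previous inequality yields $M \ge \tfrac12 \var_1[F]^2 - O(\val_{\cP'}(F'))$, as required. The delicate point I expect to be hardest is designing the triangle chain precisely enough to produce the factor of exactly $\tfrac12$ in front of $\var_1[F]^2$; a cruder invocation of the spectral gap of $G_\eta^{\otimes R}$ (which is only $\eta = \epsilon/(100d)$) would be far too weak on its own, so the entire gain must come from repeated use of the constraint distribution rather than from graph mixing.
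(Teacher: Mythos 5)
Your first half is fine and essentially re-derives the paper's setup in different clothing: since $F$ is Boolean, your quantity $M=\E_{A}\E_{x,y}\E_{B,C\sim G_{\eta}^{\otimes R}(A)}\Abs{F(B,x)-F(C,y)}$ equals $2\E F-2\iprod{F',\cH F'}$ for exactly the two-query graph $\cH$ that the paper introduces, so after your polarization step the whole lemma has been reduced to bounding the quadratic form $\iprod{F',\cH F'}$ away from $\E F$, i.e.\ to showing $M\geq \Omega(1)\cdot\var_1[F]-O(\val_{\cP'}(F'))$. The gap is in how you propose to prove that crux. A triangle-inequality chain through the constraint distribution cannot do it: every constraint in $\cP$ compares points whose $V^R$-components are two independent neighbours of the \emph{same} $A$, so any bounded number of applications of the bound $\val_{\cP'}(F')$ only relates points within a common $A$-neighbourhood. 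The quantity you must reach, $\var_1[F]$, compares two \emph{independent uniform} tuples $A_1,A_2\in V^R$, and closing that gap is a global mixing statement about $G_{\eta}^{\otimes R}$ which $\val_{\cP'}(F')$ does not control (you would need mixing-time many steps, each costing $\val$, and $G$ is only guaranteed to be a small-set expander, not an expander). Your own closing remark concedes that the spectral gap of $G_\eta$ is far too weak, but then places the entire burden on "repeated use of the constraint distribution," which is precisely the part that cannot work.

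The missing idea is that the gain comes from the \emph{folding map} $\Pi$, which your argument never uses: your manipulations are valid for an arbitrary Boolean $F$ on $\cV$, and in that generality the target inequality is false. For instance, if $G$ is two disjoint cliques and $F(A,x)$ is the indicator that $a_1$ lies in the first clique, then $M$ and $\val_{\cP'}(F')$ are both $O(d\eta)=O(\epsilon)$ while $\var_1[F]=\tfrac12$, so no inequality of the form $\var_1[F]\leq M+O(\val_{\cP'}(F'))$ can hold without exploiting that $F=F'\circ\Pi$. The paper's proof exploits exactly this: it bounds the second eigenvalue of $\cH$ by $\tfrac12$ by observing that $\cH$ factors through a product chain $\cH_2^R$ in which each coordinate is erased (sent to $(\perp,\perp)$, because $x_i\in\{s',t'\}$) with probability at least $\tfrac12$ independently of the structure of $G$; this yields $\iprod{F',\cH F'}\leq (\E F)^2+\tfrac12\var[F]$ and hence the lemma. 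So to complete your route you would have to replace the chaining step by an argument of this type -- at which point you have reproduced the paper's eigenvalue bound, since your $M$-inequality and the bound $\lambda(\cH)\leq\tfrac12$ are the same statement up to constants.
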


\begin{proof}
	Since the function $g_A$ is bounded in $[0,1]$ we can write
	\begin{align}
	\E_{A \sim V^R} \E_{x,y \in \mu_H^R} \abs{g_{A}(x) -
	g_{A}(y)} 
	& \geq \E_{A \sim V^R} \E_{x,y \in \mu_H^R} \left(g_{A}(x) -
	g_{A}(y)\right)^2 \nonumber \\
	& \geq \E_{A \sim V^R} \E_{x \in \mu_H^R} g^2_{A}(x) -
	\E_{A} \E_{x,y \in \mu_H^R} g_{A}(x)g_{A}(y) \label{eq:num14}
	\end{align}
	In the above expression there are two terms.	From
	\prettyref{lem:average1}, we already know that
	\begin{align}
	 \E_{A \sim V^R} \E_{x \in \mu_H^R} g^2_{A}(x) \geq \E_{(A,x)
	 \sim \cV} F^2(A,x) - \val_{\cP'}(F') \label{eq:num15}
 	\end{align}
	Let us expand out the other term in the expression.
	\begin{align}
		\E_{A} \E_{x,y \in \mu_H^R} g_{A}(x)g_{A}(y)  
		& = \E_{A} \E_{B,C  \sim G_{\eta}^{\otimes
		R}(A)} \E_{x,y \in \mu_H^R}  F'(\Pi(B, x)) F'(\Pi(C,
		y)) \label{eq:num16}
	\end{align}
	Now consider the following graph $\cH$ on $\cV'$ defined by
	the following edge sampling procedure.
\begin{itemize} \itemsep=0ex
	\item Sample $A \in V^R$, and $x,y \in \mu_H^R$.
	\item Sample independently $B \sim G_{\eta}^{\otimes R}(A)$ and $C
		\sim G_{\eta}^{\otimes R}(A)$
	\item Output the edge $\Pi(B,x)$ and $\Pi(C,y)$
	\end{itemize}
	Let $\lambda$ denote the second eigenvalue of the adjacency
	matrix of the graph $\cH$.  
	\begin{align*}
\E_{A} \E_{B,C  \sim G_{\eta}^{\otimes
		R}(A)} &\E_{x,y \in \mu_H^R}  F'(\Pi(B, x)) F'(\Pi(C,
		y))  = \iprod{F' , \cH F'}\\ 
		& \leq \left(\E_{(A,x) \sim \cV} F'(\Pi(A,x))\right)^2 + \lambda
		\left(\E_{(A,x) \sim \cV} \left(F'(\Pi(A,x))\right)^2 - (\E_{(A,x) \sim \cV} F'(\Pi(A,x)))^2 \right) \\
		& = \lambda \E_{(A,x) \sim \cV} F(A,x)^2 + (1-\lambda)
		(\E_{(A,x) \sim \cV} F(A,x))^2  
		 \quad \because{$F'(\Pi(A,x)) = F(A,x)$}
	\end{align*}
	Using the above inequality with equations
	\prettyref{eq:num14}, \prettyref{eq:num15},
	\prettyref{eq:num16} we can derive the following,
	\begin{align*}
	\E_{A \sim V^R} \E_{x,y \in \mu_H^R} \abs{g_{A}(x) -
	g_{A}(y)} 
	& \geq \E_{A \sim V^R} \E_{x \in \mu_H^R} g^2_{A}(x) -
	\E_{A} \E_{x,y \in \mu_H^R} g_{A}(x)g_{A}(y) \\
	& \geq (1-\lambda) \left[ \E_{(A,x) \sim \cV} F^2(A,x) -
		(\E_{(A,x) \sim \cV} F(A,x))^2 \right] -
		\val_{\cP'}(F')\\
		& \geq (1-\lambda) \var[F] - \val_{\cP'}(F') \\
		&\geq (1-\lambda) (\var_1[F])^2	- \val_{\cP'}(F')
		\quad \because{$\var[F] > \var_1[F]^2$ for all $F$} 	
	\end{align*}
	To finish the argument, we need to bound the second eigenvalue
	$\lambda$ for the graph $\cH$. 	Here we will present a simple
	argument showing that the second eigenvalue $\lambda$ for
	the graph $\cH$ is strictly less than $\frac{1}{2}$. 
	Let us restate the procedure to sample edges from $\cH$
	slightly differently.
	\begin{itemize} \itemsep=0ex
	\item Define a map $\cM : V \times V_H \to (V \cup \perp) \times
		(V_H \cup \{\perp\})$ as follows, $\cM(b,x) = (b,x)$
		if $x \in \{s,t\}$ and $\cM(b,x) = (\perp,\perp)$
		otherwise.  Let $\Pi' :  ((V \cup \perp) \times
		(V_H \cup \perp) )^R \to (V \times \{s,t\})^{\{R\}}$ denote the
		following map.  $$\Pi'(B',x') = \{(b'_i,x'_i) | x_i
			\in \{s,t\}\}$$
 	\item Sample $A \in V^R$ and $x,y \in \mu_H^R$
	\item Sample independently $B = (b_1,\ldots,b_R) \sim G_{\eta}^{\otimes R}(A)$
		and $C = (c_1,\ldots,c_R) \sim G_{\eta}^{\otimes R}(A)$.
	\item Let $\cM(B,x),\cM(C,y)  \in \left((V \cup \{\perp\}) \times
		(V_H \cup \{\perp\}) \right)^R$ be obtained by applying $\cM$ to each
		coordinate of $(B,x)$ and $(C,y)$.
	\item 	Output an edge between $(\Pi'(\cM(B,x)),
		\Pi'(\cM(C,y)))$.	
	\end{itemize}
	It is easy to see that the above procedure also samples the
	edges of $\cH$ from the same distribution as earlier.  Note
	that $\Pi'$ is a projection from $((V \cup \perp) \times
		(V_H \cup \perp))^R$  to $(V \times \{s,t\})^{\{R\}}$.  Therefore,
		the second eigenvalue of the graph $\cH$ is upper
		bounded by the second eigenvalue of the graph
		on $((V \cup \perp) \times
		(V_H \cup \{\perp\}))^R$ defined by $\cM(B,x) \sim \cM(C,y)$.
	Let $\cH_1$ denote the graph defined by the edges $\cM(B,x) \sim \cM(C,y)$.
	Observe that the coordinates of $\cH_1$ are independent, i.e.,
	$\cH_1 = \cH_2^R$ for a graph  $\cH_2$ corresponding to each coordinate of $\cM(B,x)$
	and $\cM(C,y)$.  Therefore, the second eigenvalue of $\cH_1$
	is at most the second eigenvalue of $\cH_2$.  The Markov chain $\cH_2$ on $ (V \cup \{\perp\}) \times (V_H
	\cup \perp)$
	is defined as follows,
	\begin{itemize}
	\itemsep=0ex
	\item Sample  $a \in V$ and two neighbors $b \sim G_{\eta}(a)$
		and $c \sim G_{\eta}(a)$.
	\item Sample $x,y \in V_H$ independently from the distribution $\mu_H$.
	\item Output an edge between $\cM(b,x)$ $\cM(c,y)$. 
	\end{itemize}
	Notice that in the Markov chain $\cH_2$, for every choice of
	$\cM(b,x)$ in $(V
	\cup \{\perp\}) \times (V_H \cup \perp)$, with probability at least
	$\frac{1}{2}$, the other endpoint $\cM(c,y) = (\perp,
	\perp)$.  Therefore, the second eigenvalue of $\cH_2$ is at
	most $\frac{1}{2}$, giving a bound of $\frac{1}{2}$ on the
	second eigen value of $\cH$.
\end{proof}

Now we restate a claim from \cite{rst12} that will be useful
for our our soundness proof.
\begin{theorem} \label{thm:sse-ug-soundness} (Restatment of Lemma 6.11
	from \cite{rst12})
Let $G$ be a graph with a vertex set $V$.  Let a distribution on pairs
of tuples $(A,B)$ be defined by $A \sim V^R$, $B \sim G_{\eta}^{\otimes
R}(A)$.  Let $\ell: V^R \to [R]$ be a labelling such that over the
choice of random tuples and two random permutations $\pi_A, \pi_B$
$$ \pr_{A \sim V^R, B \sim G_{\eta}^{\otimes R}(A)} \pr_{\pi_A,\pi_B} \left\{
 \pi_A^{-1}\left(\ell(\pi_A(A))\right) =
 \pi_B^{-1}\left(\ell(\pi_B(B))\right) \right\} \geq \zeta$$
 Then there exists a set $S \subset V$ with $\vol(S) \in
 \left[\frac{\zeta}{16R}, \frac{3}{\eta R} \right]$ satisfying
 $\Phi(S) \leq 1- \zeta/16$.
\end{theorem}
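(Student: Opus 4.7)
The plan is to extract a non-expanding subset of $V$ via a standard decoding argument. First, by averaging over $\pi_A, \pi_B$, fix a pair achieving consistency at least $\zeta$, and absorb them into the labeling to obtain a decoder $D : V^R \to [R]$ satisfying
\[ \pr_{A \sim V^R,\, B \sim G_\eta^{\otimes R}(A)}\set{D(A) = D(B)} \geq \zeta. \]
The fibers $S_i = D^{-1}(i)$ partition $V^R$ with masses $p_i = \pr_A\set{A \in S_i}$, so the hypothesis becomes $\sum_i \pr\set{A, B \in S_i} \geq \zeta$.

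For each coordinate $i \in [R]$, define the projection $g_i : V \to [0,1]$ by $g_i(v) = \pr\set{A \in S_i \mid A_i = v}$, giving $\E_v g_i(v) = p_i$. Because $G_\eta^{\otimes R}$ acts independently on each coordinate and preserves the uniform distribution, decompose $\Ind_{S_i}(A) = g_i(A_i) + h_i(A)$ where $h_i$ has zero conditional mean given $A_i$; this orthogonal split makes the cross-terms vanish and gives
\[ \pr\set{A \in S_i,\ B \in S_i} = \langle g_i, G_\eta g_i\rangle + \langle h_i, G_\eta^{\otimes R} h_i\rangle. \]
The $h_i$-term contracts by the spectral gap of $G_\eta$ acting on the remaining $R-1$ coordinates, so summing and bookkeeping yields $\sum_i \langle g_i, G_\eta g_i\rangle = \Omega(\zeta)$; in particular some $i$ satisfies $\langle g_i, G_\eta g_i\rangle = \Omega(\zeta)\cdot p_i$.

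Next, apply randomized threshold rounding: for this $i$, sample $t \in [0,1]$ uniformly and set $U = \{v : g_i(v) \geq t\}$. Then $\E_t \vol(U) = p_i$, and the expected mass of $G_\eta$-edges internal to $U$ is $\Omega(\langle g_i, G_\eta g_i\rangle)$, so some threshold yields $\Phi_{G_\eta}(U) \leq 1 - \zeta/16$; this transfers to $\Phi_G(U) \leq 1 - \zeta/16$ after absorbing the $\eta$-slack, since $G_\eta = (1-\eta)G + \eta K_V$. The volume window $[\zeta/(16R),\ 3/(\eta R)]$ arises from two opposing constraints: the lower bound comes from averaging total correlation $\zeta$ over $R$ coordinates, which forces some $p_i \geq \zeta/(16R)$; the upper bound comes from restricting to coordinates whose $p_i$ is not too large, since the $\eta K_V$ component of $G_\eta$ forces any set of volume exceeding $\Theta(1/(\eta R))$ to have near-maximal expansion and therefore cannot satisfy the soundness conclusion.

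The main obstacle is the orthogonal-decomposition bookkeeping, specifically bounding $\sum_i \langle h_i, G_\eta^{\otimes R} h_i\rangle$ against $\sum_i \langle g_i, G_\eta g_i\rangle$. This requires exploiting the contraction of $G_\eta^{\otimes R}$ on the orthogonal complement of the single-coordinate subspaces, which is governed by the second eigenvalue of $G_\eta$ being strictly below one. Carefully combining this spectral contraction with the averaging over $i$, and simultaneously enforcing the two-sided volume bound through the interplay between $\zeta$, $\eta$, and $R$, constitutes the quantitative core of the argument.
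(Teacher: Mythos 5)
First, note that the paper does not prove this statement at all: it is imported verbatim as a restatement of Lemma~6.11 of \cite{rst12}, so your attempt has to stand on its own, and unfortunately it fails at its very first step. Averaging over $\pi_A,\pi_B$ and fixing one good pair of permutations discards exactly the information that makes the hypothesis non-trivial. After that step you only know that there exist maps $D_1,D_2\from V^R\to[R]$ (namely $A\mapsto \pi_A^{-1}(\ell(\pi_A(A)))$ and $B\mapsto \pi_B^{-1}(\ell(\pi_B(B)))$ for the fixed pair) with $\pr\set{D_1(A)=D_2(B)}\ge\zeta$ -- a condition that is satisfied with $\zeta=1$ by \emph{constant} labellings on every graph $G$, including graphs (e.g.\ the complete graph, or any strong expander) for which no set with $\vol(S)\le 3/(\eta R)$ can have $\Phi(S)\le 1-\zeta/16$. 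So no amount of bookkeeping downstream of that reduction can yield the conclusion. The role of the random permutations in the true statement is precisely to kill positional labellings: for a constant $\ell$ the quantity $\pi_A^{-1}(\ell(\pi_A(A)))$ is a uniformly random coordinate and the agreement probability drops to about $1/R$. The proof in \cite{rst12} uses the permutation-averaging in an essential, quantitative way -- the labelling is forced to effectively select a vertex \emph{value} $A_{\ell(A)}\in V$ rather than a position, and the decoded set $S$ consists of vertices $v$ for which a noticeable fraction of tuples containing $v$ point at $v$; both the lower and the upper volume bounds come out of that counting (each tuple has only $R$ coordinates to point with), not from expansion properties of $G_\eta$.

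Two further steps would fail even granting your reduced hypothesis. The claim that $\sum_i \langle h_i, G_\eta^{\otimes R} h_i\rangle$ is negligible ``by the spectral gap of $G_\eta$'' is unjustified: the lemma makes no assumption whatsoever on $G$, so the second eigenvalue of $G_\eta=(1-\eta)G+\eta K_V$ can be as large as $1-\eta$ (take $G$ with a balanced sparse cut), and $h_i$ can be essentially a function of a single other coordinate, contracting only by that factor; concretely, if $G$ is two disjoint cliques and $D$ returns $1$ or $2$ according to which side $A_j$ lies on, the consistency is $\ge 1-\eta$, all the correlation sits in the $h_i$ terms or in sets of volume $1/2$, and the conclusion's volume window $[\zeta/16R,\,3/(\eta R)]$ is unattainable -- which also shows your justification of the upper bound (``the $\eta K_V$ component forces any set of volume exceeding $\Theta(1/(\eta R))$ to have near-maximal expansion'') is false, since $\eta K_V$ only adds about $\eta$ to the expansion. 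To repair the argument you would have to work with the permutation-averaged consistency throughout and decode in $V$ (as in Lemma~6.11 of \cite{rst12}), rather than fix permutations and decode coordinate fibers in $V^R$.
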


The following lemma asserts that if the graph $G$ is a $NO$-instance
of \smallsetexpansion($\gamma$, $\delta$,$M$) then for almost all $A
\in V^R$ the functions have no influential coordinates. 
\begin{lemma} \label{lem:ugdecoding}
	Fix $\delta = 1/R$.  Suppose for all sets $S \subset V$ with $\vol(S) \in \left(\delta/M,
	M\delta\right) $ , $\Phi(S) \geq 1-\gamma$ then for all
	$\tau > 0$, 
	$$ \pr_{A \sim V^R}\left[ \exists i \mid \Inf_{i}[\Gamma_{1-\eta} g_A] \geq
	\tau \right] \leq \frac{1000}{\tau^3 \eps^2 \eta^2} \cdot \max(1/M,
	\gamma)  $$
\end{lemma}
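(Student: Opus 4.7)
The plan is to apply \prettyref{thm:sse-ug-soundness} in contrapositive form via a standard SSE-to-UG decoding argument. Let $\zeta$ denote the left-hand side probability of the inequality, and define the label set $L(A) = \{i \in [R] : \Inf_i[\Gamma_{1-\eta} g_A] \geq \tau\}$; the goal is to show $\zeta = O(\tau^{-3}(\eta\eps)^{-2}\max(1/M, \gamma))$.

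Since $g_A \in [0,1]$, we have $\var[g_A] \leq 1$, and \prettyref{lem:sum-of-influences1} (applied with the spectral gap $\eps$ of $H^R$) gives $\sum_i \Inf_i[\Gamma_{1-\eta} g_A] \leq 1/(\eta\eps)$; hence $|L(A)| \leq 1/(\tau\eta\eps)$ for every $A$. Define a randomized labeling $\ell : V^R \to [R]$ by taking $\ell(A)$ to be a uniformly random element of $L(A)$ when non-empty (and arbitrary otherwise). The labeling is covariant under permutations, i.e.\ $L(\pi A) = \pi(L(A))$, because $g_{\pi A}(x) = g_A(\pi^{-1}x)$ by symmetry of $G_\eta^{\otimes R}$. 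Consequently the matching quantity appearing in \prettyref{thm:sse-ug-soundness} simplifies to
\[
p \;=\; \pr_{A,\; B \sim G_\eta^{\otimes R}(A)}\left[\ell_1(A) = \ell_2(B)\right],
\]
where $\ell_1(A), \ell_2(B)$ are independent uniform samples from $L(A), L(B)$ respectively.

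The key step is to establish $p \geq c\,\zeta\,\tau^3(\eta\eps)^2$ for an absolute constant $c > 0$. Writing $\Gamma_{1-\eta} g_A(x) = \E_{B' \sim G_\eta^{\otimes R}(A)}[\Gamma_{1-\eta} F(B', x)]$ and expanding $\Inf_i[\Gamma_{1-\eta} g_A]$ as a sum of squared coefficients in the tensor eigenbasis of $H^R$, a Cauchy–Schwarz and reverse-Markov argument on $\E_A(\E_{B'} \langle \Gamma_{1-\eta} F(B',\cdot), e_\sigma\rangle)^2$ shows that for every fixed $i$, if $i \in L(A)$ then with probability $\Omega(\tau)$ over $B \sim G_\eta^{\otimes R}(A)$ one also has $i \in L(B)$. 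Combined with the fact that $|L(A)|,|L(B)| \leq 1/(\tau\eta\eps)$ (so each specific label has probability at least $\tau\eta\eps$ under the uniform choice), this yields a factor of $\tau^2(\eta\eps)^2$ from picking two matching labels, multiplied by an $\Omega(\tau)$ factor from the correlation argument, giving $p \geq c\,\zeta\,\tau^3(\eta\eps)^2$.

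Applying \prettyref{thm:sse-ug-soundness} with this $p$ as its $\zeta$ produces a set $S \subset V$ with $\vol(S) \in [p/(16R), 3/(\eta R)]$ and $\Phi(S) \leq 1 - p/16$. Recalling $\delta = 1/R$ and choosing the free parameter so that $M\eta \geq 3$, the volume of $S$ lies in $(\delta/M, M\delta)$ whenever $p \geq 16/M$. Under this condition the NO hypothesis of $\smallsetexpansion(\gamma,\delta,M)$ forces $\Phi(S) \geq 1-\gamma$, and hence $p/16 \leq \gamma$. Combining, $p \leq 16\max(\gamma, 1/M)$, so $\zeta \leq \tfrac{16}{c\,\tau^3(\eta\eps)^2}\max(\gamma, 1/M)$; absorbing the absolute constant yields the stated $1000$-factor bound. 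The principal obstacle is the correlation step, i.e.\ showing that if $i$ is influential for $\Gamma_{1-\eta} g_A$ then it is also influential for $\Gamma_{1-\eta} g_B$ with probability $\Omega(\tau)$ over tensor-product neighbors $B$. This is a second-moment computation on the eigenbasis coefficients of the noise-smoothed function $\Gamma_{1-\eta} F$, leveraging that both $\Gamma_{1-\eta} g_A$ and $\Gamma_{1-\eta} g_B$ are expectations of the same underlying function over overlapping tensor-product neighborhoods.
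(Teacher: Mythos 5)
Your overall architecture (decode a labelling from influential coordinates, lower-bound the permutation-matching probability, invoke \prettyref{thm:sse-ug-soundness}, and read off the contradiction with the \no hypothesis) is the same as the paper's, and your parameter bookkeeping at the end is essentially right. But the central step is not: you work with a \emph{single} list $L(A)=\{i:\Inf_i[\Gamma_{1-\eta}g_A]\ge\tau\}$ defined from the averaged functions $g_A$ on both endpoints, and you claim that if $i\in L(A)$ then $i\in L(B)$ with probability $\Omega(\tau)$ over $B\sim G_\eta^{\otimes R}(A)$. No Cauchy--Schwarz/second-moment argument gives this: averaging only yields \emph{upper} bounds on influences (by convexity, $\Inf_i[\Gamma_{1-\eta}g_B]\le \E_{C\sim G_\eta^{\otimes R}(B)}\Inf_i[\Gamma_{1-\eta}f_C]$), never lower bounds, and $g_A$ and $g_B$ are averages over different neighborhoods. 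The claim is in fact false as a structural statement. For instance, take $G$ essentially bipartite with sides $U,W$, a small set $S_0\subseteq U$, and $F'$ the indicator that the folded multiset contains an element whose first component lies in $S_0$. Then $\Inf_i[\Gamma_{1-\eta}g_A]$ is large roughly when $a_i$ is a $W$-vertex with many neighbors in $S_0$, whereas for a neighbor $B$ the coordinate $b_i$ lies (mostly) in $U$, whose neighbors avoid $S_0$, so $\Inf_i[\Gamma_{1-\eta}g_B]$ is tiny; the lists $L(A)$ and $L(B)$ are typically disjoint even though $L(A)\ne\emptyset$ for many $A$. So your lower bound $p\ge c\,\zeta\,\tau^3(\eta\eps)^2$ does not follow, and the reduction to \prettyref{thm:sse-ug-soundness} collapses.

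What does propagate is influence from the averaged function to the \emph{un-averaged} one: since $g_A=\E_{B\sim G_\eta^{\otimes R}(A)}f_B$ and influences are convex, $\Inf_i[\Gamma_{1-\eta}g_A]\ge\tau$ gives $\E_B\Inf_i[\Gamma_{1-\eta}f_B]\ge\tau$, hence $\Inf_i[\Gamma_{1-\eta}f_B]\ge\tau/2$ for at least a $\tau/2$ fraction of neighbors $B$. This is why the paper's decoding uses \emph{two} lists, $L'_A=\{i:\Inf_i[\Gamma_{1-\eta}g_A]>\tau\}$ and $L_A=\{i:\Inf_i[\Gamma_{1-\eta}f_A]>\tau/2\}$, and labels each tuple by picking, with probability $\tfrac12$ each, a random coordinate of $L'_A$ or of $L_A$; the matching event is then $\ell(A)\in L'_A$ meeting $\ell(B)\in L_B$, which the convexity argument does make happen with probability at least $\zeta\cdot\tfrac{\tau}{2}\cdot\tfrac14\cdot(\tfrac{\tau\eta\eps}{2})^2$ after accounting for the list sizes $|L_A|,|L'_A|=O(1/(\tau\eta\eps))$ from \prettyref{lem:sum-of-influences1}. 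If you replace your single-list decoding with this mixed labelling (and keep your final application of \prettyref{thm:sse-ug-soundness} and the $\max(1/M,\gamma)$ bookkeeping), the proof goes through; as written, the key correlation step is a genuine gap.
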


\begin{proof}
For each $A \in V^R$, let $L_A = \set{ i\in
  [R]\mid \Inf_i (\Gamma_{1-\eta}  f_A)>\tau/2}$ and $L'_A = \set{
	  i\in [R]\mid \Inf_i (\Gamma_{1-\eta} g_A)>\tau}$. 
Call a vertex $A \in V^R$ to be {\it good} if $L'_A \neq \emptyset$.
By \prettyref{lem:sum-of-influences1}, the sum of influences of $\Gamma_{1-\eta} g_A$ is at most
$\frac{1}{\epsilon \eta} \var[g_A] \leq \frac{1}{\epsilon \eta}$.
Therefore, the cardinality of $L'_A$ is upper bounded by $|L'_A| \leq
\frac{2}{\tau \epsilon \eta}$.  Similarly, the cardinality of $L_A$ is
upper bounded by $|L_A| \leq \frac{1}{\tau \epsilon \eta}$.

The lemma asserts that at most a $\frac{1000}{\tau^3\eta^2 \epsilon^2 } \cdot
\max(1/M,\gamma)$ fraction
of vertices are {\it good}. For the sake of contradiction, assume that $\pr_{A \in V^R}
	  \left[ L'_A \neq \emptyset \right] \geq 1000 \max(1/M,\gamma) 
	  /\tau^2 \epsilon^2 \eta^2$.

Define a labelling $\ell : V^R \to [R]$ as follows:  for each $A \in
V^R$, with probability $\frac{1}{2}$ choose a random coordinate in
$L_A$ and
with probability $\half$, choose a random coordinate in $L'_A$.
 If the sets $L_A,L'_A$ are empty, then we choose a
uniformly random coordinate in $[R]$.


Observe that for each $A\in V^R$, the function $g_A$ is the average over bounded
functions $f_B\from V_H^R\to[0,1]$, where $B \sim G^R_{\eta}(A)$.
Fix a vertex $A \in V^R$ such that $L'_A \neq \emptyset$ and a
coordinate $i \in L'_A$.  In particular, we have that
$\Inf_{i}[\Gamma_{1-\eta} g_A] \geq \tau$.  Using convexity of influences,
this implies that,
$$E_{B \sim G^{\otimes R}_\eta(A)} \Inf_i[\Gamma_{1-\eta} f_B] \geq
	\tau \mper$$
Specifically, this implies that for at least a $\frac{\tau}{2}$ fraction of the neighbours $B \sim
G^R_{\eta}(A)$, the influence of the $i^{th}$ coordinate on $f_B$ is at
least $\frac{\tau}{2}$.  Hence, if $L'_A \neq \emptyset$ then for at least a $\tau/2$ fraction
of neighbours $B \sim G^{\otimes R}_\eta(A)$ we have $L'_A \cap L_B
\neq \emptyset$.

By definition of the functions $f_A, g_A$, it is clear that for every
permutation $\pi : [R] \to [R]$, $f_A(\pi(x)) = f_{\pi(A)}(x)$ and $g_A(\pi(x)) =
g_{\pi(A)}(x)$. Therefore, for every permutation $\pi : [R] \to [R]$
and $A \in V^R$,
$$ L_A =  \pi^{-1}(L_{\pi(A)}) \qquad \text{ and } L'_A =
\pi^{-1}(L'_{\pi(A)})$$

	  From the above discussion, for every {\it good} vertex $A \in V^R$, for at least a $\tau/2$
fraction of the vertices $B \sim G^{\otimes R}_{\eta}(A)$, and every pair
of permutations $\pi_A,\pi_B : [R] \to [R]$, we have  $\pi^{-1}_A(L'_{\pi_A(A)})
\cap \pi^{-1}_B(L_{\pi_B(B)}) \neq \emptyset$.  This implies that,

\begin{align*}
 & \pr_{A \sim V^R, B \sim G_{\eta}^{\otimes R}(A)} \pr_{\pi_A,\pi_B} \left\{
	 \pi_A^{-1}\left(\ell(\pi_A(A))\right) =
 \pi_B^{-1}\left(\ell(\pi_B(B))\right) \right\} \\
 & \geq \pr_{A \sim V^R} [L'_A \neq \emptyset] \cdot \pr_{B \sim
	 G^{\otimes R}_{\eta}(A)}[L'_A \cap L_B \neq \emptyset | L'_A
	 \neq \emptyset] \cdot 
 \Pr{\pi_{A}^{-1}(\ell(\pi_{A}(A))) =
		\pi_{B}^{-1}(\ell(\pi_{B}(B))) \mid L'_A \cap L_B \neq
	\emptyset} \\
& \geq 	\pr_{A \sim V^R} [L'_A \neq \emptyset] \cdot \left(\frac{\tau}{2}\right) \cdot \frac{1}{2} \cdot \frac{1}{2} \cdot
\frac{1}{|L'_A|} \frac{1}{|L_B|} \\
& \geq 	\pr_{A \sim V^R} [L'_A \neq \emptyset] \cdot \left(\frac{\tau}{2}\right) \cdot \frac{1}{2} \cdot \frac{1}{2} \cdot
\left(\frac{\tau \eta \epsilon}{2}\right)^2 \\
& \geq 16\max(\nfrac{1}{M},\gamma)
\end{align*}
By \prettyref{thm:sse-ug-soundness}, this implies that there exists a
set $S \subset V$ with $\vol(S) \in [ \frac{1}{MR}, \frac{3}{\eta R}]$
satisfying $\Phi(S) \leq 1-\gamma$.  A contradiction.
\end{proof}
Finally, we are ready to show the soundness of the reduction.  
\begin{theorem} (Soundness)
\label{thm:sse-av-sound}
For all $\epsilon,  d$ there exists choice of $M$ and $\gamma, \eta$
such that the following holds. 
Suppose for all sets $S \subset V$ with $\vol(S) \in \left(\delta/M,
	M\delta\right) $ , $\Phi(S) \geq 1-\eta$, then for all $F' :
	\cV' \to [0,1]$ such that $ \var_1[F'] \geq \frac{1}{10}$, we
	have $\val_{\cP'}(F') \geq \Omega(\sqrt{\epsilon \log{d}})$
\end{theorem}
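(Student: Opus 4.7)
The plan is to run a contradiction argument: assume that $\val_{\cP'}(F') \le c\sqrt{\epsilon \log d}$ for a sufficiently small constant $c$, and derive from this either a small non-expanding set in $G$ (violating the \no{} hypothesis) or a direct contradiction to the dictatorship-gadget soundness. The strategy decomposes the global quantity $\val_{\cP'}(F')$ into an average over the ``outer'' variable $A \in V^R$ of gadget costs $\val_{\mu_H^R}(g_A)$ for the averaged functions $g_A(x) = \E_{B \sim G_\eta^{\otimes R}(A)} F(B,x)$, so that \prettyref{prop:soundgadget} can be applied pointwise.

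First, by \prettyref{lem:val-in-terms-of-g} we have $\val_{\cP'}(F') \geq \E_{A} \val_{\mu_H^R}(g_A)$, and by \prettyref{lem:varianceofg} the averaged $\ell_1$-variance satisfies $\E_{A} \var_1[g_A] \geq \tfrac{1}{2}\var_1[F']^2 - \val_{\cP'}(F') \geq \tfrac{1}{200} - c\sqrt{\epsilon \log d}$, which is $\Omega(1)$ once $\epsilon$ and $c$ are small enough. By Markov's inequality, for a constant fraction $\rho$ of $A \in V^R$ we have $\var_1[g_A] \geq \Omega(1)$; call these $A$ \emph{heavy}.

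Next I invoke \prettyref{lem:ugdecoding} with a threshold $\tau$ to be fixed. Under the \no{} hypothesis of $\smallsetexpansion(\gamma,\delta,M)$ with $\delta = 1/R$, the fraction of $A$ for which $\Gamma_{1-\eta} g_A$ has some coordinate of influence exceeding $\tau$ is at most $\tfrac{1000}{\tau^3 \epsilon^2 \eta^2}\max(1/M,\gamma)$. I choose $M$ large and $\gamma$ small (as functions of $\epsilon, \eta, d, \tau$) so that this fraction is at most $\rho/2$. Hence there is a set of $A$'s of measure $\Omega(1)$ that are simultaneously heavy \emph{and} have $\max_i \Inf_i[\Gamma_{1-\eta} g_A] \leq \tau$. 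For each such $A$, \prettyref{prop:soundgadget} applied to $g_A$ yields
\[
\val_{\mu_H^R}(g_A) \;\geq\; \Omega(\sqrt{\epsilon \log d})\,\var_1[g_A] \;-\; O(\eta) \;-\; \tau^{\Omega(\epsilon\eta/\log(1/\alpha))},
\]
where $\alpha$ is the minimum stationary probability of $H$ (a constant depending only on $\epsilon$).

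Finally I choose parameters so that all error terms are negligible compared to $\sqrt{\epsilon \log d}$: take $\eta = \epsilon/(100 d)$ (already fixed), pick $\tau$ so small that $\tau^{\Omega(\epsilon\eta/\log(1/\alpha))} \ll \sqrt{\epsilon \log d}$, and then pick $M, \gamma$ small enough (depending on $\tau, \epsilon, \eta, d$) to make the ``bad'' fraction of $A$ at most $\rho/2$. Averaging the pointwise bound over the $\Omega(1)$ fraction of good heavy $A$'s gives
\[
\val_{\cP'}(F') \;\geq\; \E_{A} \val_{\mu_H^R}(g_A) \;\geq\; \Omega(\sqrt{\epsilon \log d}),
\]
contradicting the hypothesis. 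The main obstacle is the parameter balancing: the influence threshold $\tau$ enters both the dictatorship error term $\tau^{\Omega(\epsilon\eta/\log(1/\alpha))}$ (which forces $\tau$ small) and the decoding fraction $1/(\tau^3 \epsilon^2 \eta^2)\max(1/M,\gamma)$ (which demands $M^{-1}$ and $\gamma$ very small once $\tau$ is fixed). One has to peel the choices in the order $\epsilon \to d \to \eta \to \tau \to (M,\gamma)$ so that the \SSEH{} parameter $\delta$ can then be chosen via the hypothesis and the whole chain closes consistently.
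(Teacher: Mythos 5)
Your proposal is correct and follows essentially the same route as the paper's proof: decompose via \prettyref{lem:val-in-terms-of-g}, use \prettyref{lem:varianceofg} (with the assumption that $\val_{\cP'}(F')$ is small, which the paper phrases as a ``WLOG else done'' rather than a contradiction) to get a constant fraction of $A$'s with $\var_1[g_A] = \Omega(1)$, control the fraction of influential $A$'s via \prettyref{lem:ugdecoding} by choosing $M,\gamma$ after $\tau$, and apply \prettyref{prop:soundgadget} pointwise before averaging. The parameter ordering you describe ($\eta = \epsilon/(100d)$ first, then $\tau$ small enough to kill the invariance error, then $M,\gamma$) matches the paper's choices, so there is nothing substantive to add.
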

\begin{proof}
	Recall that we had fixed $\eta = \eps/(100 d)$.  We will choose $\tau$ to small enough
	so that the error term in the soundness of dictatorship test
	(\prettyref{prop:soundgadget}) is smaller than $\epsilon$.
	Since the least probability of any vertex in Markov
	chain $H$ is $\epsilon$, setting $\tau =
	\epsilon^{1/\epsilon^3}$ would suffice.

	First, we know that if $G$ is a $NO$-instance of
	\smallsetexpansion($\gamma,\delta,M$) then for almost all $A
	\in V^R$, the function $g_A$ has no influential coordinates.
	Formally, by \prettyref{lem:ugdecoding}, we will have  	$$ \pr_{A \sim
	V^R}\left[ \exists i \mid \Inf_{i}[\Gamma_{1-\eta} g_A] \geq
	\tau \right] \leq \frac{1000}{\tau^3\eta^2} \cdot \max(1/M,
	\gamma)  \mper $$
	For an appropriate choice of $M, \gamma$, the above inequality
	implies that for all but an $\epsilon$-fraction of vertices $A
	\in V^R$, the function $g_A$ will have no influential
	coordinates.

	Without loss of generality, we may assume that
	$\val_{\cP'}(F') \leq \sqrt{\epsilon \log d}$, else we would
	be done.  Applying \prettyref{lem:varianceofg}, we get that $\E_{A \in V^R}
	\var_1[g_A] \geq (\var_1[F])^2 - \val_{\cP'}(F') \geq
	\frac{1}{200}$.  This implies
	that for at least a
	$\frac{1}{400}$ fraction of $A \in V^R$, $\var_1[g_A] \geq
	1/400$.  Hence for at least an $1/400-\epsilon$ fraction of vertices $A
	\in V^R$ we have,
	$$ \var_1[g_A] \geq \frac{1}{400} \qquad \text{ and } \qquad
	\max_i \Inf_i(\Gamma_{1-\eta}(g_A)) \leq \tau$$
	By appealing to the soundness of the gadget
	(\prettyref{prop:soundgadget}), for every such vertex $A \in
	V^R$,
	$\val_{\mu_H^R}(g_A) \geq \Omega(\sqrt{\epsilon \log{d}}) -
	O(\epsilon) = \Omega(\sqrt{\epsilon \log{d}})$.
	Finally, by applying \prettyref{lem:val-in-terms-of-g}, we get
	the desired conclusion.
	$$ \val_{\cP'}(F') \geq \E_{A \in V^R} \val_{\mu_H^R}(g_A)
	\geq \Omega(\sqrt{\epsilon \log{d}})$$
\end{proof}

\section{Reduction from {\em Analytic $d$-Vertex Expansion} to Vertex Expansion }
\label{sec:sampling}


\begin{theorem}
\label{thm:reduction}
A \cvss hardness for $d$-\analyticvsep implies a \cvs{4}{16} hardness for 
balanced symmetric-vertex expansion on graphs of degree at most $D$, where
$D = \max \set{100 d/s, 2 \log (1/c)}$.

\end{theorem}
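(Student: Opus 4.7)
}

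The plan is to construct the graph $G$ by directly sampling the constraints of the instance $(V,\cP)$. First, by replicating each vertex proportionally to $\mu(v)$, we may assume that the marginal $\mu$ is (nearly) uniform on $V$, so we can identify vertices of $G$ with the underlying set $V$ of size $n$. We then draw $N$ independent samples $(X^{(j)},Y_1^{(j)},\ldots,Y_d^{(j)}) \sim \cP$ and, for each such tuple, place the $d$ edges $\{X^{(j)},Y_i^{(j)}\}$ for $i \in [d]$ into $G$. The number of samples is chosen as $N = \Theta(Dn/d)$, so that the expected degree $2Nd/n$ is at most $D/2$; by standard Chernoff and a union bound, almost every vertex has degree at most $D$, and the $o(1)$ fraction of vertices violating this bound can be safely deleted without affecting balance or expansion.

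For completeness, given a function $F:V\to\{0,1\}$ witnessing $\phiav(V,\cP)(F)\le c$ with $\E_{X,Y\sim\mu}|F(X)-F(Y)|\ge 1/100$, set $S=F^{-1}(1)$. Then $S$ is automatically balanced in $G$, since $\mu(S)\mu(\bar S) \geq 1/200$. Let $T$ be the number of sampled tuples that are \emph{cut}, i.e., $\max_i|F(X^{(j)})-F(Y_i^{(j)})|=1$. Every boundary vertex of $S$ in $G$ must lie in some cut tuple, so $|N_G(S)\cup N_G(\bar S)| \le (d+1)T$. Since $\E[T] = N\cdot\val_\cP(F) \le Nc$, a Chernoff bound (which requires the factor $D \ge 2\log(1/c)$ to beat the rare-event regime when $c$ is small) gives $T \le 2Nc$ with high probability, yielding $\phivs_G(S) \le 4c$.

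For soundness, suppose $\phiav(V,\cP) \ge s$. For every balanced set $S\subset V$ with $\mu(S)\mu(\bar S)\ge 1/100$, the indicator $F=\mathds{1}_S$ satisfies $\var_1[F] \ge 1/50 \ge 1/100$, so by hypothesis $\val_\cP(F) \ge s\cdot \var_1[F] \ge s/50$. The expected number of cut tuples is at least $sN/50$, and each cut tuple contributes at least two distinct vertices (namely $X^{(j)}$ and some $Y_i^{(j)}$ of the opposite color) to $N_G(S)\cup N_G(\bar S)$. Applying Chernoff to the fraction of cut tuples and taking a union bound over all $2^n$ subsets $S$ shows that with high probability every balanced $S$ has $|N_G(S)\cup N_G(\bar S)|/n \gtrsim s\cdot \mu(S)\mu(\bar S)$, hence $\phivs_G(S) \ge s/16$. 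The union bound is precisely where the condition $D \ge 100d/s$ is needed: it forces $N = \Theta(Dn/d) = \Omega(n/s)$, which is exactly enough for the Chernoff deviation probability $\exp(-\Omega(sN))$ to dominate the entropy $2^n$ of all candidate subsets.

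The main obstacle is balancing two conflicting demands on the sample size $N$: it must be small enough ($N\lesssim Dn/d$) to keep the maximum degree bounded by $D$, yet large enough ($N\gtrsim n/s$ for soundness, and $Nc \gtrsim \log(1/c)$ for completeness) to make all of the Chernoff concentration arguments succeed simultaneously over the $2^n$ possible separators. The quantitative choice $D = \max\{100d/s,\,2\log(1/c)\}$ is tailored precisely to reconcile these two regimes, and the constants $4$ and $16$ in $\cvs{4}{16}$ absorb the slackness in the Chernoff estimates, the $(d+1)$ factor relating cut tuples to boundary vertices, and the balance parameter $1/100$.
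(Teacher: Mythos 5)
The central gap is that you conflate the number of cut \emph{tuples} with the measure of distinct \emph{boundary vertices}, and this breaks both directions of your argument. For completeness, your own estimates give $\Abs{N_G(S)\cup N_G(\bar S)} \le (d+1)T$ with $T \le 2Nc$ and $N = \Theta(Dn/d)$, hence a boundary fraction of order $Dc$; after normalizing by $\Abs{S}\Abs{\bar S}/n \approx n/200$ this yields $\phivs_G(S) = \Theta(Dc)$, not $4c$. Since $D \ge 100d/s$, this is far larger than $c$ in the intended parameter regime, so the stated conclusion does not follow from your bound: a correct completeness argument must use that the cut tuples concentrate on few distinct vertices, i.e.\ it must be run vertex-by-vertex rather than tuple-by-tuple. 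In soundness the same conflation is used in the opposite (and invalid) direction: a large number of cut tuples does not force a large number of distinct boundary vertices, since all cut tuples may touch the same small set of vertices --- indeed that is exactly what happens in the completeness case --- so ``each cut tuple contributes at least two distinct vertices'' cannot be accumulated over tuples. The paper's construction (\prettyref{prop:subsample}) is designed precisely to avoid this: it samples $D/d$ tuples \emph{per vertex} $X$ from the marginal of $\cP'$ conditioned on $X$ being the first coordinate, so that the boundary indicator of $X$ dominates the average of the $D/d$ i.i.d.\ tuple-cut indicators at $X$ (max $\geq$ mean). Hoeffding over the independent per-vertex samples plus a union bound over the $2^n$ boolean cuts then concentrates exactly the quantity $\E_X \max_{Y \in N_G(X)}\Abs{F(X)-F(Y)}$ that defines $\phivs$, and completeness likewise only pays per vertex, plus an additive $e^{-D}\le c$ from deleting the few high-degree vertices --- which is the actual role of the $2\log(1/c)$ term in $D$, not the Chernoff rare-event issue you assign to it.

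You also skip the passage to a uniform marginal, which is not free. In the paper this is \prettyref{prop:uniform}: each vertex $i$ is replaced by a cloud of $\lceil \mu(i)T\rceil$ copies, and the delicate direction is soundness, because a balanced cut of the blown-up instance may split clouds. One must show that replacing $F$ by its cloud average $f(i)=\E_a F(i,a)$ does not collapse the denominator; this is done via \prettyref{lem:help1} and \prettyref{lem:help2}, which charge the within-cloud disagreement of $F$ to the objective value by triangle inequalities, together with \prettyref{thm:cont-to-bin} to return to boolean functions. This step accounts for part of the constant-factor loss in the final \cvs{4}{16} statement, so ``we may assume $\mu$ is nearly uniform'' needs that argument (or an equivalent one) to be supplied rather than asserted.
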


At a high level, the proof of \prettyref{thm:reduction} has two steps. 
\begin{enumerate}
\item 
We show that a \cvss hardness for \analyticvsep. 
implies a \cvs{2}{4} hardness for instances of \analyticvsep having uniform distribution
(\prettyref{prop:uniform}).

\item We show that a \cvss hardness for instances of $d$-\analyticvsep having  uniform stationary distribution
implies a \cvs{2}{2} hardness for balanced symmetric-vertex expansion on $\Theta(D)$-regular graphs.
(\prettyref{prop:subsample}).

\end{enumerate}

\begin{proposition}
\label{prop:uniform}
A \cvss hardness for \analyticvsep. 
implies a \cvs{2}{4} hardness for instances of \analyticvsep having uniform distribution.  
\end{proposition}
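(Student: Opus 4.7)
The plan is to realize the uniform-marginal instance by duplicating each vertex $v \in V$ into $N\mu(v)$ equal copies, where $N$ is a common denominator of the rational numbers $\{\mu(v)\}_{v \in V}$. Setting $n_v := N\mu(v)$, $V' := \{(v,i) : v \in V,\, i \in [n_v]\}$, and defining $\cP'$ by first sampling $(X, Y_1, \ldots, Y_d) \sim \cP$ and then independently replacing each coordinate by a uniformly random copy, one checks that all marginals of $\cP'$ coincide with the uniform distribution $\mu'$ on $V'$, so $(V', \cP')$ is a legal uniform-marginal \analyticvsep instance of size polynomial in the input.

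Completeness is immediate: given $F : V \to \{0,1\}$ with $\val_\cP(F)/\var_1[F] \leq c$, the lift $F'((v,i)) := F(v)$ is constant on copies, so $\val_{\cP'}(F') = \val_{\cP}(F)$ and $\var_1[F'] = \var_1[F]$, giving ratio $\leq c \leq 2c$. For soundness, suppose $F' : V' \to \{0,1\}$ has $\var_1[F'] \geq \frac{1}{100}$ and $\val_{\cP'}(F')/\var_1[F'] < s/4$, and construct $F : V \to \{0,1\}$ by independent random selection, $F(v) := F'((v, I_v))$ with each $I_v$ uniform on $[n_v]$ and the $I_v$'s independent across $v$. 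Unfolding definitions, the joint distribution of $(F(X), F(Y_1), \ldots, F(Y_d))$ when $(X, Y_1, \ldots, Y_d) \sim \cP$ coincides with that of $(F'(X'), F'(Y_1'), \ldots, F'(Y_d'))$ when $(X',Y_1',\ldots,Y_d') \sim \cP'$, so $\E_F[\val_\cP(F)] = \val_{\cP'}(F')$, and Markov's inequality gives $\val_\cP(F) \leq 2\val_{\cP'}(F')$ with probability $\geq 1/2$. Meanwhile $\var_1[F] = 2 p_F(1-p_F)$ with $p_F := \sum_v \mu(v) F'((v,I_v))$ a sum of independent bounded random variables satisfying $\E_F[p_F] = p_{F'}$ and $\var_F[p_F] \leq \max_v \mu(v)/4$; Chebyshev then forces $|p_F - p_{F'}| \leq \eta$ with probability $\geq 3/4$ for any small $\eta$, provided $\max_v \mu(v)$ is sufficiently small, which we guarantee by first pre-splitting any heavy vertices of $V$ into equal-mass copies (an operation that does not change $\val(V,\cP)$). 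A union bound then produces $F$ with $\val_\cP(F) \leq 2\val_{\cP'}(F')$ and $\var_1[F] \geq \var_1[F']/2$, yielding $\val_\cP(F)/\var_1[F] < s$.

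The principal obstacle is preserving the balance constraint $\var_1[F] \geq 1/100$ under random selection, since $p_F$ is only controlled up to $O(\sqrt{\max_v \mu(v)})$, which could drop $\var_1[F]$ marginally below $1/100$ when $p_{F'}$ sits right at the boundary of the balance region. This is handled by combining the pre-splitting step (making $\max_v \mu(v)$ arbitrarily small) with, if necessary, a single-atom flip of $F$ to nudge $p_F$ back into the balanced regime; such an adjustment perturbs $\val_\cP(F)$ by only $O(\max_v \mu(v))$, comfortably absorbed into the factor-$4$ slack in the soundness bound. The twin factor-$2$ losses (Markov on the numerator, concentration on the denominator) together account for the $s/4$ in the conclusion.
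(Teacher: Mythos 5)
Your completeness step matches the paper, but your soundness route (independent random selection of one copy per cloud, Markov for the numerator, Chebyshev for $p_F$ in the denominator) is genuinely different from the paper's, which is deterministic: the paper sets $f(i)=\E_a F(i,a)$, controls the numerator by convexity of the maximum of absolute values, and lower-bounds the denominator via two triangle-inequality lemmas (\prettyref{lem:help1}, \prettyref{lem:help2}) showing that the within-cloud variation $\E_{i,a,b}\Abs{F(i,a)-F(i,b)}$ is at most $2\val_{\cP'}(F)$, so the denominator survives up to a factor $2$ whenever $\val_{\cP'}(F)<\var_1[F]/4$. That argument needs no smallness of the atoms $\mu(v)$ and no randomness, and it also explains the paper's size control: vertices with $\mu(i)<1/2n^2$ are deleted and the cloud sizes are rounded up (this rounding, not the lifting, is the source of the factor $2$ in completeness), so the instance stays polynomial; your construction with $N$ a common denominator of the $\mu(v)$ does not by itself guarantee polynomial size.

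There are two genuine gaps in your soundness argument. First, the claimed distributional identity between $(F(X),F(Y_1),\ldots,F(Y_d))$ under random selection and $(F'(X'),F'(Y_1'),\ldots,F'(Y_d'))$ under $\cP'$ is false whenever a tuple drawn from $\cP$ contains repeated vertices (which is the typical situation in the instances this proposition is applied to, since the $Y_i$ are sampled i.i.d.): repeated vertices share one random index under your selection but get independent copies under $\cP'$. Fortunately only the inequality $\E_F[\val_{\cP}(F)]\le\val_{\cP'}(F')$ is needed, and it can be salvaged by a coupling argument (identify the copy chosen for the first occurrence of each distinct vertex and use monotonicity of the max), but as written the step is incorrect. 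Second, and more seriously, the pre-splitting of heavy vertices is asserted to ``not change $\val(V,\cP)$,'' yet that assertion is exactly (a special case of) the soundness statement you are trying to prove: a balanced minimizer of the equal-split instance need not be constant on clouds, and showing its value cannot drop requires precisely the cloud-averaging argument of the paper (or another run of your own argument, which again needs small atoms). Without pre-splitting, $\var[p_F]\le\max_v\mu(v)/4$ gives no concentration when some $\mu(v)$ is a constant, the bound $\var_1[F]\ge\var_1[F']/2$ can fail with constant probability, and the single-atom flip changes both $p_F$ and $\val_{\cP}(F)$ by up to $(d+1)\max_v\mu(v)$, which is not negligible. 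So the argument is circular for general instances; it would need either the paper's deterministic averaging lemmas or an explicit (and currently missing) reduction to instances with small atoms.
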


\begin{proof}

Let $(V,\cP)$ be an instance of \analyticvsep.
We construct an instance $(V',\cP')$ as follows. Let $T = 2 n^2 $.
We first delete all vertices $i$ from $V$ which have $\mu(i) < 1/2n^2$, i.e. 
$V \gets V \backslash \set{i \in V : \mu(i) < 1/2n^2}$.
Note that after this operation, the total weight of the remaining vertices is still at least 
$1 - 1/2n$ and the \analyticvsep  can increase or decrease 
by at most a factor of $2$.
Next  for each $i$, we introduce introduce $\lceil \mu(i) T \rceil$ copies of vertex $i$. We will call these vertices the cloud for vertex $i$ 
and index them as  $(i,a)$ for $a \in [\mu(i) T]$.

We set the probability mass of each $(d+1)$-tuple $( (i,a), (j_1,b_1) \ldots, (j_d,b_d))$    as follows : 
\[ \cP'( (i,a), (j_1,b_1) \ldots, (j_d,b_d) )  =  \frac{ \cP(i, j_1, \ldots, j_d)}{( \mu(i) T) \cdot \Pi_{\ell = 1}^d (\mu(j_{\ell}) T)    }  \]


It is easy to see that $\mu'(i,a) = 1/T$ for all vertices $(i,a) \in V'$.
 The analytic $d$-vertex expansion under a function $F$ is given by,
\[ \frac{\E_{( (i,a), (j_1,b_1) \ldots, (j_d,b_d) ) \sim \cP'} \max_{\ell} \Abs{F(i,a) - F(j_{\ell},b_{\ell})}}
	{\E_{ (i,a),(j,b) \sim \mu'} \Abs{ F(i,a) - F(j,b)} }  \]

where $X = (i,a)$ and  $Y_{\ell} = (j, b)$  which are sampled as follows:

\begin{enumerate}
\item Sample a $(d+1)$-tuple $(i,j_1,\ldots,j_d)$ from $\cP$.

\item Sample $a$ uniformly at random from ${1, \ldots,  \mu(i) T}$.

\item Sample $b_{\ell}$ uniformly at random from $ \set{1, \ldots, \mu(j_{\ell}) T}$ for each $\ell \in [d]$.
\end{enumerate}

\paragraph{Completeness}
Suppose, $\phiav{V,\cP} \leq c$. Let $f$ be the corresponding cut function.  
The function $f : V \to \set{0,1}$ 
can be trivially extended to a function $F: V' \to \set{0,1}$ 
thereby certifying that $\phiav{V',\cP'} \leq 2 c$.

\paragraph{Soundness}
Suppose $\phiav{V,\cP} \geq s$. Let $F : V' \to \set{0,1}$ be any balanced function. 
By convexity of absolute value function we get

\[ \E_{ ((i,a),(j_1,b_1), \ldots, (j_d,b_d)) \sim \cP'} \max_{\ell} \Abs{F(i,a) - F(j_{\ell,b_{\ell}})}  \geq 
  \E_{(i,j_1, \ldots, j_d) \sim \cP} \max_\ell \Abs{ \E_a F(i,a) - \E_{\ell} F(j_\ell, b_\ell)}. \]

So if we define $f(i) = E_a F(i,a)$, the numerator for analytic $d$-vertex expansion in $(V,\cP)$ for $f$
is only lower than the corresponding numerator for $F$ in $(V',\cP')$.
We need to lower bound the denominator, $ \E_{i,j \sim \mu}  \Abs{f(i) - f(j)}$.
The requisite lower bound follows from the following two lemmas.

\begin{lemma}
\label{lem:help1}
\[  \E_{i,j \sim \mu}  \Abs{f(i) - f(j)} \geq \E_{(i,a),(j,b) \sim \mu'} \Abs{F(i,a) - F(j,b)} - \E_{(i,a), (i,b) \sim \mu'} \Abs{F(i,a) - F(i,b)}  \] 

\end{lemma}

\begin{proof}   
The Lemma follows directly from the following two inequalities.
\[  \E_{(i,a), (j,b)} \Abs{ F(i, a) - F(j,b)}   \leq  \E_{(i,a)} \Abs{ F(i,a) - f(i)} + \E_{(j,b)} \Abs{ F(j,b) - f(j)}  + \E_{i,j} \Abs{f(i) - f(j)}  
	\qquad \textrm{(Triangle Inequality)}
 \]
and
\[ \E_{i,a} \Abs{ F(i,a) - f(i) } \leq \E_{i, a,b} \Abs{ F(i,a) - F(i,b) } \]
\end{proof}

\begin{lemma}
\label{lem:help2}
\[  \E_{i,a,b} \Abs{F(i,a) - F(i,b)}   \leq  2\val_{\cP'}(F) =
	  2 \E_{(i,a),(j_1,c_1), \ldots (j_d,c_d) \sim \cP' } \max_{\ell} \Abs{F(i,a) - F(j_\ell ,c_\ell)}   \]
\end{lemma}

\begin{proof}
Sample $(i,j_1, \ldots, j_d) \sim \cP$. 
For any neighbour $(j,c)$  of $(i,a), (i,b)$, using the Triangle Inequality we have  
\[ \Abs{F(i,a) - F(i,b)}   \leq \Abs{F(i,a) - F(j,c) }  +\Abs{F(j,c) - F(i,b) } \]

Therefore,
\begin{eqnarray*}
\Abs{F(i,a) - F(i,b)} & \leq & 
\frac{\sum_{\ell} \Abs{F(i,a) - F(j_\ell ,c_\ell)}  +  \sum_{\ell} \Abs{F(i,b) - F(j_\ell ,c_\ell)} }{d} \\
& \leq & \max_{\ell} \Abs{F(i,a) - F(j_\ell ,c_\ell)}  +  \max_{\ell} \Abs{F(i,b) - F(j_\ell ,c_\ell)} \\
\end{eqnarray*}

Taking expectations over the uniformly random choice of $a$ and $b$ from the cloud of $i$,
\[ \E_{(i,a), (i,b)}  \Abs{ F(i,a) - F(i,b)}  \leq 2 \E_{ ((i,a),(j_1,b_1), \ldots, (j_d,b_d)) \sim \cP'}  \max_{\ell} \Abs{F(i,a) - F(j_\ell ,c_\ell)} \]
\end{proof}

\prettyref{lem:help1} and \prettyref{lem:help2} together show that 
\[ \E_{i,j} \Abs{f(i) - f(j)} \geq  \frac{ \E_{(i,a), (j,b)} \Abs{ F(i,a) - F(j,b)} }{2}.  \]
as long as the value $\val_{\cP'}(F) < \var_1[F]/4$.
Therefore, for any $F : V' \to \set{0,1}$, 
\[ \frac{\E_{( (i,a), (j_1,b_1) \ldots, (j_d,b_d) ) \sim \cP'} \max_{\ell} \Abs{F(i,a) - F(j_{\ell},b_{\ell})}}
	{\E_{ (i,a),(j,b) \sim \mu'} \Abs{ F(i,a) - F(j,b)} }  \geq \frac{s}{4} \mper \]

\prettyref{thm:cont-to-bin} shows that the minimum value of \analyticvsep is obtained by boolean functions.
Therefore, $\phiav{V',\cP'} \geq s/4$.
\end{proof}

\begin{proposition}
\label{prop:subsample}
A \cvss hardness for instances of $d$-\analyticvsep having  uniform stationary distribution
implies a \cvs{2}{4} hardness for balanced symmetric-vertex expansion on 
$\Theta(D)$-regular graphs. Here $D \geq \max \set{100 d/s, 2 \log (1/c)} $.

\end{proposition}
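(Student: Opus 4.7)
The plan is to construct a random $\Theta(D)$-regular graph $G$ on vertex set $V$ by sampling $K = \Theta(D/d)$ tuples per vertex from $\cP|_{X=v}$, then argue completeness and soundness separately.

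\textbf{Construction.} For each $v \in V$, I will independently sample $K$ tuples from $\cP$ conditioned on $X = v$, and include the $d$ edges $\{v, Y_i\}$ from each sampled tuple $(v, Y_1, \ldots, Y_d)$ into $G$. Since $\cP$ has uniform marginals on every coordinate, the expected degree of each vertex is $\Theta(Kd) = \Theta(D)$. A Chernoff bound combined with $D \geq 2\log(1/c)$ and a union bound over the $n$ vertices will ensure that with probability at least $1-c$ every degree lies in the range $\Theta(D)$, so that minor pruning/padding yields a $\Theta(D)$-regular graph.

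\textbf{Completeness.} Given $F \colon V \to \{0,1\}$ with $\phiav{V,\cP}(F) \leq c$, I take $S = F^{-1}(1)$ as the candidate cut in $G$. Writing $D_F = \E_{X,Y\sim\mu}|F(X)-F(Y)|$ and $p_v = \Pr_{\cP|X=v}[\max_i |F(Y_i)-F(v)|=1]$, the hypothesis gives $\E_v[p_v] \leq c \cdot D_F$. I then bound the expected number of boundary vertices in $G$ by grouping contributions tuple-by-tuple and exploiting the $\max$-structure of \analyticvsep: each ``bad'' tuple contributes to the boundary through its source vertex and its differing neighbours, but is charged only once by $p_v$. Together with a symmetric accounting for incoming edges (leveraging uniform marginals), this will yield $\phivs_G(S) \leq 2c$.

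\textbf{Soundness.} Conversely, suppose some balanced $S \subset V$ has $\phivs_G(S) < s/4$. Setting $F = \mathbf 1_S$, the bound translates to boundary fraction $|N(S) \cup N(\bar S)|/n < (s/8) D_F$. For any $v$ with $p_v \geq s/2$, the probability over the sampling that none of $v$'s $K$ tuples is bad---so that $v$ escapes the boundary of $G$---is at most $(1-s/2)^K \leq e^{-Ks/2} \leq c$, using $K \geq 100/s$ from $D \geq 100d/s$. By Markov on the random construction, the density of ``truly bad but non-boundary'' vertices is $O(c)$ with constant probability, so $\Pr_v[p_v \geq s/2] \leq (s/8)D_F + O(c)$. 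Averaging then gives $\E_v[p_v] \leq (s/2)D_F + (s/8)D_F + O(c) < s \cdot D_F$, i.e., $\phiav{V,\cP}(F) < s$, contradicting the \no-hypothesis and so forcing $\phivs_G(S) \geq s/4$.

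\textbf{Main obstacle.} The delicate step will be the completeness argument: naively, the $K$-fold per-vertex sampling could amplify the boundary count by a factor of $K$, so the factor-$2$ guarantee requires careful accounting---each bad tuple is charged only once (via $\max$), irrespective of how many of its $d$ edges cross the cut, and uniform marginals let us handle incoming edges by symmetry. The two thresholds $D \geq 2\log(1/c)$ and $D \geq 100d/s$ arise independently from Chernoff concentration of degrees and from the Chernoff that catches ``truly bad'' analytic vertices as boundary vertices in $G$.
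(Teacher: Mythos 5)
Your construction is the same as the paper's (for each vertex sample $D/d$ tuples from $\cP'$ conditioned on that vertex being the first coordinate, and take the union of the resulting stars), but your soundness argument has a genuine gap. The balanced set $S$ with $\phivs_G(S) < s/4$ is chosen adversarially \emph{after} the random graph $G$ is realized, so any probabilistic guarantee about the construction must hold simultaneously for all $2^n$ candidate indicator functions. Your argument fixes a single $F=\mathbf{1}_S$ and controls the density of ``truly bad but non-boundary'' vertices by Markov's inequality, ``with constant probability''; a per-function guarantee that holds only with constant probability cannot be union-bounded over exponentially many functions, so the contradiction you derive applies to a cut fixed before the coins of the reduction are tossed, not to the minimizing cut of the realized graph. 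The paper's proof is organized around exactly this point: for each fixed balanced $F$ it applies Hoeffding's inequality to the $n$ independent per-vertex samplings (each vertex contributes a bounded independent term to $\E_X \max_{Y\in N_G(X)}\Abs{F(X)-F(Y)}$), obtaining failure probability $\exp(-n(sD/d)^2)$, and then union bounds over all $2^n$ boolean functions; this is where $D \geq 100d/s$ is actually needed. You invoke $D\geq 100d/s$ only to make the per-vertex catch probability $e^{-Ks/2}$ small (and even there $e^{-50}\le c$ need not hold), which is a different and insufficient use. Since your vertex-wise ``caught'' indicators are independent across vertices, the repair is to replace Markov by a Chernoff bound and add the union bound --- but as written the soundness does not establish the proposition.

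Two further points. For constant $D$, a union bound over the $n$ vertices cannot give that all degrees are $\Theta(D)$: the per-vertex failure probability is only $e^{-\Omega(D)}$, a constant, so $n e^{-\Omega(D)}\gg 1$; the paper instead deletes the high-degree vertices (an expected $e^{-D}$ fraction) and observes that this changes the expansion of every set by at most an additive $e^{-D}$, absorbed using $D \geq 2\log(1/c)$. Also, your final averaging bounds $\E_v\bigl[p_v\,\mathbf{1}[p_v<s/2]\bigr]$ by $(s/2)D_F$ when it is only bounded by $s/2$; since $D_F$ can be as small as $1/50$ for a balanced cut, the constants in your claimed contradiction need to be redone (a fixable, constant-factor matter). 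Finally, on completeness you correctly identify the delicate issue --- the $K$-fold per-vertex sampling can amplify the boundary --- but the proposed remedy (charging each bad tuple once via the max) does not address it: a vertex $v$ lands on the boundary as soon as \emph{any one} of its $K$ independently drawn tuples is bad, so its boundary probability is $1-(1-p_v)^K$, which can exceed $p_v$ by a factor of $K$; no per-tuple accounting removes this. The paper's own completeness step is stated without detail, but your sketch does not supply the missing argument either.
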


\begin{proof}

Let $(V',\cP')$ be an instance of $d$-\analyticvsep. 
We construct a graph $G$ from $(V',\cP')$ as follows. We initially set $V(G) = V'$. For each vertex $X$ we pick $D$ neighbors
by sampling $D/d$ tuples from the marginal distribution of $\cP'$ on tuples containing $X$ in the first coordinate.

Let $\deg(i)$ denote the degree of vertex $i$, i.e. the number of vertices adjacent to vertex $i$ in $G$. 
It is easy to see that $\deg(i) \geq D$ and $\Ex{\deg(i)} = 2D\ \forall i \in V(G)$. 
Let $L = \set{i \in V(G) | \deg(i) > 4 D}$. Using Hoeffding's Inequality,
we get a tight concentration for $\deg(i)$ around $2D$.
\[  \Pr{ \deg(i) >  4D} \leq e^{-D} \mper \]
Therefore, $\Ex{ \Abs{L} } < n/e^D$. We delete these vertices from $G$, i.e. $V(G) \gets V(G) \backslash L$. 
With constant probability, all remaining vertices will have their degrees in the range $[D/2, 4D]$. 
Also, the vertex expansion of every set will decrease by at most an additive $1/e^D$.

\paragraph{Completeness}
Let $\phiav{V',\cP'} \leq c$ and
let $F : V' \to \set{0,1}$ be the function corresponding to $\phiav{V',\cP'}$. 
Let the set $S$ be the support of the function $F$. 
Clearly, the set $S$ is balanced.
Therefore, with constant probability, we have 
\[ \phivs(G) \leq  \phivs_G(S) \leq  \phiav{V',\cP'} + 1/e^D \leq 2c \mper \] 

\paragraph{Soundness}
Suppose $\phiav{V',\cP'} \geq s$.
Let $F : V' \to \set{0,1}$ be any balanced function.

Since the max is larger than the average, we get  
\[ \E_{X} \max_{Y_i \in N_G(X) } \Abs{F(X) - F(Y_i)} \geq \frac{d}{D} \sum_{j = 1}^{D/d} \E_{ (X,Y_1, \ldots, Y_d) \sim \cP} \max_i \Abs{F(X) - F(Y_i)} \]

By Hoeffding's inequality, we get 
\begin{eqnarray*}
\Pr{ \left( \E_{X} \max_{Y_i \in N(X) } \Abs{F(X) - F(Y_i)} \right) < s/4} & \leq &
\Pr{ \left( \frac{d}{D} \sum_{j = 1}^{D/d} \E_{ (X,Y_1, \ldots, Y_d) \sim \cP} \max_i \Abs{F(X) - F(Y_i)} \right) < s/4  } \\
& \leq & \exp \left(  - n  (s D/d)^2 \right) 
\end{eqnarray*}
Here, the last inequality follows from Hoeffding's inequality over the index $X$.
There are at most $2^n$ boolean functions on $V$. Therefore, using a union bound on all those functions we get,
\[ \Pr{\phivs(G) \geq s/4}  \geq 1 - 2^n \exp \left(  - n (sD/d)^2 \right). \]

Since $D > d/s$, 
we get that with probability $1 - o(1)$, $\phivs(G) \geq s/4$.

\end{proof}

\begin{proof}[Proof of \prettyref{thm:reduction}]
\prettyref{thm:reduction} follows directly from
\prettyref{prop:uniform} and \prettyref{prop:subsample}.
\end{proof}

\section{Hardness of Vertex Expansion}
\label{sec:putting-things-togethor}

We are now ready to prove \prettyref{thm:main}. We restate the Theorem below.

\begin{theorem}
For every $\eta > 0$, there exists an absolute constant $C$ such that $\forall \e>0 $ it is \sse-hard to distinguish 
between the following two cases for a given graph $G = (V,E)$ with maximum degree $d \geq 100/\e$.
\begin{description}
\item[\yes] : There exists a set $S \subset V$ of size $\Abs{S} \leq \Abs{V}/2$ such that 
	\[ \phiv(S) \leq \e  \]
\item[\no] : For all sets $S \subset V$, 
	\[  \phiv(S) \geq  \min \set{10^{-10}, C \sqrt{\e \log d}} - \eta \]
\end{description}
\end{theorem}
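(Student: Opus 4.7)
The plan is to compose the reductions built in Sections~\ref{sec:symv} through \ref{sec:sampling}. I will start from the \sse hypothesis and apply the reduction of \prettyref{sec:hardness}, instantiated with the gadget arity (also called $d$) chosen to equal, up to a constant factor, the target degree in the theorem statement. By invoking \prettyref{thm:sse-av-completeness} together with \prettyref{thm:sse-av-sound}, with a sufficiently small rescaling of $\e$ to absorb multiplicative constants and with parameters $M,\gamma,\eta_{\mathrm{red}}$ chosen small enough (so that the $\tau^{\Omega(\e\eta/\log(1/\alpha))}$ error of the invariance step and the $O(\eta_{\mathrm{red}})$ smoothing loss sit below the final additive $\eta$), I obtain an \sse-hard gap for \analyticvsep: completeness $\phiav{\cV',\cP'} \le 4\e$ versus soundness $\phiav{\cV',\cP'} \ge \Omega(\sqrt{\e\log d})$.

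Next, I feed this gap into \prettyref{thm:reduction} with output degree $D = \Theta(d)$ large enough to satisfy the hypotheses $D \ge \max\{100 d/s,\, 2\log(1/c)\}$ (which is where the lower bound $d \ge 100/\e$ in the statement is used, since $2\log(1/\e) \le d$ in that regime). This yields an \sse-hard gap for balanced symmetric-vertex expansion on a graph of maximum degree $\Theta(d)$:
\[ \phivbs \le O(\e) \quad \text{vs.} \quad \phivbs \ge \Omega(\sqrt{\e\log d}). \]
Applying the equivalence between symmetric vertex expansion and vertex expansion from \prettyref{sec:symv} (which loses only a constant factor), this transfers to the same gap, up to constants, for balanced vertex expansion $\phivb$.

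The final step is to upgrade from balanced vertex expansion to vertex expansion over arbitrary subsets, as demanded by the \no case. For $|S|(n-|S|) \ge n^2/100$ the balanced bound applies directly. For very unbalanced $S$ (WLOG $|S| \le n/100$), I will argue that in the graph $G$ produced by the composed reduction, any such small set has $|N(S)|/|S|$ bounded below by an absolute constant. This follows from the fact that $G$ inherits strong small-set edge expansion from the \no guarantee of the underlying \sse instance through the $G_\eta^{\otimes R}$ tensor construction, combined with the near-regularity guaranteed by the sampling step of \prettyref{thm:reduction}. The bound on $|N(S)|/|S|$ translates to $\phiv(S) = n\,|N(S)|/(|S|(n-|S|)) \ge |N(S)|/|S| = \Omega(1)$, which can be made at least $10^{-10}$ by choosing constants in the reduction appropriately.

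Taking the minimum of the two regimes and absorbing all constant-factor slack into $C$ and parameter-slack terms into the additive $\eta$ yields the stated bound. The main obstacle is the unbalanced-sets argument in the last step: faithfully tracking which small-set expansion properties of the original \sse instance survive the tensor product, the Markov-chain gadget $H^R$, the unfolding/folding to $(\cV,\cP)\to(\cV',\cP')$, and finally the neighborhood-sampling of \prettyref{thm:reduction} requires care, and it is precisely the limit of this argument that necessitates the $\min(10^{-10},\,\cdot)$ truncation instead of a clean $C\sqrt{\e\log d}$ lower bound for all set sizes.
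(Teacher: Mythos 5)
Your first two steps coincide with the paper's proof: compose \prettyref{thm:sse-av-completeness} and \prettyref{thm:sse-av-sound} to get an \sse-hard gap for \analyticvsep, then apply \prettyref{thm:reduction} to get the gap for balanced symmetric vertex expansion, and transfer between symmetric and ordinary vertex expansion via \prettyref{sec:symv} (the paper does the symmetric-to-ordinary translation after, rather than before, the balanced-to-general step, but that ordering is immaterial).

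The genuine gap is in your last step, and you have correctly identified it as the weak point, but your proposed fix is the wrong kind of argument. You want to prove an \emph{instance-level} structural statement: that in the \no case the graph output by the composed reduction is a small-set vertex expander, so unbalanced sets automatically have $\phiv(S)=\Omega(1)$. Nothing in the reduction chain supports this. The \sse \no guarantee only controls edge expansion of sets of measure in $(\delta/M, M\delta)$ in the original graph, not of arbitrarily small or intermediate-size sets; the soundness of the gadget (\prettyref{prop:soundgadget}) and of \prettyref{thm:sse-av-sound} is stated only for functions with $\var_1[F']\geq 1/10$, i.e.\ balanced cuts; and the union-bound/concentration argument in \prettyref{prop:subsample} likewise amplifies only the guarantee for balanced functions. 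So after folding, tensoring and subsampling you have literally no control over unbalanced sets, and establishing such control would be a substantial new argument (this loss of multi-scale expansion under gadget reductions is exactly the difficulty the paper is designed around). The paper instead closes this step with \prettyref{lem:bal-vert-to-vert}, a generic \emph{computational} reduction: the completeness direction is trivial, and for soundness one assumes an algorithm that finds a set of vertex expansion at most $s$ whenever one of expansion at most $2c$ exists, and iterates it, deleting the (possibly unbalanced) sets it returns until between a $b/2$ and $1/2$ fraction of vertices has been removed; the union of the deleted sets is then a balanced set of vertex expansion at most $s$, contradicting the \cvss hardness of the balanced problem. Since the theorem asserts \sse-hardness of a distinguishing task (not a property of every \no instance), this oracle-style argument is all that is needed, and it is also the source of the $\min\{10^{-10},\cdot\}$ truncation and the constant-factor losses you were trying to absorb structurally. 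Replace your small-set-expansion claim with this peeling argument and the proof goes through.
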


\begin{proof}

From \prettyref{thm:sse-av-completeness} and \prettyref{thm:sse-av-sound}
we get that for an instance of \analyticvsep $(V,\cP)$,
 it is \sse-hard to distinguish between the following two cases cases: 

\begin{description}
\item[\yes] : 	\[ \phiav{V,\cP} \leq \e \]

\item[\no] : \[  \phiav{V,\cP} \geq  \min \set{ 10^{-4}, c_1 \sqrt{\e \log d} } - \eta \]

\end{description}

Now from \prettyref{thm:reduction} we get that for a graph $G$,
it is \sse-hard to distinguish between the following two cases cases:

\begin{description}
\item[\yes] : 	\[ \phivbs \leq \e  \]

\item[\no] : \[  \phivbs \geq  \min \set{ 10^{-6}, c_2 \sqrt{\e \log d} } - \eta \]

\end{description}

We use a standard reduction from Balanced vertex expansion to vertex expansion. 
For the sake of completeness we give a proof of this reduction in
\prettyref{lem:bal-vert-to-vert}.
Using this reduction, we get that for a graph $G$,
it is \sse-hard to distinguish between the following two cases cases:

\begin{description}
\item[\yes] : 	\[ \phivs \leq \e  \]

\item[\no] : \[  \phivs \geq  \min \set{10^{-8}, c_3 \sqrt{\e \log d} } - \eta \]

\end{description}

Finally, using the computational equivalence of vertex expansion and symmetric vertex expansion (\prettyref{thm:symv}), 
we get that for a graph $G$,
it is \sse-hard to distinguish between the following two cases cases:

\begin{description}
\item[\yes] : 	\[ \phiv \leq \e  \]

\item[\no] : \[  \phiv \geq  \min \set{ 10^{-10}, C \sqrt{\e \log d} } - \eta \]

\end{description}

This completes the proof of the theorem.

\end{proof}

\section{An {\em Optimal} Algorithm for vertex expansion }
\label{sec:vertexsepalgo}

In this section we give a simple polynomial time algorithm which outputs a set $S$
whose vertex expansion is at most $\bigO{\sqrt{ \phiv \log d}}$. We
restate \prettyref{thm:algo}.

\begin{theorem}

There exists a polynomial time algorithm which given a graph $G = (V,E)$ having vertex degrees
at most $d$, outputs a set $S \subset V$, such that $ \phiv(S) = \bigO{\sqrt{ \phiv \log d}}$.

\end{theorem}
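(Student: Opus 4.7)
The plan is to follow the $\lambda_\infty$-based approach of Bobkov--Houdr\'e--Tetali together with the SDP relaxation observation of Steurer--Tetali. Recall the parameter
\[
\lambda_\infty(G) \;\defeq\; \min_{x \not\equiv \text{const}} \frac{\E_v \max_{u \sim v}(x_u - x_v)^2}{\Var_v(x)},
\]
which is a Poincar\'e-type functional for vertex expansion. By a Cheeger-style sweep-cut argument on one-dimensional $x$, one has $\phiv(G) = O(\sqrt{\lambda_\infty(G)})$, and conversely $\lambda_\infty(G) \leq O(\phiv(G))$ by plugging in an appropriately scaled indicator of the optimal set. Neither direction involves the degree $d$, but $\lambda_\infty$ itself appears to be hard to compute; so the first step is to write down the natural SDP relaxation
\[
\sdpval(G) \;\defeq\; \min_{\{x_v \in \R^n\}} \frac{\E_v \max_{u \sim v}\|x_u - x_v\|^2}{\E_v \|x_v - \bar x\|^2},
\]
where $\bar x = \E_v x_v$. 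The ``$\max$'' in the numerator is a convex (max of convex quadratics) objective and thus the program can be linearized with auxiliary variables and solved to arbitrary accuracy in polynomial time.

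For completeness, observe that any scalar function $x\from V\to\R$ is a feasible SDP solution, so $\sdpval(G) \leq \lambda_\infty(G) \leq O(\phiv(G))$. The core of the algorithm is therefore the rounding step: draw a standard Gaussian $g \sim \cN(0,I_n)$, set $y_v \defeq \iprod{g, x_v}$, sort the vertices by $y_v$, and return the best of the $n-1$ sweep-cut sets $S_t = \{v : y_v \leq t\}$. The Cheeger-style analysis of the sweep cut reduces the task to bounding the ratio
\[
\frac{\E_v \max_{u \sim v}|y_u - y_v|}{\E_v |y_v - \bar y|}
\]
against $O(\sqrt{\sdpval \cdot \log d})$. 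The denominator is, in expectation over $g$, of the same order as $\sqrt{\E_v \|x_v - \bar x\|^2}$. The numerator is the quantity through which the degree enters: for a single $v$, the Gaussian projection of at most $d$ vectors $\{x_u - x_v\}_{u \sim v}$ satisfies
\[
\E_g \max_{u \sim v}|y_u - y_v| \;\leq\; O(\sqrt{\log d})\cdot \max_{u \sim v}\|x_u - x_v\|,
\]
by the standard sub-Gaussian maximal inequality. Integrating over $v$, applying Cauchy--Schwarz, and dividing by the denominator yields $O(\sqrt{\sdpval \cdot \log d}) = O(\sqrt{\phiv \log d})$ for the best sweep cut. A standard Cheeger-style argument then converts this one-dimensional quotient bound into a sweep-cut set $S_t$ with $\phiv(S_t) = O(\sqrt{\phiv \log d})$.

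The main obstacle is the rounding analysis: in contrast to the classical edge Cheeger inequality, where a sweep cut on a scalar function $y$ automatically certifies $\phi(S_t) \lesssim \sqrt{\sum (y_u - y_v)^2 / \sum y_v^2}$ via the coarea formula, the vertex version requires replacing the sum over edges at $v$ with the \emph{maximum} over edges at $v$. The coarea-type identity is more delicate in this setting, and this is where one must invoke the BHT machinery carefully to relate the thresholding of $y$ to vertex-boundary counts, while simultaneously absorbing the $\sqrt{\log d}$ loss from the Gaussian max step. Once this reduction is in place, the rest of the argument (feasibility, bounding $\sdpval$ by $\phiv$, and picking the best of $n$ thresholds) is routine.
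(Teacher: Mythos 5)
Your proposal is correct and follows essentially the same route as the paper: the Bobkov--Houdr\'e--Tetali parameter $\lambda_\infty$, its natural SDP relaxation, Gaussian-projection rounding with the $O(\sqrt{\log d})$ loss from the maximal inequality over the $\leq d$ neighbours, and a sweep cut on the projected values. The two steps you defer are exactly what the paper supplies: the vertex-expansion coarea/level-set argument is \prettyref{lem:levelset} combined with \prettyref{lem:bht1}, and the passage from bounds in expectation to a constant-probability guarantee is handled by Markov on the numerator and Paley--Zygmund on the denominator as in \prettyref{lem:projlinf}.
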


For an undirected graph $G$,
Bobkov \etal \cite{bht00} define $\linf$ as follows. 
\[ \linf \defeq \min_{ x} \frac{ \sum_i \max_{j \sim i} (x_i - x_j)^2 }{ \sum_i x_i^2 - \frac{1}{n}(\sum_i x_i)^2  }  \]

They also prove the following Theorem.

\begin{theorem}[\cite{bht00}]
\label{thm:bht}
For any unweighted, undirected graph $G$, we have 
\[  \frac{\linf}{2}  \leq \fv \leq \sqrt{2 \linf}  \]

\end{theorem}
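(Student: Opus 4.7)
The plan is to prove the two halves of the theorem separately, using rather different techniques. For the easier direction $\linf/2 \leq \fv$, I would plug in an appropriate cut indicator into the Rayleigh-quotient definition of $\linf$. For the harder direction $\fv \leq \sqrt{2\linf}$, I would perform a Cheeger-style level-set rounding on the optimizer of $\linf$.

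For the lower bound, let $S \subset V$ minimize $\phiv$, and WLOG assume $\Abs{S} \leq n/2$. Set $x = \mathbf{1}_S$ and substitute into $\linf$. A direct calculation gives $\sum_i x_i^2 - \tfrac{1}{n}(\sum_i x_i)^2 = \Abs{S}\Abs{V\setminus S}/n$, while for each vertex $i$ the quantity $\max_{j \sim i}(x_i - x_j)^2$ equals $1$ precisely when $i$ lies in the symmetric boundary $N(S) \cup N(V \setminus S)$. Hence $\linf \leq n(\Abs{N(S)} + \Abs{N(V\setminus S)})/(\Abs{S}\Abs{V\setminus S}) = \phivs(S)$. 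Using $\Abs{N(V\setminus S)} \leq \Abs{S}$ together with the equivalence between $\phivs$ and $\phiv$ established in Section 2, this yields $\linf \leq 2\fv$.

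For the upper bound, take the optimizer $x$ of $\linf$, shift so its median is zero, and let $y$ be whichever of the positive/negative parts has larger $\ell_2$ mass; then $y$ has support of size at most $n/2$, $\sum_i y_i^2 \geq \var(x)/2$, and $\sum_i \max_{j \sim i}(y_i - y_j)^2 \leq \sum_i \max_{j \sim i}(x_i - x_j)^2$ since truncation only shrinks pairwise differences. Hence the Rayleigh quotient $R(y) := \sum_i \max_{j\sim i}(y_i - y_j)^2 / \sum_i y_i^2$ satisfies $R(y) \leq 2\linf$. Apply the co-area formula to $z_i = y_i^2$: for $S_t = \{i : y_i^2 > t\}$,
\[
\int_0^\infty \Abs{N(S_t)}\,dt \;=\; \sum_i \bigl(\max_{j \sim i}(y_j^2 - y_i^2)\bigr)_+ \;\leq\; \sum_i \max_{j \sim i}\Abs{y_j - y_i} \cdot \bigl(y_i + \max_{j \sim i} y_j\bigr),
\]
while $\int_0^\infty \Abs{S_t}\,dt = \sum_i y_i^2$. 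Applying Cauchy-Schwarz to the right-hand side, together with the pointwise bound $\max_{j \sim i} y_j \leq y_i + \max_{j \sim i}\Abs{y_i - y_j}$ to control $\sum_i(y_i + \max_{j\sim i} y_j)^2 = O(\sum_i y_i^2 + \sum_i \max_{j\sim i}(y_i - y_j)^2)$, produces a threshold $t^*$ with $\Abs{N(S_{t^*})}/\Abs{S_{t^*}} = O(\sqrt{R(y)})$. Since $\Abs{S_{t^*}} \le n/2$, this gives $\fv(S_{t^*}) \le 2\Abs{N(S_{t^*})}/\Abs{S_{t^*}} = O(\sqrt{\linf})$, matching the theorem up to constants.

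The principal obstacle is the rounding step: a naive bound of the form $(\max_{j \sim i} y_j)^2 \leq \sum_{j \sim i} y_j^2$ would re-count each $y_j^2$ up to $d$ times when summed over $i$, introducing a $\sqrt{d}$ factor into the final bound and destroying the degree-independence that distinguishes $\linf$-based vertex-Cheeger from the standard $\sqrt{d\,\phiv}$ bound coming from $\lambda_2$. The fix is to bound $\max_{j \sim i} y_j$ by $y_i$ plus the per-vertex ``local drop'' $\max_{j \sim i}\Abs{y_i - y_j}$, so that the problematic term is absorbed into $\sum_i y_i^2$ plus the $\linf$-numerator, preserving the clean $\sqrt{\linf}$ rate without any degree loss.
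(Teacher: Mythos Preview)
The paper does not itself prove \prettyref{thm:bht}; it is quoted from \cite{bht00}. What the paper \emph{does} prove are the two ingredients needed for the harder direction $\fv \le \sqrt{2\linf}$, namely \prettyref{lem:levelset} (level-set extraction from an $\ell_1$ Rayleigh quotient) and \prettyref{lem:bht1} (passing from the $\ell_2$ quotient to the $\ell_1$ quotient on the squared variables). Your upper-bound sketch is exactly this route: your co-area step is \prettyref{lem:levelset}, and your ``fix'' $\max_{j\sim i} y_j \le y_i + \max_{j\sim i}\abs{y_i-y_j}$ is precisely the device behind the paper's proof of \prettyref{lem:bht1}, where one writes $x_j^2-x_i^2 \le (x_j-x_i)^2 + 2x_i(x_j-x_i)$ and applies Cauchy--Schwarz to the cross term. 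You have correctly identified both the obstacle (a naive bound re-counts each $y_j^2$ up to $d$ times) and its resolution, so on this half you and the paper coincide.

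The lower-bound half has a genuine gap at the last step. Plugging in $\mathbf 1_S$ correctly yields $\linf \le \phivs(S)$, the \emph{symmetric} vertex expansion. You then need $\phivs(S)\le 2\phiv$, and the ``equivalence between $\phivs$ and $\phiv$ established in Section~2'' does not supply this: the reductions in \prettyref{sec:symv} relate $\phivs(G)$ to $\phiv(G')$ for a \emph{different} graph $G'$, not the two quantities on the same graph. Nor does the bound $\abs{N(V\setminus S)}\le\abs{S}$ help --- it only gives $\linf\le \phiv(S)+n/\abs{V\setminus S}$, which is useless when $\phiv$ is small. In fact $\phivs$ and $\phiv$ can differ by an arbitrary factor on the same graph: on a path of $m$ vertices together with an apex adjacent to all of them, $\phiv=\Theta(1/m)$ (cut the path in half) while every vertex is adjacent to the apex, forcing $\linf\ge 1$. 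What your indicator argument cleanly proves is $\linf\le\phivs$; that is the inequality the paper actually uses downstream (the algorithm in \prettyref{sec:vertexsepalgo} first passes to $G'$ via \prettyref{thm:symv2} so that it only ever needs $\sdpval(G')\le\linf(G')\le\phivs(G')=\Theta(\phiv(G))$), and it is also the form in which the Cheeger-type inequality appears in \cite{bht00}.
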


Consider the following \sdp relaxation of $\linf$.

\begin{mybox}

\begin{SDP}
\label{sdp:linf}
\begin{eqnarray*}
\sdpval \defeq \min \sum_{i \in } \alpha_i && \\
\textrm{subject to:} \qquad \qquad  && \\ 
\norm{v_j - v_i }^2 & \leq & \alpha_i \qquad \forall i \in V \textrm{ and } \forall j \sim i \\
\sum_i \norm{v_i}^2 - \frac{1}{n}\norm{\sum_i v_i}^2 & = & 1 \\
\end{eqnarray*}

\end{SDP}

\end{mybox}

It's easy to see that this is a relaxation for $\linf$.
We present a simple randomized rounding of this \sdp which, with constant probability,
outputs a set with vertex expansion at most $C \sqrt{\phiv \log d}$ for some absolute constant $C$.


\begin{mybox}

\begin{algorithm}

\label{alg1}

\begin{itemize}~

\item {\em Input : } A graph $G = (V,E)$ 
\item {\em Output :} A set $S$ with vertex expansion at most $ 576 \sqrt{\sdpval \log d}$ (with constant probability).

\begin{enumerate}

\item Compute graph $G'$ as in \prettyref{thm:symv2}, let $n = \Abs{V(G')}$.

\item Solve \prettyref{sdp:linf} for graph $G'$.

\item Pick a random Gaussian vector $g \sim N(0,1)^n$. 

\item For each $i \in [n]$, define $x_i \defeq \inprod{v_i,g}$.

\item Sort the $x_i$'s in decreasing order $x_{i_1} \geq x_{i_2} \geq \ldots x_{i_n}$.
Let $S_j$ denote the set of the first $j$ vertices appearing in the sorted order. 
Let $l$ be the index such that
\[  l = {\sf argmin}_{1 \leq j \leq n/2} \phivs(S_j) \mper \]

\item Output the set corresponding to $S_l$ in $G$.

\end{enumerate}

\end{itemize}

\end{algorithm}

\end{mybox}



We first prove a technical lemma which shows that we can a recover a
a set with small vertex expansion from a { \em good} linear-ordering
(Step $3$ in Algorithm \ref{alg1}).

\begin{lemma}
\label{lem:levelset}

For any $y_1, y_2, \ldots, y_n \in \R^+ \cup \set{0}$, let $Y \defeq [y_1 y_2 \ldots y_n]^T$ and
 $\alpha \defeq \frac{ \sum_i \max_{j \sim i} |y_j - y_i|  }{\sum_i y_i}$. Then $\exists S \subseteq \supp(Y)$
such that $\fv(S) \leq \alpha$. Morover, such a set can be computed in polynomial time.  
\end{lemma}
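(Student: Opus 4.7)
The plan is a Cheeger-style threshold rounding adapted to the vertex boundary. WLOG relabel so that $y_1 \geq y_2 \geq \cdots \geq y_n \geq 0$, and for each threshold $t \geq 0$ define the level set $S_t \defeq \{i : y_i > t\}$. Since $y_i > 0$ for every $i \in S_t$, each $S_t$ automatically satisfies $S_t \subseteq \supp(Y)$. The sets $S_t$ change only at $t = y_i$ for some $i$, so there are at most $n$ distinct candidate sets; the algorithm simply sorts the $y_i$'s, enumerates these $n$ candidates, and returns the one of smallest vertex expansion, which handles the polynomial-time claim.

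For the quantitative bound, I would prove two integral identities, both by swapping sum and integral:
\[
\int_0^\infty |S_t|\,dt \;=\; \sum_i y_i, \qquad \int_0^\infty |N(S_t)|\,dt \;=\; \sum_i \bigl(\max_{j \sim i} y_j - y_i\bigr)^+.
\]
The first is immediate from $\int_0^\infty \mathbf{1}[y_i > t]\,dt = y_i$. The second uses the fact that $i \in N(S_t)$ precisely when $y_i \leq t < \max_{j \sim i} y_j$, so $i$ contributes a $t$-interval of length $(\max_{j \sim i} y_j - y_i)^+$. Since $(\max_{j \sim i} y_j - y_i)^+ \leq \max_{j \sim i} |y_j - y_i|$, combining the two identities yields
\[
\int_0^\infty |N(S_t)|\,dt \;\leq\; \alpha \cdot \int_0^\infty |S_t|\,dt.
\]

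Consequently the real-valued function $t \mapsto |N(S_t)| - \alpha \cdot |S_t|$ has nonpositive integral, so there exists some threshold $t^\ast$ at which $|N(S_{t^\ast})| \leq \alpha \cdot |S_{t^\ast}|$, equivalently $|N(S_{t^\ast})|/|S_{t^\ast}| \leq \alpha$. Combining with the normalization in $\fv(S) = n \cdot |N(S)|/(|S| \cdot |V \setminus S|)$ gives the claimed bound (the level set of size at most $n/2$ is what the algorithm ultimately selects, absorbing the factor $n/|V\setminus S|$ into the constant). The main conceptual step is just the coarea-style identity for $|N(S_t)|$, which is the vertex-expansion analogue of the standard edge-expansion identity $\int |E(S_t,\bar{S}_t)|\,dt = \sum_{(u,v)\in E}|y_u - y_v|$; I do not expect any serious obstacle beyond setting up this identity correctly.
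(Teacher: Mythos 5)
Your proposal is correct and takes essentially the same route as the paper: the paper also sweeps over the level sets of the sorted vector and uses the discrete (Abel-summation) form of your coarea identity, namely $\sum_i \max_{j\sim i,\, j<i}(y_j-y_i)=\sum_i (y_i-y_{i+1})\,\lvert N(S_i)\rvert$ together with $\sum_i (y_i-y_{i+1})\,\lvert S_i\rvert=\sum_i y_i$, concluding by contradiction rather than by averaging the integral. The differences (continuous thresholds versus the sorted discrete sweep, and your loose handling of the normalization in $\fv$) are cosmetic and match the paper's own level of precision.
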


\begin{proof}
W.l.o.g we may assume that $y_1 \geq y_2 \geq \ldots \geq y_n \geq 0$. Then
\begin{equation}
  \frac{ \sum_i \max_{ j \sim i , j < i } (y_j - y_i)  }{\sum_i y_i} \leq \alpha  \label{hin}
\end{equation}
  
{ and } 
\begin{equation}
  \frac{ \sum_i \max_{ j \sim i , j > i } (y_i - y_j)  }{\sum_i y_i} \leq \alpha \label{hout}
\end{equation}

Let $i_{\max} \defeq \textrm{\sf argmax}_i y_i>0 $, i.e. $i_{\max}$ be the largest index such that $y_{i_{\max}} > 0$.
Let $S_i \defeq \set{y_1, \ldots, y_i}$.  
Suppose $\forall i < i_{\max}$ $N^v(S_i) > \alpha |S_i|$.

Now, from Inequality \ref{hout},

\[ \alpha \geq  \frac{ \sum_i \max_{ j \sim i , j < i } (y_j - y_i) }{\sum_i y_i} = 
 \frac{ \sum_i \max_{ j \sim i , j < i }  \sum_{l=j}^{l = i-1} (y_l - y_{l+1})  }{\sum_i y_i}  =  
 \frac{ \sum_i ( y_i - y_{i+1}) |N(S_i)|  }{\sum_i y_i} > \alpha  \frac{ \sum_i ( y_i - y_{i+1}) |S_i|  }{\sum_i y_i} 
= \alpha    \]

Thus we get $\alpha > \alpha$ which is a contradition. Therefore, $\exists i \leq i_{\max}$ such that $\fv(S_i) \leq \alpha$.
\end{proof}

Next we show a $\linf$-like bound for the $x_i$'s. 

\begin{lemma}
\label{lem:projlinf}
Let $x_1, \ldots, x_n$ be as defined in Algorithm \ref{alg1}. Then, with constant probability, we have 

\[ \frac{ \sum_i \max_{j \sim i } (x_i - x_j)^2 }{ \sum_i x_i^2 - \frac{1}{n} \left( \sum_i x_i  \right)^2 } \leq 96\ \sdpval \log d. \]

\end{lemma}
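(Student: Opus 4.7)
The plan is to bound the numerator from above in expectation using the Gaussian maximal inequality, and bound the denominator from below with constant probability using Gaussian concentration, and then combine the two estimates.

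First, for the numerator, observe that for each edge $\{i,j\}$ with $j \sim i$, the random variable $x_i - x_j = \langle v_i - v_j, g\rangle$ is a centered Gaussian of variance $\|v_i - v_j\|^2 \le \alpha_i$ (by the first SDP constraint). Since vertex $i$ has at most $d$ neighbours, the quantity $\max_{j \sim i}(x_i - x_j)^2$ is the squared maximum of at most $d$ centered Gaussians each with variance at most $\alpha_i$. The standard bound on the expected maximum of $d$ Gaussians gives
\[
\E\Bigl[\max_{j \sim i}(x_i - x_j)^2\Bigr] \;\le\; C_0\,\alpha_i \log d
\]
for an absolute constant $C_0$. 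Summing over $i$ yields
\[
\E\Bigl[\sum_i \max_{j \sim i}(x_i - x_j)^2\Bigr] \;\le\; C_0\log d \cdot \sum_i \alpha_i \;=\; C_0\,\sdpval\,\log d.
\]
By Markov's inequality the numerator is at most $4 C_0\,\sdpval\,\log d$ with probability at least $3/4$.

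Second, for the denominator, note that
\[
D \defeq \sum_i x_i^2 - \frac{1}{n}\Bigl(\sum_i x_i\Bigr)^2 \;=\; \frac{1}{2n}\sum_{i,j}(x_i - x_j)^2,
\]
which is a non-negative quadratic form in the Gaussian vector $g$. A direct computation using $\E[(x_i-x_j)^2]=\|v_i-v_j\|^2$ together with the second SDP constraint gives $\E[D] = 1$. To show that $D$ is bounded away from $0$ with constant probability, I would apply the Paley-Zygmund inequality: it suffices to bound $\E[D^2]$ by an absolute constant. Since $D$ is a Gaussian chaos of order $2$ (equivalently, $D = g^\top M g$ for a positive semidefinite matrix $M$ determined by the $v_i$'s), the moment bound $\E[D^2] \le 3(\E[D])^2 = 3$ follows from the standard hypercontractivity/Hanson--Wright-type inequality for quadratic forms in Gaussians. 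Paley--Zygmund then gives $\Pr[D \ge 1/2] \ge (1/2)^2/3 = 1/12$; with a slightly better tail one concludes $D \ge 1/2$ with probability at least some absolute constant.

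Combining the two bounds via a union bound, with constant probability the ratio is at most
\[
\frac{4 C_0\,\sdpval\,\log d}{1/2} \;\le\; 96\,\sdpval\,\log d,
\]
after absorbing $C_0$ into the final constant. The main technical obstacle is just the Gaussian maximal inequality for the numerator; everything else is a routine moment/concentration computation. No special structure of the SDP is used beyond the two listed constraints, so the whole argument is entirely local: each term $\max_{j \sim i}(x_i-x_j)^2$ is bounded using only the $\alpha_i$-constraint at vertex $i$, while the denominator's expectation is exactly normalized to $1$ by the second constraint.
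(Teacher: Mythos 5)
Your overall strategy is exactly the paper's: bound the numerator in expectation by the Gaussian maximal inequality and apply Markov, rewrite the denominator as a normalized sum of squares of (dependent) Gaussians with expectation $1$, bound its second moment by $3$ (the paper does this by Cauchy--Schwarz plus $\E[g^4]=3\E[g^2]^2$, which is all your ``Hanson--Wright'' invocation really needs), and apply Paley--Zygmund to get $\Pr{D \ge 1/2} \ge 1/12$.

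The one step that fails as written is the final combination. You apply Markov with threshold $4C_0\,\sdpval\log d$, so the numerator event fails with probability up to $1/4$, while Paley--Zygmund only guarantees the denominator event with probability $1/12$. A union bound over the two bad events then gives $\Pr{\text{either bad}} \le 1/4 + 11/12 > 1$, which yields nothing: you cannot conclude that both good events happen simultaneously with constant probability. The correct way to combine them is $\Pr{\text{both good}} \ge \Pr{D \ge 1/2} - \Pr{\text{numerator bad}}$, which forces the Markov failure probability to be strictly below $1/12$. This is why the paper takes the threshold $48\,\sdpval\log d$ (failure probability $\le 1/24$), obtaining success probability at least $1/12 - 1/24 = 1/24$, and why the final constant is $96$ rather than something absorbed loosely. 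The fix to your argument is immediate --- raise the Markov threshold so the numerator's failure probability drops below $1/12$ --- but as stated the last paragraph does not establish the lemma.
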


\begin{proof}

We will make use of the following fact that is part of the folkore about Gaussian random variables.
For the sake of completeness, we prove this Fact in \prettyref{app:omitproofs} (\prettyref{fact:appGauss}).

\begin{fact}
\label{fact:Gauss}
Let $Y_1, Y_2, \ldots, Y_d$ be $d$ normal random variables with mean
$0$ and variance at most $\sigma^2$. Let $Y$ be the random
variable defined as $Y \defeq \max \set{Y_i | i\in [d]}$. Then

\[ \Ex{Y} \leq 2\sigma \sqrt{\log d} \]

\end{fact}

Now using this fact we get, 
\[ \Ex{ \max_{j \sim i} (x_j - x_j)^2 }  =  \Ex{ \max_{j \sim i} \inprod{v_i - v_j,g}^2 }  \leq 
2 \max_{j \sim i} \norm{v_j - v_i}^2 \log d.  \]

Therefore, $\Ex{\sum_i \max_{j \sim i} (x_j - x_j)^2  } \leq 2\ \sdpval \log d $.  Using Markov's Inequality we get
\begin{equation} \label{eq:233}
\Pr{ \sum_i \max_{j \sim i} (x_j - x_j)^2 > 48\ \sdpval \log d  }
	\leq  \frac{1}{24}  
\end{equation}

For the denominator, using linearity of expectation, we get 

\[ \Ex{ \sum_i x_i^2 - \frac{1}{n} \left( \sum_i x_i  \right)^2  }  =
	 \sum_i \norm{v_i}^2 - \frac{1}{n} \norm{ \sum_i v_i  }^2. \]

Also recall that the denominator can be re-written as 

$$ \sum_i x_i^2 - \frac{1}{n} \left( \sum_i x_i  \right)^2  =
\frac{1}{n} \sum_{i,j} (x_i-x_j)^2 \mcom $$
which is a sum of squares of gaussians.  Now applying
\prettyref{lem:squaregaussian} to the denominator we conclude

\begin{equation} \label{eq:234}
\Pr{ \sum_i x_i^2 - \frac{1}{n} \left( \sum_i x_i  \right)^2  \geq
	\frac{1}{2} } \geq \frac{1}{12}.  
\end{equation}

Using \eqref{eq:233} and \eqref{eq:234} we get that 

\[ \Pr{ \frac{ \sum_i \max_{j \sim i } (x_i - x_j)^2 }{ \sum_i x_i^2 - \frac{1}{n} \left( \sum_i x_i  \right)^2 } 
	\leq 96\  \sdpval \log d  }  > \frac{1}{24} .\]
\end{proof}

\begin{lemma} \label{lem:squaregaussian}
	Suppose $z_1,\ldots, z_m$ are gaussian random variables (not
	necessarily independent) such $ \E [ \sum_i z_i^2 ] = 1$ then
	$$ \Pr{\sum_i z_i^2 \geq \frac{1}{2}} \geq \frac{1}{12} $$
\end{lemma}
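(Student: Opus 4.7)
The plan is to apply the Paley--Zygmund (second moment) inequality to $X \defeq \sum_i z_i^2$. Since $X \geq 0$ and $\E[X] = 1$, Paley--Zygmund with threshold $\theta = 1/2$ gives
\[
\Pr\!\left[X \geq \tfrac{1}{2}\right] \;\geq\; \left(1-\tfrac{1}{2}\right)^2 \frac{(\E[X])^2}{\E[X^2]} \;=\; \frac{1}{4\,\E[X^2]},
\]
so it suffices to show $\E[X^2] \leq 3$, which would immediately yield $\Pr[X \geq 1/2] \geq 1/12$.

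The main step is to control $\E[X^2] = \sum_{i,j} \E[z_i^2 z_j^2]$. Here I would use the fact that in the intended application (the denominator in \prettyref{lem:projlinf}) the $z_i$'s are linear functionals of a single underlying standard Gaussian vector $g$, hence \emph{jointly} Gaussian. For any two centered jointly Gaussian variables $z_i, z_j$, Isserlis' (Wick's) theorem gives
\[
\E[z_i^2 z_j^2] \;=\; \E[z_i^2]\,\E[z_j^2] + 2\bigl(\E[z_i z_j]\bigr)^2.
\]
Applying Cauchy--Schwarz to the covariance term, $(\E[z_i z_j])^2 \leq \E[z_i^2]\E[z_j^2]$, so
\[
\E[z_i^2 z_j^2] \;\leq\; 3\,\E[z_i^2]\,\E[z_j^2].
\]

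Summing over $i,j$ yields
\[
\E[X^2] \;\leq\; 3 \Bigl(\sum_i \E[z_i^2]\Bigr)^{\!2} \;=\; 3\,(\E[X])^2 \;=\; 3,
\]
and plugging this into the Paley--Zygmund bound above gives the claim. The only subtle point is the invocation of joint Gaussianity; while the lemma as stated only asserts that each $z_i$ is marginally Gaussian, in the application to \prettyref{lem:projlinf} the $z_i$'s arise as $\tfrac{1}{\sqrt{n}}\langle v_i - v_j, g\rangle$ for a single Gaussian $g$, so joint Gaussianity holds and the fourth-moment formula above applies without change.
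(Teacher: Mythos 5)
Your proposal is correct and follows the same overall route as the paper: apply Paley--Zygmund at threshold $\tfrac12$ to $X=\sum_i z_i^2$, reducing everything to the second-moment bound $\E\bigl[X^2\bigr]\leq 3$. The only divergence is in how that bound is obtained. You invoke Isserlis/Wick, $\E[z_i^2 z_j^2]=\E[z_i^2]\E[z_j^2]+2\bigl(\E[z_i z_j]\bigr)^2$, which forces you to assume the $z_i$ are \emph{jointly} Gaussian and then to argue separately that this holds in the application to \prettyref{lem:projlinf}. The paper sidesteps this caveat entirely: it bounds $\E[z_i^2 z_j^2]\leq\bigl(\E[z_i^4]\bigr)^{1/2}\bigl(\E[z_j^4]\bigr)^{1/2}$ by Cauchy--Schwarz and then uses the marginal fourth-moment identity $\E[z_i^4]=3\bigl(\E[z_i^2]\bigr)^2$ for each coordinate separately, so only the marginal Gaussianity assumed in the lemma's statement is needed and no appeal to the structure of the application is required. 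Both arguments yield $\E[z_i^2 z_j^2]\leq 3\,\E[z_i^2]\E[z_j^2]$ and hence the same constant $\tfrac1{12}$; the paper's version proves the lemma exactly as stated (arbitrary dependence among marginally Gaussian variables), while yours proves a slightly weaker statement that nonetheless suffices where the lemma is used, at the cost of the extra joint-Gaussianity hypothesis you correctly flagged.
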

\begin{proof}
We will bound the variance of the random variable $R = \sum_i z_i^2$
as follows,
\begin{align*}
	\E [R^2] & = \sum_{i,j} E[z_i^2 z_j^2] \\
	& \leq \sum_{i,j} \left(E[z_i^4]\right)^{\frac{1}{2}} \left(
	E[z_j^4] \right)^{\frac{1}{2}} \\
	& = \sum_{i,j} 3 E[z_i^2] E[z_j^2]  \qquad \textrm{ (Using }
	\E[g^4] = 3 \E[g^2] \textrm{ for gaussians )} \\
	& = 3\left(\sum_{i}  E[z_i^2]\right)^2 = 3 
\end{align*}
By the Paley-Zygmund inequality,
\[ \Pr { R \geq \frac{1}{2} \E[R]} \geq \left(\frac{1}{2}\right)^2
\frac{(\E[R])^2}{\E[R^2]} \geq \frac{1}{12} \mper \]
\end{proof}

We are now ready to complete the proof of  \prettyref{thm:algo}.

\begin{proof}[Proof of \prettyref{thm:algo}]

Let the $x_i$'s be as defined in Algorithm \ref{alg1}.
W.l.o.g, we may assume that\footnote{For any $x \in \R$, $x^+ \defeq \max \set{x,0}$.}
 $  \Abs{  \supp(x^+)  }  < \Abs{ \supp(x^-)  } $.
For each $i \in [n]$, we define $y_i = x_i^+$.

\prettyref{lem:projlinf} shows that with constant probability we have 
\[ \frac{ \sum_i \max_{j \sim i} (x_i - x_j)^2 }{ \sum_i x_i^2 - \frac{1}{n} \left( \sum_i x_i \right)^2} \leq 
96\ \sdpval \log d.  \]

We need to show that 
\[ \frac{ \sum_i \max_{j \sim i} \Abs{y_i^2 - y_j^2} }{ \sum_i y_i^2 - \frac{1}{n} \left( \sum_i y_i \right)^2}
 \leq 6 \sqrt{ \frac{ \sum_i \max_{j \sim i} (x_i - x_j)^2 }{ \sum_i x_i^2 - \frac{1}{n} \left( \sum_i x_i \right)^2} }. \] 

This fact is proved in \cite{bht00}. For the sake of completeness, we
give a proof of this fact in \prettyref{app:omitproofs}
(\prettyref{lem:bht1}). Using \prettyref{lem:projlinf}, we get 

\[ \frac{ \sum_i \max_{j \sim i} \Abs{y_i^2 - y_j^2} }{ \sum_i y_i^2 - \frac{1}{n} \left( \sum_i y_i \right)^2}
 \leq 576 \sqrt{\sdpval \log d}. \]

From \prettyref{lem:levelset} we get that the set output in Step $3$ of Algorithm \ref{alg1} has vertex expansion
at most $576 \sqrt{ \sdpval \log d }$.
\end{proof}

\bibliography{smallset}
\bibliographystyle{amsalpha}


\appendix

\section{Reduction between Vertex Expansion and Symmetric Vertex Expansion}
\label{sec:symv}

In this section we show that the computation of the vertex expansion is essentially 
equivalent to the computation of symmetric vertex expansion. Formally, we prove the following theorems.  

\begin{theorem}
\label{thm:symv}

Given a graph $G = (V, E)$, there exists a graph $H$ such that $\max_{i \in V(H)} \deg(i) \leq  \left( \max_{i \in V(G)} \deg(i) \right)^2+
\max_{i \in V(G)} \deg(i)$ such that 
\[ \phivs(G) \leq \phiv(H) \leq \frac{\phivs(G) }{1 - \phivs(G)} \mper \]

\end{theorem}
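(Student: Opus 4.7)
My plan is to construct $H$ by attaching a small local gadget at each vertex of $G$ whose purpose is to force the one-sided vertex boundary in $H$ to mirror the two-sided vertex boundary in $G$. Concretely, I would introduce, for every $v \in V(G)$, a set $C_v$ of $\deg_G(v)$ ``clone'' vertices, each joined to all of $N_G(v)$ but not to $v$ itself; the edges of $G$ are retained. A direct count gives $\deg_H(v) = \deg_G(v) + \sum_{u \in N_G(v)}|C_u| \le d + d^{2}$ and $\deg_H(c) = \deg_G(v) \le d$ for every $c \in C_v$, matching the stated degree bound.

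For the lower bound $\phivs(G) \le \phiv(H)$, given any $S \subseteq V(H)$ I would project to $T := S \cap V$ (swapping $S$ with $V(H)\setminus S$ if necessary so that $T$ is a suitable witness). The idea is that both sides of the boundary $N_G(T) \cup N_G(\bar T)$ of $T$ in $G$ get ``charged'' to distinct vertices of $N_H(S)$: each $u \in N_G(T) \subseteq \bar T$ lies itself in $N_H(S)$, while each $v \in N_G(\bar T) \subseteq T$ has a neighbor $u' \in \bar T$ whose clones lie outside $S$ and are adjacent (in $H$) to $v \in S$, forcing at least one clone into $N_H(S)$. An injection of this form yields $|N_H(S)| \ge |N_G(T) \cup N_G(\bar T)|$, which after normalizing by the relevant volumes gives $\phivs_G(T) \le \phi_H^V(S)$.

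For the upper bound $\phiv(H) \le \phivs(G)/(1-\phivs(G))$, starting from $T \subseteq V$ with $\phivs_G(T) = \alpha$ and letting $B = N_G(T) \cup N_G(\bar T)$, I would set $S = T \cup \bigcup_{v \in T \setminus B} C_v$, i.e., include all clones of the interior of $T$. A bookkeeping argument shows that the only contributions to $N_H(S)$ come from $N_G(T)$ (original vertices) and from the clones attached to vertices of $B$; in particular $|N_H(S)| = |N_G(T)| + \sum_{v \in B}|C_v|$. The factor $(1-\alpha)^{-1}$ then arises when translating the ratio $|N_H(S)|/\bigl(|S| \cdot |V(H)\setminus S|\bigr)$ back to $|B|/(|T|\cdot|\bar T|)$: because adjoining the clones rescales the balance $|S|\cdot|V(H)\setminus S|$ by a factor that differs from $|T|\cdot|\bar T|\cdot |V(H)|/n$ by a correction of order $\alpha\cdot |T|\cdot|\bar T|$.

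The main technical obstacle is doing the two accountings tightly enough to obtain exactly $1/(1-\phivs(G))$, rather than merely a constant-factor loss. In particular, the choice of which clones to assign to $S$ vs.\ $V(H)\setminus S$ for the upper bound has to be made carefully (including all clones of strict interior vertices, excluding all clones of boundary vertices), and the injection used in the lower bound has to treat the two sides of the boundary symmetrically so that no boundary vertex is double-counted. Once this careful assignment is carried out, both inequalities follow by a routine, if somewhat involved, calculation.
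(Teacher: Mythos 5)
Your construction does not satisfy the first inequality $\phivs(G) \leq \phiv(H)$, and the failure is structural rather than a matter of tightening the bookkeeping. In your graph $H$ a cut may simply swallow the clones attached to the vertices of $T$, and these clones create no new outer boundary: a clone $c \in C_v$ with $v \in T$ is adjacent only to $N_G(v)$, and every neighbour of $v$ lying outside $T$ is already in $N_G(T)$. Concretely, let $G$ be $d$-regular, let $T \subseteq V(G)$, and take $S = T \cup \bigcup_{v \in T} C_v$. Then $|S| = (d+1)|T|$, $|V(H)| = (d+1)n$, and $N_H(S) = N_G(T) \cup \bigcup_{w \in N_G(T)} C_w$, so $|N_H(S)| = (d+1)|N_G(T)|$ and hence $\phiv_H(S) = \phiv_G(T)$, the \emph{one-sided} vertex expansion of $G$. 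Thus $\phiv(H) \leq \phiv(G)$, which is in general strictly smaller than $\phivs(G)$ (for the $n$-cycle, $\phiv(G) = 8/n$ while $\phivs(G) = 16/n$; in general the gap can be of order $d$). This also pinpoints where your injection breaks: the assumption that the clones of the neighbour $u' \in \bar T$ ``lie outside $S$'' is exactly what an adversarial cut violates, so the internal boundary of $T$ need never be paid for. A second, related problem is normalization: since $|V(H)| = n + 2|E(G)|$, the factors $|V(H)|/\bigl(|S|\,|V(H)\setminus S|\bigr)$ and $n/\bigl(|T|\,|V\setminus T|\bigr)$ can differ by a factor of order $d$, so even a correct boundary injection would not transfer to the normalized quantities $\phiv$ and $\phivs$ with the claimed constants.

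The paper avoids both issues by not introducing new vertices at all: it takes $H = G \cup G^2$, i.e.\ $V(H) = V(G)$ with the two-hop edges added, which immediately gives the degree bound $d^2 + d$ and makes the normalizations identical. For $\phivs(G) \leq \phiv(H)$, given $T$ one passes to $T' = T \cup N_G(T)$ and checks $N_G(T') \cup N_G(\bar{T'}) = N_G(T) \cup N_{G^2}(T) = N_H(T)$ with $w(T') \geq w(T)$, so $\phivs_G(T') \leq \phiv_H(T)$. For the other direction, given $S$ one deletes the internal boundary, $S' = S \setminus N_G(\bar S)$, and checks $N_H(S') = N_G(S) \cup N_G(\bar S)$ with $w(S') \geq \bigl(1 - \phivs_G(S)\bigr) w(S)$, which yields $\phiv_H(S') \leq \phivs_G(S)/\bigl(1 - \phivs_G(S)\bigr)$. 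The moral is that the symmetric boundary must be forced into the one-sided boundary by \emph{edges} within the same vertex set, not simulated by auxiliary vertices whose placement the cut itself controls.
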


\begin{proof}

Let $G^2$ denote the graph on $V(G)$ that corresponds to two hops in the graph $G$. Formally,
\[ \set{u,v} \in E(G^2) \iff \exists w \in V(G), (u, w) \in E(G) \textrm{ and } (w, v) \in E(G) \mper \]
Let $H = G \cup G^2$, i.e., $V(H) = V(G)$ and $E(H) = E(G) \cup E(G^2)$.

Let $S \subset V(G)$ be a set with small symmetric vertex expansion $\phivs(S) = \e$. 
Let $S' = S - N_G(\bar{S})$ be the set
of vertices obtained from $S$ by deleting it's internal boundary. It is easy to see that
\[ N_H(S') = N_G(S) \cup N_G(\bar{S}) \mper \]
Moreover, since $N_G(\bar{S}) \leq \phivs(S) w(S)$ we have $w(S') \geq w(S)(1 - \phivs_G(S ))$. Hence the vertex expansion
of the set $S'$ is upper-bounded by,
\[ \phiv_H(S') \leq \frac{\phivs_G(S) }{1 - \phivs_G(S)} \mper \]

Conversely, suppose $T \subset V(H)$ be a set with small vertex expansion $\phiv_H(T) = \e$. Consider the set
$T' = T \cup N_G(T)$. Observe that the internal boundary of $T'$ in the graph $G$ is given by $N_G(\bar{T}') = N_G(T)$.
Further the external boundary of $T'$ is given by $N_G(T') = N_G(N_G(T)) = N_{G^2}(T)$. Therefore, we have
\[N_G(T') \cup N_G(\bar{T}') = N_G(T) \cup N_{G^2}(T) = N_H(T) \mper \]
Further since $w(T') \geq w(T)$, we have $\phivs_G(T') \leq \phiv_H(T)$.

This completes the proof of the Theorem.

\end{proof}

\begin{theorem}
\label{thm:symv2}
Given a graph $G$, there exists a graph $G'$ such that $\max_{i \in V(G)} \deg(i) = \max_{i \in V(G')} \deg(i)$
and $\phiv(G) = \Theta(\phivs(G'))$.
Moreover, such a $G'$ can be computed in time polynomial in the size of $G$.  
\end{theorem}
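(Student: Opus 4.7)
The plan is to produce a polynomial-time, max-degree-preserving transformation $G\mapsto G'$ and to exhibit the $\Theta$-equivalence by transporting cuts in both directions.

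The starting observation is an elementary bound $\phiv(G)\le\phivs(G)\le\phiv(G)+2$ that holds for every graph: the left inequality is immediate, and for the right, if $S^*$ minimizes $\phiv_G$ with $|S^*|\le n/2$, then $|N_G(\bar S^*)|\le|S^*|$, hence $\phi(\bar S^*)\le n/|\bar S^*|\le 2$. This already yields the $\Theta$-equivalence (with $G'=G$) whenever $\phiv(G)=\Omega(1)$, so the construction's sole job is to bridge the gap in the small-$\phiv$ regime, where the additive $+2$ would otherwise dominate.

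For the small-$\phiv$ regime I would build $G'$ from $G$ by attaching, at each vertex $v$ with $\deg_G(v)<d=\max_u\deg_G(u)$, a small constant-size gadget that raises $v$'s degree in $G'$ up to exactly $d$. To avoid creating spurious small $\phivs$-cuts inside $G'$ (a naive pendant or tail would be cut off cheaply, violating soundness), the gadget is chosen to be a constant-size vertex-expander that is Cheeger-robust in the $\phivs$ sense; any set $S'$ that splits such a gadget then incurs an $\Omega(1)$ contribution to $\phivs_{G'}(S')$. Because every gadget vertex and every original vertex ends up with degree at most $d$, we have $\max\deg(G')=d$, and the construction is clearly polynomial time.

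For \textbf{completeness} $\phivs(G')=O(\phiv(G))$, I lift the $\phiv$-minimizer $S^*$ of $G$ (WLOG $|S^*|\le n/2$) to $S'\subseteq V(G')$ by placing each gadget entirely on the side of its attaching vertex. Since each gadget attaches via a single edge, the boundaries coincide: $|N_{G'}(S')|=|N_G(S^*)|$ and $|N_{G'}(\bar S')|=|N_G(\bar S^*)|$, with $|N_G(\bar S^*)|\le|S^*|$. The padding scales $|V(G')|$ and the set volumes $|S'|,|\bar S'|$ by essentially the same factor, so the normalization in $\phivs_{G'}(S')$ matches that of $\phiv_G(S^*)$ up to an absolute constant, and $\phivs_{G'}(S')=O(\phiv_G(S^*))$. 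For \textbf{soundness} $\phiv(G)=O(\phivs(G'))$, starting from $S'$ minimizing $\phivs(G')$, I first fold each partially-included gadget wholesale onto one side, losing at most a constant factor in $\phivs$ by the gadget's Cheeger robustness; the projection $S:=S'\cap V(G)$ then satisfies $|N_G(S)|\le|N_{G'}(S')|$ and $|N_G(\bar S)|\le|N_{G'}(\bar S')|$, and taking whichever of $\phiv_G(S),\phiv_G(\bar S)$ is smaller yields the bound.

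The \textbf{main obstacle} is the gadget design: it must simultaneously (i) have constant size and internal degrees $\le d$, (ii) be a vertex-expander so that any internal cut is paid for, and (iii) attach to $G$ in a way that does not create a small $\phivs$-cut in $G'$ crossing the attachment edge. Verifying all three conditions while keeping the max-degree constraint exact (rather than approximate, as in \prettyref{thm:symv}) is the technical heart of the argument.
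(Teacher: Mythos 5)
There is a genuine gap, and it is in the completeness direction of your construction. The entire difficulty in passing from $\phiv$ to $\phivs$ is the inner boundary $N_G(\bar{S})$: for the set $S^*$ minimizing $\phiv_G$, the quantity $\Abs{N_G(\bar{S}^*)}$ can exceed $\Abs{N_G(S^*)}$ by a factor as large as the maximum degree (each vertex of $N_G(S^*)$ can account for up to $d$ vertices of $N_G(\bar{S}^*)$, and this is tight, e.g.\ for star-like cuts). Your degree-raising expander gadgets do nothing to shrink this inner boundary: after lifting $S^*$ to $S'$ you correctly note $\Abs{N_{G'}(\bar{S}')}=\Abs{N_G(\bar{S}^*)}$, but bounding this by $\Abs{S^*}$ only yields $\phivs_{G'}(S')\le \phiv_G(S^*)+O(1)$ --- exactly the additive constant the construction was supposed to remove --- and in general one cannot do better than $\phivs_{G'}(S')\le (d+1)\,\phiv_G(S^*)$ with your $G'$. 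Concretely, for $K_{1,n-1}$ one has $\phiv(G)=O(1/n)$ while every balanced set in your padded graph still has symmetric vertex expansion $\Omega(1)$, so the claimed bound $\phivs(G')=O(\phiv(G))$ is false; and even a loss of $\Theta(d)$ would defeat the purpose of the statement, which needs absolute constants so as not to spoil the $\sqrt{\log d}$ dependence used downstream (in \prettyref{thm:algo} and \prettyref{thm:main}). (A smaller issue: your preliminary bound $\phivs(G)\le\phiv(G)+2$ assumes the $\phiv$-minimizer is the small side, which need not hold since $\phiv(S)$ is not symmetric in $S$ and $\bar{S}$.)

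The paper's construction attacks the inner boundary directly and, importantly, outputs a vertex-weighted graph: $G'$ is the edge-subdivision of $G$ (one new vertex per edge, of degree $2$, so the maximum degree is preserved), where the subdivision vertex of $\set{u,v}$ receives weight $\min\set{w(u)/\deg(u),\,w(v)/\deg(v)}$. Lifting $S$ to $S'=S\cup\set{\text{subdivision vertices of edges meeting }S}$ makes the inner boundary of $S'$ consist solely of subdivision vertices of cut edges, whose total weight is at most $w(N_G(S))$ by the choice of weights; hence $\phivs_{G'}(S')\le 2\,\phiv_G(S)$ with an absolute constant. For the converse, one deletes the internal boundary of an optimal $S'$ and intersects with $V(G)$, again using the weighting to argue that only a constant fraction of the weight is lost, giving $\phiv(G)\le 4\,\phivs(G')$. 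Your unweighted gadget padding has no mechanism of this kind, so as written the proposal does not establish the theorem.
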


\begin{proof}

Given graph $G$, we construct $G'$ as follows. 
We start with $V(G') = V(G) \cup E(G)$, i.e., $G'$ has a vertex for each vertex in $G$ and for each edge
in $G$. For each edge $\set{u,v} \in E(G)$, we add edges $\set{u,\set{u,v}}$ and $\set{v,\set{u,v}}$ in $G'$.  
For a vertex $i \in V(G) \cap V(G')$, we set its weight to be $w(i)$.
For a vertex $\set{u,v} \in E(G) \cap V(G')$, we set its weight to be $\min \set{ w(u)/\deg(u), w(v)/\deg(v)}$.

It is easy to see that $G'$ can be computed in time polynomial in the size of $G$, and that 
$\max_{i \in V(G)} \deg(i) = \max_{i \in V(G')} \deg(i)$.

We first show that $\phiv(G) \geq \phivs(G')/2$. Let $S \subset V(G)$ be the set having the
least vertex expansion in $G$. Let 
\[ S' = S \cup \set{ \set{u,v} | \set{u,v} \in E(G) \textrm{ and } u \in S \textrm{ or } v \in S } \mper \] 
By construction, we have $w(S) \leq w(S')$,  $N_G(S) = N_{G'}(S')$ and 
\[ w( N_{G'}(\bar{S}') ) \leq \sum_{u \in N_{G'}(S')} \deg(u) \frac{ w(u) }{\deg(u)}  \leq w(N_{G'}(S')) \mper  \]
Therefore, 
\[ \phivs(G') \leq \phivs_{G'}(S') = \frac{ w(N_{G'}(S')) + w(N_{G'}(\bar{S}'))   }{w(S')} 
	\leq \frac{2 w(N_G(S)) }{  w(S) } = 2 \phiv_G(S) = 2 \phiv(G) \mper \]


Now, let $S' \subset V(G')$  be the set having the least value of $\phivs_{G'}(S')$ and let $\e = \phivs_{G'}(S')$.
We construct the set $S$ as follows. We let $S_1 = S' \backslash N_{G'}(\bar{S}')$, i.e. we obtain $S_1$
from $S'$ by deleting it's internal boundary. Next we set $S = S_1 \cap V(G)$.
More formally, we let $S$ be the following set.
\[ S = \set{ v \in S' \cap V(G) | v \notin N_{G'}(\bar{S}')} \mper   \]
By construction, we get that $N_G(S) \subseteq N_{G'}(S') \cup N_{ G'}(\bar{S}')$.
Now, the internal boundary of $S'$ has weight at most $\e w(S')$. 
Therefore, we have 
\[ w(S_1) \geq (1 - \e)w(S') \mper \]
We need a lower bound on the weight of the set $S$ we constructed. To this end, we make the  
following observation. For each
vertex $\set{u,v} \in S_1 \cap E(G)$, $u$ or $v$ also has to be in $S_1$ (If not, then deleting $\set{u,v}$
from $S'$ will result in a decrease in the vertex expansion thereby contradicting the optimality of the choice 
of the set $S'$). 
Therefore, we have the following
\[ \sum_{ \set{u,v} \in S_1 \cap E(G)} w( \set{u,v} ) =  
\sum_{ \set{u,v} \in S_1 \cap E(G)} \min \set{ \frac{w(u)}{\deg(u)}, \frac{w(u)}{\deg(u)}   } \leq
\sum_{u \in S_1 \cap V(G)}  w(u) = w(S) \mper \]
Therefore,
\[ w(S) \geq \frac{w(S_1)}{2} \geq (1 - \e) \frac{w(S')}{2}   \]
Therefore, we have 
\[   \phiv(G) \leq \phiv_G(S) = \frac{w(N_G(S))}{w(S)} \leq \frac{ w(N_{G'}(S') \cup N_{ G'}(\bar{S}')  }{(1 - \e) w(S')/2} 
	= 4 \phivs_{G'}(S') = 4 \phivs(G') \mper  \]

Putting these two together, we have 
\[  \frac{\phiv(G)}{2} \leq \phivs(G') \leq 4 \phiv(G)   \mper \]

\end{proof}

\section{Omitted Proofs}
\label{app:omitproofs}

\begin{lemma}[\cite{bht00}]
\label{lem:bht1}

Let $z_1, \ldots, z_n \in R$ and let $x_i \defeq z_i^+$. Then 
\[ \frac{ \sum_i \max_{j \sim i} \Abs{x_i^2 - x_j^2} }{ \sum_i x_i^2 - \frac{1}{n} \left( \sum_i x_i \right)^2}
  \leq 6 \sqrt{ \frac{ \sum_i \max_{j \sim i} (z_i - z_j)^2 }{ \sum_i z_i^2 - \frac{1}{n} \left( \sum_i z_i \right)^2} }. \] 
\end{lemma}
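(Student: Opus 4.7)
The strategy is a Cheeger-style Cauchy--Schwarz argument. It rests on two basic facts: the algebraic identity $|x_i^2 - x_j^2| = (x_i + x_j)\,|x_i - x_j|$, valid because $x_i, x_j \geq 0$; and the $1$-Lipschitz property of the positive-part map, which gives $|x_i - x_j| \leq |z_i - z_j|$. Write
\[
A = \sum_i \max_{j \sim i}|x_i^2 - x_j^2|, \;\; B = \sum_i x_i^2 - \tfrac{1}{n}\bigl(\textstyle\sum_i x_i\bigr)^2, \;\; C = \sum_i \max_{j \sim i}(z_i - z_j)^2, \;\; D = \sum_i z_i^2 - \tfrac{1}{n}\bigl(\textstyle\sum_i z_i\bigr)^2,
\]
and for each $i$ let $j(i)\in N(i)$ attain the max defining $A$. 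Cauchy--Schwarz applied to $A = \sum_i (x_i + x_{j(i)})\,|x_i - x_{j(i)}|$ gives
\[
A \;\le\; \Bigl(\textstyle\sum_i (x_i + x_{j(i)})^2\Bigr)^{1/2}\Bigl(\textstyle\sum_i (x_i - x_{j(i)})^2\Bigr)^{1/2}.
\]
The second factor is at most $\sqrt{C}$ by the Lipschitz bound on $z\mapsto z^+$.

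To control the first factor I would write $x_{j(i)} = x_i + (x_{j(i)} - x_i)$ and apply $(u+v)^2 \leq 2u^2 + 2v^2$ to $(x_i+x_{j(i)}) = 2x_i + (x_{j(i)}-x_i)$, which gives the pointwise estimate $(x_i+x_{j(i)})^2 \leq 8x_i^2 + 2(x_i-x_{j(i)})^2$. Summing and again using $\sum_i (x_i - x_{j(i)})^2 \le C$ yields $A \leq \sqrt{8\sum_i x_i^2 + 2C}\cdot \sqrt{C}$.

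The last ingredient relates $B$ and $\sum_i x_i^2$ to $D$. The lemma is invoked inside the proof of \prettyref{thm:algo} under the standing WLOG condition $|\supp(x^+)| < |\supp(x^-)|$; in the lemma's own notation this means $|\supp(x)| \leq n/2$. Under this implicit hypothesis: (a) Cauchy--Schwarz restricted to $\supp(x)$ gives $(\sum_i x_i)^2 \leq |\supp(x)|\cdot \sum_i x_i^2 \leq (n/2)\sum_i x_i^2$, so $B \geq \tfrac{1}{2}\sum_i x_i^2$; and (b) writing $D = \tfrac{1}{2n}\sum_{i,j}(z_i - z_j)^2$ and keeping only pairs with $z_i>0$, $z_j\le 0$, one has $(z_i-z_j)^2 \ge z_i^2$ and at least $|\supp(x)|\cdot (n/2)$ such pairs, so $D \ge \tfrac{1}{4}\sum_i x_i^2$.

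Combining the bound on $A$ with (a) and (b) gives $A/B = O(\sqrt{C/D})$, completing the argument. The main obstacle is pinning down the explicit constant $6$: the naive assembly sketched above leaves a somewhat looser constant, so to match \cite{bht00} I would refine the Cauchy--Schwarz step, for example by splitting the sum over $i$ according to whether $x_{j(i)} \leq 2 x_i$ or $x_{j(i)} > 2 x_i$ (in the first case $(x_i+x_{j(i)})^2 \leq 9x_i^2$, in the second $(x_{j(i)}-x_i)^2 \geq x_i^2$ already dominates), or using a weighted Cauchy--Schwarz. The overall structure of the proof, however, is as outlined above.
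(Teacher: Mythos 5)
Your Cauchy--Schwarz step and the two pointwise bounds ($|x_i-x_j|\le|z_i-z_j|$ and $(x_i+x_{j(i)})^2\le 8x_i^2+2(x_i-x_{j(i)})^2$) are fine, but the final sentence ``combining the bound on $A$ with (a) and (b) gives $A/B=O(\sqrt{C/D})$'' hides a genuine gap. What your chain actually yields is $A/B\lesssim \sqrt{C/\sum_i x_i^2}$ (when $C\lesssim\sum_i x_i^2$), and both of your facts (a) and (b) are \emph{lower} bounds on $B$ and $D$ in terms of $\sum_i x_i^2$; to convert $\sqrt{C/\sum_i x_i^2}$ into $O(\sqrt{C/D})$ you would need the reverse bound $D=O(\sum_i x_i^2)$, which is false in general, since the negative part of $z$ contributes to $D$ but is invisible to $\sum_i x_i^2$ and can carry almost all of the variance while contributing little to $C$. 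Concretely, on a path take $z_1=\epsilon=M/\sqrt{n}$ and $z_i=-(i-2)M/n$ for $i\ge 2$: then $A/B\ge 2$ while $C/D=O(1/n^2)$, so no assembly of your inequalities can give $O(\sqrt{C/D})$ -- in fact the stated inequality itself fails here even though $|\supp(x)|\le n/2$, which shows the missing ingredient is a further normalization of $z$, not a sharper constant; the constant-$6$ refinements you discuss at the end are moot until this is addressed.

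The ingredient you are missing is exactly where the paper's proof does its work. It never compares $C$ to $\sum_i x_i^2$ alone: after sorting $z$ and normalizing so that $z^+$ and $z^-$ have (essentially) balanced supports, it uses the pointwise inequality $\max_{j\sim i,\,j<i}(z_j^+-z_i^+)^2+\max_{j\sim i,\,j>i}(z_j^--z_i^-)^2\le 2\max_{j\sim i}(z_j-z_i)^2$ together with a min-of-two-ratios argument (and the freedom to take the positive side to be the one attaining the minimum) to conclude $\frac{\sum_i\max_{j\sim i}(x_j-x_i)^2}{\sum_i x_i^2}\le 2\,\frac{\sum_i\max_{j\sim i}(z_j-z_i)^2}{\sum_i z_i^2}$; since $\sum_i z_i^2\ge D$, the denominator $D$ then enters with the correct direction. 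Only after this reduction does it run the step you ran, namely $x_j^2-x_i^2=(x_j-x_i)^2+2x_i(x_j-x_i)$ followed by Cauchy--Schwarz, so your ``Cheeger-style'' half coincides with the paper's; what your proposal lacks is the positive/negative splitting (with the balanced-support normalization) that ties the left-hand denominator $\sum_i x_i^2$ to the right-hand denominator $D$ through $\sum_i z_i^2$.
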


\begin{proof}

W.l.o.g we may assume that $\Abs{\supp(Z^+)} = \Abs{\supp(Z^-)} = \ceil{n/2} $ and that $z_1 \geq z_2  \geq \ldots \geq z_n $.

Note that for any $i \in [n]$, we have  $ \max_{ j \sim i, \& j < i} (z_j^+ - z_i^+)^2 +  
\max_{ j \sim i, \& j > i} (z_j^- - z_i^-)^2 \leq 2 \max_{j \sim i} (z_j - z_i)^2  $.
Now,

\begin{eqnarray*}
 \frac{ \sum_i \max_{j \sim i} (z_j - z_i)^2 }{  \sum_i z_i^2} 
 & \geq &  \frac{ \sum_{ i } \max_{j < i \& j \sim i} (z_j^+ - z_i^+)^2 + \sum_{ i } \max_{j > i \& j \sim i} (z_j^- - z_i^-)^2 }
		{2\left( \sum_{ i \in \supp(Z^+)} z_i^2  +  \sum_{ i \in \supp(Z^-)} z_i^2 \right) } \\
 & \geq  &  \min \left\{ \frac{  \sum_{ i} \max_{j <i \& j \sim i} (z_j^+ - z_i^+)^2  }{ 2  \sum_{ i \in \supp(Z^+)} z_i^2 }, 
	\frac{  \sum_{ i} \max_{j > i \& j \sim i} (z_j^- - z_i^-)^2  }{ 2 \sum_{ i \in \supp(Z^-)} z_i^2 }  \right\}
\end{eqnarray*}

W.l.o.g we may assume that \[
\frac{  \sum_{ i } \max_{j < i \& j \sim i} (z_j^+ - z_i^+)^2 }{ \sum_{ i \in \supp(Z^+)} z_i^2 } \leq 
\frac{  \sum_{ i } \max_{j > i \& j \sim i} (z_j^- - z_i^-)^2 }{ \sum_{ i \in \supp(Z^-)} z_i^2 }
\]

\[ \frac{  \sum_{ i} \max_{j \sim i} (x_j - x_i)^2 }{ \sum_i x_i^2 } \leq 2  \frac{ \sum_i \max_{j \sim i} (z_j - z_i)^2 }{  \sum_i z_i^2  } \]

We have

\begin{eqnarray*}
 \max_{j \sim i, j < i}  (x_j^2  - x_i^2 )  &=& 
		\max_{j \sim i, j < i}  (x_j - x_i )(x_j + x_i ) \\
 & \leq & \max_{j \sim i, j < i}  \left(  (x_j - x_i)^2 +  2 x_i  (x_j - x_i) \right) \\
 & \leq & \max_{j \sim i, j < i} (x_j - x_i)^2  + 2 x_i \max_{j \sim i, j < i}  (x_j - x_i) \\ 
& \leq &  \sum_i \max_{j \sim i, j < i} (x_j - x_i)^2 + 2 \sqrt{ \sum_i x_i^2  }  \sqrt{\max_{j \sim i, j < i}  (x_j - x_i)^2 } 
	\qquad \textrm{ Cauchy-Schwarz} \\
 & = & \linf \sum_i x_i^2 + 2 \sqrt{\linf} \sum_i x_i^2 \\ 
\end{eqnarray*}

Thus we have 
\[  \frac{\sum_i  \max_{j \sim i, j < i}  (x_j^2 - x_i^2) }{ \sum_i x_i^2}  
\leq 6 \sqrt{ \frac{ \sum_i \max_{j \sim i} (z_j - z_i)^2 }{  \sum_i z_i^2  } } \]
\end{proof}

\begin{lemma}
\label{lem:bal-vert-to-vert}
A \cvss hardness for $b$-Balanced-vertex expansion implies a 
\cvs{2}{2} hardness for vertex expansion.
\end{lemma}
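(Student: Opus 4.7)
The plan is to establish the implication by constructing a reduction that transforms any instance $G$ of $b$-balanced vertex expansion into an instance $G'$ of vertex expansion, such that a balanced YES-set in $G$ lifts to a small-expansion set in $G'$ (completeness), while any small-expansion set in $G'$ can be mapped back to a balanced witness for $G$ (soundness).

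First, I will handle completeness: observe that $\phiv(G) \leq \phivb_b(G)$ trivially, since the minimization for $\phiv$ is over a larger family of sets. Thus, taking $G'=G$ already yields $\phiv(G') \leq c \leq 2c$, so completeness is immediate under the identity reduction.

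The bulk of the work is soundness. Under the identity reduction, if $\phivb_b(G) \geq s$, one would like to conclude $\phiv(G) \geq s/2$. Let $T$ be a set achieving $\phiv(G)$ with $\Abs{T} \leq n/2$. If $T$ is $b$-balanced then $\phiv(T) \geq \phivb_b(G) \geq s \geq s/2$ and we are done. The obstacle is the case when $T$ is small ($\Abs{T} < bn$), where a priori $\phiv(T) = n\Abs{N(T)}/(\Abs{T}(n-\Abs{T})) \approx \Abs{N(T)}/\Abs{T}$ could be arbitrarily small without violating the balanced soundness. To rule this out, I will augment $G$ to $G' = G \cup H$ with $H$ a sparse auxiliary graph on $V$ chosen so that: (i) $\phiv_H(T) \geq s/2$ for every $T$ with $\Abs{T} \leq bn$, forcing small sets to have large vertex expansion in $G'$; and (ii) $\phiv_H(S) \leq c$ for every balanced $S$, so that the YES witness is preserved via the additivity estimate $\phiv_{G'}(S) \leq \phiv_G(S) + \phiv_H(S) \leq 2c$. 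Then soundness splits by cases: balanced $T$ gives $\phiv_{G'}(T) \geq \phiv_G(T) \geq s$, while small $T$ gives $\phiv_{G'}(T) \geq \phiv_H(T) \geq s/2$.

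The main obstacle is exhibiting an $H$ satisfying (i) and (ii) simultaneously, since any standard vertex expander contributes a constant to the vertex expansion of balanced sets, violating (ii) when $c$ is small. The plan for this construction is to use either a low-degree bipartite graph between $V(G)$ and an auxiliary padding set, or an edge-weighted small-set vertex expander whose weights are scaled down by a factor of $O(c)$; careful tuning of degree and weighting of $H$ should secure both (i) and (ii). An alternative route, should the tuning prove too delicate, is to exploit the fact that the graphs $G$ arising from \prettyref{prop:subsample} are $\Theta(D)$-regular with high minimum degree, which already forces every small set $T$ to have $\Abs{N(T)} \gtrsim D - \Abs{T}$ and hence $\phiv(T) \gtrsim D/\Abs{T}$, so that the identity reduction suffices for the specific instances produced by the hardness chain, at the mild cost of $D \geq 2/s$.
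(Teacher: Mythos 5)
Your soundness plan has a genuine gap, and it sits exactly where you located the ``main obstacle'': the auxiliary graph $H$ with properties (i) and (ii) cannot exist in the regime where the lemma is actually needed, namely $c \ll s$. Vertex expansion cannot jump across the balance threshold, because the boundary $N_H(\cdot)$ is a purely combinatorial object: take $T$ just below the threshold and $S = T \cup \set{v}$ just above it. Then $N_H(S) \supseteq N_H(T) \setminus \set{v}$, so $\Abs{N_H(S)} \geq \Abs{N_H(T)} - 1$, while $\Abs{S}\Abs{V\setminus S}$ differs from $\Abs{T}\Abs{V\setminus T}$ only by an additive $O(n)$ term. Hence property (i), which forces $\Abs{N_H(T)} \geq (s/2)\Abs{T}\Abs{V\setminus T}/n = \Omega(sbn)$, already forces $\phiv_H(S) \geq s/2 - o(1)$, contradicting property (ii) whenever $s/2 > c$. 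No tuning of the degree of $H$ can evade this, and scaling edge weights does nothing, since $N_H(S)$ does not see edge weights. Your fallback via the $\Theta(D)$-regularity of the graphs from \prettyref{prop:subsample} also does not close the gap: minimum degree only yields $\Abs{N(T)} \geq \Omega(D) - \Abs{T}$, which is vacuous once $\Abs{T} \geq \Omega(D)$, and for sets of size between $\Theta(D)$ and the balance threshold $\Theta(bn)$ the NO-guarantee inherited from the earlier steps says nothing at all (it concerns only balanced sets). So under the identity reduction a NO-instance could in principle still contain an unbalanced set of tiny vertex expansion, and nothing in your argument rules this out.

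The reason the statement is nevertheless true is that one should not attempt a single graph-modification (Karp-style) reduction here; the paper instead gives an iterative, oracle-style argument. Assuming an algorithm that outputs a set of vertex expansion at most $s$ whenever some set of expansion at most $2c$ exists, one runs it repeatedly, deleting the returned set from the graph each time, until the deleted vertices first exceed a $b/2$ fraction. If that happens, the union of the deleted sets is a $b/2$-balanced set whose vertex expansion is still on the order of $s$ (neighbors of the union lie in the residual neighborhoods of the individual pieces), contradicting the NO-case guarantee for balanced vertex expansion; if instead the algorithm fails while at least a $(1-b/2)$ fraction of vertices remains, then any $b$-balanced set of expansion at most $c$ in $G$ would induce a set of expansion at most $2c$ in the residual graph, contradicting the algorithm's failure. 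If you want to repair your write-up, you need an argument of this iterative flavor rather than an auxiliary gadget: when $c \ll s$, no fixed overlay can simultaneously make all unbalanced sets expand and leave some balanced cut cheap.
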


\begin{proof}
Fix a graph $G = (V,E)$.

\paragraph{Completeness}
If $G$ has Balanced-vertex expansion at most $c$, then clearly its vertex expansion is also at most $c$.

\paragraph{Soundness}
Suppose we have a polynomial time algorithm that outputs a set $S$ having $\phiv(S) \leq s$
whenever $G$ has a set $S'$ having $\phiv(S') \leq 2c$. Then this 
algorithm  can be used as an oracle to find a balanced set of vertex expansion at most $s$. This would
contradict the hardness of Balanced-vertex expansion.

First we find a set, say $T$, having $\phiv(T) \leq s$. 
If we are unable to find such a $T$, we stop. If we find such a set $T$ and $T$ has balance at least $b$, then
we stop. Else, we delete the vertices in $T$ from $G$ and repeat.
We continue until the number of deleted vertices first exceeds a $b/2$ fraction of the vertices.

If the process deletes less than $b/2$ fraction of the vertices, then the remaining graph (which has
at least $(1 - b/2)n$ vertices) has conductance $2c$, and thus in the original graph every $b$-balanced cut has
conductance at least $c$. This is a contradiction !

If the process deletes between $b/2$ and $1/2$ of the nodes, then the union of the deleted sets gives
a set $T'$ with $\phiv(T') \leq s$ and balance of $T'$ at least $b/2$. 
%
\end{proof}

\begin{fact}
\label{fact:appGauss}
Let $Y_1, Y_2, \ldots, Y_d$ be $d$ standard normal random variables. Let $Y$ be the random
variable defined as $Y \defeq \max \set{Y_i | i\in [d]}$. Then
\[ \Ex{Y^2} \leq 4 \log d \qquad \textrm{ and } \qquad \Ex{Y} \leq 2 \sqrt{\log d} \mper \]
\end{fact}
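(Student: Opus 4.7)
The plan is to prove both inequalities via the standard exponential-moment and union-bound arguments, using only the Gaussian moment generating function of the individual $Y_i$ (so no joint independence of the $Y_i$'s need be invoked anywhere).

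For the bound on $\Ex{Y}$, I will use the Cram\'er--Chernoff approach. For any $t > 0$, since $e^{tx}$ is increasing in $x$,
\[ e^{tY} \;=\; \max_i e^{tY_i} \;\leq\; \sum_{i=1}^d e^{tY_i}. \]
Taking expectations and applying Jensen's inequality on the left (via $e^{t\Ex{Y}} \leq \Ex{e^{tY}}$) gives
\[ e^{t\Ex{Y}} \;\leq\; \sum_{i=1}^d \Ex{e^{tY_i}} \;=\; d\, e^{t^2/2}, \]
using the standard MGF of an $\cN(0,1)$ variable. Taking logarithms yields $\Ex{Y} \leq (\log d)/t + t/2$, and optimizing at $t=\sqrt{2\log d}$ gives $\Ex{Y} \leq \sqrt{2\log d} \leq 2\sqrt{\log d}$.

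For the bound on $\Ex{Y^2}$, I will integrate the tail. By the one-sided Gaussian bound $\Pr{Y_i \geq s} \leq e^{-s^2/2}$ and a union bound, $\Pr{Y \geq s} \leq d\, e^{-s^2/2}$ for any $s \geq 0$. Since the event $Y \leq -s$ forces all $Y_i \leq -s$, we also have $\Pr{Y \leq -s} \leq \Pr{Y_1 \leq -s} \leq e^{-s^2/2}$, so $\Pr{\Abs{Y} \geq s} \leq (d+1)e^{-s^2/2}$. Hence for any $s_0 \geq 0$,
\[ \Ex{Y^2} \;=\; \int_0^\infty 2s\,\Pr{\Abs{Y} \geq s}\,ds \;\leq\; s_0^2 \,+\, 2(d+1)\!\int_{s_0}^\infty s\, e^{-s^2/2}\,ds \;=\; s_0^2 \,+\, 2(d+1)\,e^{-s_0^2/2}. \]
Choosing $s_0 = \sqrt{2\log(d+1)}$ yields $\Ex{Y^2} \leq 2\log(d+1) + 2 \leq 4\log d$ for $d$ sufficiently large (which is the regime in which the fact is invoked in the algorithm).

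There is no real obstacle: both bounds are textbook Gaussian concentration estimates. The only mild point to verify is that neither step silently uses independence of the $Y_i$'s---the union bound needs none, and the inequality $\max_i e^{tY_i} \leq \sum_i e^{tY_i}$ uses only the non-negativity of each summand. Thus the whole argument reduces to two short applications of the Chernoff method and its tail-integral corollary.
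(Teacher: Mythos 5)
Your proof is correct, but it takes a genuinely different route from the paper's. The paper bounds the maximum through an $\ell_{2p}$-norm: it uses $Y^2=(\max_i Y_i)^2\le \max_i Y_i^2\le\bigl(\sum_i Y_i^{2p}\bigr)^{1/p}$, Jensen's inequality, and the Gaussian moment formula $\E[Y_i^{2p}]=(2p)!/(p!\,2^p)\le(2p)^p$ to get $\E[Y^2]\le 2p\,d^{1/p}$, then sets $p\approx\log d$ to obtain $\E[Y^2]\le 2e\log d$ and deduces $\E[Y]\le\sqrt{\E[Y^2]}$; note this actually yields constants $2e$ and $\sqrt{2e}$, slightly weaker than the $4$ and $2$ stated in the fact. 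You instead bound $\E[Y]$ directly by the exponential-moment (Chernoff/MGF) argument, getting $\sqrt{2\log d}$, and bound $\E[Y^2]$ by a union bound plus tail integration, getting $2\log(d+1)+2$; both are sharper than the paper's bounds for large $d$, and both, like the paper's argument, avoid any independence assumption on the $Y_i$ --- which is essential, since in the application (\prettyref{lem:projlinf}) the variables $\langle v_i-v_j,g\rangle$ are correlated. Your caveat that $2\log(d+1)+2\le 4\log d$ only for $d\ge 4$ is harmless: the fact as stated is anyway false for very small $d$ (at $d=1$ both right-hand sides are $0$ while $\E[Y^2]=1$), the paper's own proof also misses the stated constants, and in \prettyref{thm:algo} the constants are absorbed into the $O(\cdot)$. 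A minor additional gain of your route is that the paper's step ``pick $p=\log d$'' implicitly requires $p$ to be a positive integer (so really $p=\lceil\log d\rceil$), a rounding issue your argument avoids entirely.
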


\begin{proof}

For any $Z_1, \ldots, Z_d \in \R$ and any $p \in \Z^+$, we have $\max_i \Abs{Z_i} \leq (\sum_i Z_i^p)^{\frac{1}{p}}$.
Now $Y^2 = (\max_i X_i)^2 \leq \max_i X_i^2$.

\begin{eqnarray*}
\Ex{Y^2} & \leq & \Ex{ \left( \sum_i X_i^{2p} \right)^{\frac{1}{p}} }
  \leq  \left(\Ex{ \sum_i X_i^{2p} } \right)^{\frac{1}{p}} \quad \textrm{ ( Jensen's Inequality )} \\
 & \leq & \left( \sum_i \left( \Ex{X_i^2} \right) \frac{(2p)! }{(p)! 2^{p} } \right)^{\frac{1}{p}}
  \leq  2 p d^{\frac{1}{p}} \quad \textrm{(using $(2p)!/p! \leq (2p)^{p} $ )} \\
\end{eqnarray*}

Picking $p = \log d$ gives $\Ex{Y^2} \leq 2e \log d$.

Therefore $\Ex{Y} \leq \sqrt{\Ex{Y^2}} \leq \sqrt{2 e \log d} $.

\end{proof}

\section{Noise Operators}

Let $H$ be a Markov chain and let $F : V(H^k) \to \set{0,1} $ be any boolean function.
In this section we prove some basic properties of $\Gamma_{1 - \eta}F$.
We restate the definition of our Noise Operator $\Gamma_{1 - \eta}$.
\[ \Gamma_{1 - \eta}F(X) = (1 - \eta)F(X) + \eta \E_{ Y \sim X} F(Y)  \]

The Fourier expansion of the function $F$ is $ F = \sum_\sigma \hat{f}_{\sigma} e_{\sigma}$
where $\set{e_{\sigma}}$ is the set of eigenvectors of $H^k$. It is easy to see that 
$e_{\sigma} = e_{\sigma_1} \otimes \ldots \otimes e_{\sigma_k}$, where the $\set{e_{\sigma_i}}$
are the eigenvectors of $H$.

\begin{lemma}
\label{lem:lowdeg}
(Decay of High degree Coefficients)
Let $Q_j$ be the multi-linear polynomial representation of $ \Abs{ \Gamma_{1 - \eta}F(X) - \Gamma_{1 - \eta}F(Y_j)}$.
Then, 
\[ \var(Q_j^{>p}) \leq (1 - \e \eta)^{2p} \]

\end{lemma}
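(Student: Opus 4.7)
The plan is to expand $F$ in the product eigenbasis of $H^R$ and show that the coordinate-wise noise operator $\Gamma_{1-\eta}$ dampens the Fourier coefficient at level $k$ by a factor of at most $(1-\e\eta)^{k}$; projecting the resulting representation to Fourier degrees strictly greater than $p$ yields the claimed $(1-\e\eta)^{2p}$ variance bound. The absolute value in the statement is not a polynomial operation on $\Gamma_{1-\eta}F$, so I read ``multi-linear polynomial representation of $|\Gamma_{1-\eta}F(X)-\Gamma_{1-\eta}F(Y_j)|$'' as the multi-linear polynomial (in the ensemble $E_1,\ldots,E_R$ of eigenvector evaluations introduced in the soundness section) computing the \emph{signed} difference $\Gamma_{1-\eta}F(X)-\Gamma_{1-\eta}F(Y_j)$, since the variance of any level-$>p$ projection is unaffected by an overall sign.

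First I would set up the Fourier expansion. Let $\{e_\sigma\}_{\sigma\in[n]^R}$, with $e_\sigma=\prod_i e_{\sigma_i}$ and eigenvalue $\lambda_\sigma=\prod_i\lambda_{\sigma_i}$, be the product orthonormal eigenbasis of $H^R$, and write $F=\sum_\sigma \hat{F}_\sigma\, e_\sigma$. Since $\Gamma_{1-\eta}$ acts coordinate-wise, each factor $e_{\sigma_i}$ is multiplied by $\mu_{\sigma_i}:=(1-\eta)+\eta\lambda_{\sigma_i}$, and therefore
\[
\Gamma_{1-\eta}F \;=\; \sum_\sigma \Bigl(\prod_i \mu_{\sigma_i}\Bigr)\hat{F}_\sigma\, e_\sigma .
\]
Using $\lambda_0=1$ one has $\mu_0=1$, so trivial coordinates contribute no damping; using the spectral-gap bound $|\lambda_{\sigma_i}|\le 1-\e$ for $\sigma_i\neq 0$ one has $|\mu_{\sigma_i}|\le 1-\e\eta$. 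Multiplying over coordinates,
\[
\Bigl|\prod_i \mu_{\sigma_i}\Bigr| \;\le\; (1-\e\eta)^{|\sigma|} .
\]

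Next I would translate the functional representation into the multi-linear polynomial $Q_j$ in the joint ensemble of eigenvector evaluations at $(X,Y_j)$, and compute the level-$>p$ variance. The signed difference has the expansion
\[
\Gamma_{1-\eta}F(X)-\Gamma_{1-\eta}F(Y_j) \;=\; \sum_\sigma \Bigl(\prod_i\mu_{\sigma_i}\Bigr)\hat{F}_\sigma\bigl(e_\sigma(X)-e_\sigma(Y_j)\bigr),
\]
so in the joint ensemble each monomial of total Fourier-degree $|\sigma|$ carries coefficient bounded in absolute value by $(1-\e\eta)^{|\sigma|}|\hat{F}_\sigma|$. By the orthogonality of the $e_\sigma$ on $H^R$ and the identification of the ensemble-degree grading with the Fourier-degree grading $|\sigma|$, the variance of $Q_j^{>p}$ is bounded, level by level, by
\[
\var(Q_j^{>p}) \;\le\; (1-\e\eta)^{2p}\sum_{|\sigma|>p}\hat{F}_\sigma^2 \;\le\; (1-\e\eta)^{2p}\,\var[F] \;\le\; (1-\e\eta)^{2p},
\]
the last step using that $F$ is $\{0,1\}$-valued so $\var[F]\le 1$.

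The main obstacle will be being careful with the bookkeeping between the Fourier grading on $H^R$ and the ensemble-degree grading used by the invariance principle, and with the handling of the absolute value in the statement. Once the ensemble-degree of a monomial is identified with the Fourier degree $|\sigma|$, the coordinate-wise spectral-gap calculation above gives the desired decay immediately; the absolute value in the statement is inessential for the variance bound.
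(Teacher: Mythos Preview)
Your proposal is correct and follows essentially the same route as the paper: expand $\Gamma_{1-\eta}F$ in the product eigenbasis, observe that each nontrivial factor gets damped by $(1-\eta)+\eta\lambda_{\sigma_i}\le 1-\e\eta$, and sum squares over $|\sigma|>p$ using $\var[F]\le 1$. Your reading of the absolute value as the signed difference (equivalently, as $\Gamma_{1-\eta}F$ itself, matching the main-text formulation of the lemma) is the right one; the paper's own derivation proceeds the same way and likewise absorbs a harmless constant factor.
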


\begin{proof}

\begin{eqnarray*}
\Gamma_{1 - \eta} F(X) & = &  (1 - \eta)F(X) + \eta \E_{ Y \sim X} F(Y)  \\
 & = & \sum_{\sigma} \hat{f}_{\sigma} \Ex{e_{\sigma}(X) + \E_{Y \sim X}F(Y) } \\
 & = & \sum_{\sigma} \hat{f}_{\sigma} \Pi_{i \in \sigma} \left( (1 - \eta)e_{\sigma_i}(X_i) + \E_{Y_i \sim X_i} e_{\sigma_i}(Y_i) \right) \\
\end{eqnarray*}

We bound the second moment of $\Gamma_{1 - \eta}F$ as follows

\begin{eqnarray*}
\E_{X} \left( \Gamma_{1 - \eta} F(X) \right)^2  
  & = & \sum_{\sigma} \hat{f}_{\sigma}^2 \E_{X} \Pi_{i \in \sigma} \left( (1 - \eta)e_{\sigma_i}(X_i) + \eta \E_{Y_i \sim X_i} e_{\sigma_i}(Y_i) \right)^2 \\
  & = & \sum_{\sigma} \hat{f}_{\sigma}^2  \Pi_{i \in \sigma} \left( (1 - \eta)^2 \E_{X_i} e_{\sigma_i}(X_i)^2 +
	 \eta^2 \E_{X_i} \left( \E_{Y_i \sim X_i} e_{\sigma_i}(Y_i) \right)^2
		+ 2 \eta (1 - \eta) \E_{X_i} \E_{Y_i \sim X_i} e_{\sigma_i}(X_i) e_{\sigma_i}(Y_i)  \right)^2 \\	
  & = & \sum_{\sigma} \hat{f}_{\sigma}^2  \Pi_{i \in \sigma} \left( (1 - \eta)^2 + \eta^2 \lambda_i^2 + 2 \eta (1 - \eta) \lambda_i \right) \\
  & = & \sum_{\sigma} \hat{f}_{\sigma}^2  \Pi_{i \in \sigma} \left( 1 - \eta + \eta \lambda_i \right)^2 \\
\end{eqnarray*}

Therefore,

\begin{eqnarray*}
\var(Q_j^{>p}) & \leq & 4 \sum_{\sigma : \Abs{\sigma} > p} \hat{f}_{\sigma}^2  \Pi_{i \in \sigma} \left( 1 - \eta + \eta \lambda_i \right)^2 \\
 & \leq & \sum_{\sigma : \Abs{\sigma} > p} \hat{f}_{\sigma}^2  \left( 1 -  \e \eta \right)^{2 \Abs{\sigma}  } \\ 
 & \leq & ( 1 - \e \eta  )^{2 p} \\
\end{eqnarray*}

Here the second inequality follows from the fact that all non-trivial eigenvalues of $H$ are at most $1 - \e$
and the third inequality follows Parseval's indentity. 
\end{proof}

\end{document}